\newcolumntype{Y}{>{\centering\arraybackslash}X}
\newcolumntype{L}{>{\centering\arraybackslash}m{3cm}}
\definecolor{light-gray}{gray}{0.9}
\theoremstyle{definition}
\newtheorem{thm}{Theorem}[section]
\newtheorem{prop}[thm]{Proposition}
\newtheorem{lemma}[thm]{Lemma}
\newtheorem{cor}[thm]{Corollary}
\newtheorem{defn}[thm]{Definition}
\newtheorem{defnprop}[thm]{Definition/Proposition}
\newtheorem{note}[thm]{Note}
\newtheorem{rmk}[thm]{Remark}
\newtheorem{ex}[thm]{Example}
\newcommand{\bb}[1]{\mathbb{#1}}
\newcommand{\mr}[1]{\mathrm{#1}}
\newcommand{\mc}[1]{\mathcal{#1}}
\newcommand{\mb}[1]{\mathbf{#1}}
\newcommand{\mf}[1]{\mathfrak{#1}}
\newcommand{\del}{\partial}
\newcommand{\ol}{\overline}
\newcommand{\ul}{\underline}
\newcommand{\wt}{\widetilde}
\newcommand{\eps}{\varepsilon}
\newcommand{\U}{\operatorname{U}}
\newcommand{\llb}{\llbracket}
\newcommand{\rrb}{\rrbracket}
\newcommand{\llp}{(\!(}
\newcommand{\rrp}{)\!)}
\DeclareMathOperator{\delbar}{\bar{\del}}
\DeclareMathOperator{\D}{D}
\DeclareMathOperator{\QC}{QCoh}
\DeclareMathOperator{\coh}{Coh}
\DeclareMathOperator{\perf}{Perf}
\DeclareMathOperator{\bun}{Bun}
\DeclareMathOperator{\higgs}{Higgs}
\DeclareMathOperator{\sym}{Sym}
\DeclareMathOperator{\Hoch}{Hoch}
\DeclareMathOperator{\cyc}{Cyc}
\DeclareMathOperator{\dR}{dR}
\DeclareMathOperator{\ch}{ch}
\DeclareMathOperator{\id}{Id}
\DeclareMathOperator{\tr}{Tr}
\DeclareMathOperator{\Res}{Res}
\DeclareMathOperator{\EOM}{EOM}
\DeclareMathOperator{\umap}{\underline{\mr{Map}}}
\DeclareMathOperator{\Hom}{Hom}
\DeclareMathOperator{\End}{End}
\DeclareMathOperator{\Flat}{Flat}
\DeclareMathOperator{\im}{im}
\DeclareMathOperator{\TN}{TN}
\DeclareMathOperator{\GL}{GL}
\DeclareMathOperator{\SL}{SL}
\DeclareMathOperator{\SU}{SU}
\DeclareMathOperator{\Spin}{Spin}
\DeclareMathOperator{\so}{\mf{so}}
\DeclareMathOperator{\siso}{\mf{siso}}
\DeclareMathOperator{\R}{\bb{R}}
\DeclareMathOperator{\C}{\bb{C}}
\DeclareMathOperator{\Z}{\bb{Z}}
\DeclareMathOperator{\PV}{PV}
\DeclareMathOperator{\HH}{HH}
\DeclareMathOperator{\IIA}{IIA}
\DeclareMathOperator{\IIB}{IIB}
\DeclareMathOperator{\M}{11d}
\DeclareMathOperator{\cc}{cc}
\DeclareMathOperator{\fuk}{Fuk}
\newcommand{\phil}[1]{(\textcolor{blue}{Phil: #1})}
\begin{document}

\title{Twisted S-Duality}

\author{Surya Raghavendran and Philsang Yoo}
\date{}

\maketitle

\begin{abstract}
We study an $\SL(2, \bb Z)$ symmetry of a variant of BCOV theory in three complex dimensions. Using conjectural descriptions of twists of superstrings in terms of topological strings, we argue that this action can be thought of as a version of S-duality that preserves an $\SU(3)$-invariant twist of type IIB supergravity. We analyze how $\SL(2, \bb Z)$ acts on various deformations of the holomorphic-topological twist of 4-dimensional $\mc N=4$ supersymmetric gauge theory, which come from residual supertranslations and superconformal symmetries, and are of relevance to geometric Langlands theory and gauge-theoretic constructions of the Yangian.

\end{abstract}

\tableofcontents

\section{Introduction}

S-duality was originally suggested as a strong-weak duality of 4-dimensional gauge theory \cite{GNO, MontonenOlive}: it says that a gauge theory with a given gauge group is equivalent to another gauge theory with its Langlands dual group as the gauge group and with the coupling constant inverted. Soon it was realized that gauge theory with $\mc N=4$ supersymmetry is a better context to realize the suggestion \cite{Osborn}. Although the original duality was suggested as a $\bb Z/4$-symmetry, it is then natural to extend it to $\SL(2,\bb Z)$-symmetry as both the coupling constant and the theta angle are acted on nontrivially.

In the 1990s, it was discovered that S-duality can be extended to the context of string theory \cite{SchwarzSen}. In particular, in \cite{HullTownsend} a conjecture was made that type IIB string theory has $\SL(2,\bb Z)$-symmetry. After the conception of M-theory \cite{WittenM-theory}, this $\SL(2,\bb Z)$-symmetry was found to admit a manifestation as the symmetries of a torus factor of a M-theory background \cite{Schwarz, Aspinwall}.

From a mathematical perspective, one rather remarkable application of S-duality of 4-dimensional $\mc N=4$ gauge theory is the work of Kapustin and Witten \cite{KW} where they argue that a version of the geometric Langlands correspondence is a special case of S-duality. Given that a special application of S-duality leads to such rich mathematics, it is natural to wonder what mathematical marvels the general phenomenon can detect. On the other hand, the stringy origin of S-duality has made such questions somewhat inaccessible to mathematicians. This leads to a natural question of how to mathematically understand its original context, find new examples of S-dual pairs, and accordingly make new mathematical conjectures.

In this paper, we propose an answer to the above question -- we propose a mathematical definition of S-duality as an action on a particular twist of type IIB supergravity. We remark that twisted supergravity is entirely contained in the massless sector of the physical theory, and that S-duality of massless IIB supergravity was physically understood as early as the 90s. In addition to mathematical rigor, the main merit of the formalism we develop here is that it allows for easier calculation of how S-duality acts on further deformations of twists of supersymmetric gauge theories. This allows for recovery of old conjectures and formulation of new ones from a unified perspective: we demonstrate this point in Section \ref{section:app}.

The idea of twisting in the context of supergravity and string theory was formalized in a paper of Costello and Li \cite{CLSugra}. This is somewhat similar to the idea of twisting supersymmetric field theory \cite{WittenTQFT} in that one extracts a sector that is a holomorphic-topological field theory. It is now well-known that the idea of twisting supersymmetric field theories has proven highly useful for mathematical applications, with mirror symmetry being a prominent example. On this note, the present paper illustrates that twisted supergravity has similarly rich mathematical implications. Specifically, it provides a natural framework for analyzing an often infinite-dimensional space of deformations of twisted supersymmetric field theories, each of which allows for the study of associated invariants.


\subsection{Summary}
Now let us explain the contents of our paper in a bit more detail. In Section \ref{Section: Topological Strings}, we begin by recalling some features of topological string theory, which provides a convenient framework for discussing closed string field theory, open string field theory, and their relations, all in terms of categorical data. The closed string field theories we will consider all admit descriptions in terms of \textit{BCOV theory} -- the closed string field theory associated to the B-model. We recall slight variants of BCOV theory such as \textit{minimal BCOV theory} and a further variant thereof, which on flat space simply enlarges the space of fields by a copy of the constant sheaf. The key result in this section is Theorem \ref{thm:SL2 action} which shows that after this modification, minimal BCOV theory in three complex dimensions has a natural action of $\SL(2, \Z)$.

Next, in Section \ref{section:SUGRA} we explain the conjectural descriptions of twists of IIA and IIB supergravity and superstrings. Our discussion is guided by an attempt to unify the perspectives of topological string theory and the work of Costello and Li \cite{CLSugra} on twisted supergravity. In particular, there is a twist of type IIB defined on manifolds of the form $M^4 \times X^3$ where $M^4$ is a symplectic 4-manifold and $X^3$ is a Calabi--Yau three-fold. The $\SL(2, \Z)$ symmetry of our modification of minimal BCOV theory in three complex dimensions induces an $\SL(2, \Z)$ symmetry of this twist. It is this symmetry that we call \textit{twisted S-duality}.

Next, in Section \ref{section:S-duality} we provide justification that the symmetry of the twist of type IIB we have identified is in fact a manifestation of S-duality. Physically, S-duality of type IIB can be understood via a relationship between type IIB on a circle and M-theory on a two-torus; starting from the latter theory, one can compactify on one circle fiber and T-dualize along the other to obtain type IIB on a circle. This chain of equivalences is summarized the following diagram
\begin{center}
\begin{figure}[h]
\[\xymatrix{
\text{M}[S^1_{\mr{M} }\times S^1_{r}\times  M^9] \ar[r]_-\simeq^-{\mr{red}_{M}} \ar@{->}@(ul,ur)^-{\SL(2,\bb Z)}  & \IIA[S^1_{r}\times M^9] \ar[r]^-{\mb T}_-\simeq  &  \IIB[S^1_{1/r}\times M^9] \ar@{->}@(ul,ur)^-{\SL(2,\bb Z)} 
}\]
\caption{S-duality of type IIB string theory from M-theory}\label{sdiagram}
\end{figure}
\end{center}
The $\SL(2, \bb Z)$ action on the source is induced by a spacetime symmetry whereas the $\SL(2, \bb Z)$ action on the target is the aforementioned S-duality group. We show that the same diagram holds at the level of twists. We construct a map $\Phi_{11d\to \IIB}$ from a twist of 11d supergravity on $T^2 \times \R^5 \times X^2$ to the aforementioned twist of type IIB, placed on $\R^4 \times \C^\times \times X^2$. Theorem \ref{thm:mainequivariance} shows that this map is $\SL(2, \Z)$-equivariant, as expected from physical considerations. 

Along the way, we make an unusual number of remarks as we make conjectures of less precise form; this is due to lack of development of the framework and working out each precisely should be regarded as a research project on its own. Examples include Remark \ref{rmk:G2instantons} on what we call M2 cohomology, a conjectural deformation of de Rham complex of a $G_2$ manifold based on \cite{Joyce}, or Remark \ref{rmk:GW/DT} on a perspective on the Gromov--Witten/Donaldson--Thomas correspondence \cite{MNOP} from our framework. 

Finally, in Section \ref{section:app}, we study possible consequences of our definition of twisted S-duality on D-brane gauge theory by way of the closed-open map from the closed string states to deformations of D-brane gauge theory introduced in Subsection \ref{sub:coupling}. Then deformations of a twisted supersymmetric field theory, viewed as an open string field theory associated to a twist of superstrings, admit their preimages under the closed-open map in the space of closed string fields. By looking at how S-duality acts on those closed string fields, we can check which deformations of D-brane gauge theory are S-dual to one another. 

The first example of interest is what we call $\mr{HT(A)}$-deformation and $\mr{HT(B)}$-deformation of Kapustin's holomorphic-topological twist of 4d $\mc N=4$ which are S-dual to each other. Consider such deformed theory on the spacetime of the form $\Sigma\times C$ where $\Sigma,C$ are both Riemann surfaces. Taking compactification along $C$ yields the B-model with target $\bun_{G}(C)_{\mr{dR}}$ and B-model with target $\Flat_{G}(C)$, respectively; the identification in terms of global derived stacks is based on the work of the second author with Chris Elliott \cite{EY1}. By taking the category of boundary conditions and identifying them via the proposed S-duality, we would obtain the de Rham geometric Langlands correspondence for $G=\GL_n$, modulo the subtlety of the nilpotent singular support condition whose physical meaning is addressed in the sequel \cite{EY2} (see Subsection \ref{sub:S-dualdeRham}).

\begin{rmk}
Note that our proposal for finding a physical context for geometric Langlands correspondence is crucially different from the one by Kapustin and Witten \cite{KW}, which is revisited in a paper \cite{EY3} from a mathematical perspective. The fundamental point of departure is that we consider a torus in the topological direction of the M-theory background and hence the dilaton field is topological. In particular, the gauge theory also only has topological dependence on the coupling constant. For this reason, our analysis doesn't offer any deep insights regarding strong-weak duality. However, for the purpose of getting the de Rham geometric Langlands correspondence, our proposal is a lot more direct and transparent in its nature. 
\end{rmk}

Moreover, our twisted S-duality map can be applied to arbitrary deformations of a gauge theory, producing new infinite family of dual pairs of deformations of a gauge theory. Indeed, some new conjectures are identified: in Subsection \ref{sub:superconformal} on superconformal deformations and Omega backgrounds and Subsection \ref{sub:4dCS} on relating supergroup 4-dimensional Chern--Simons theory and certain category of difference modules. Again, one may find surprising that we can identify nontrivial conjectures in this manner, but it is possible because we work in a sector where the dependence on a coupling constant is topological in nature.

Finally, proposed relations between S-duality of type IIB string theory and S-duality for topological string theory were made long before the present paper. Important works in this direction include the paper by Nekrasov, Ooguri, and Vafa \cite{NOV} and the paper on topological M-theory by Dijkgraaf, Gukov, Neitzke, and Vafa \cite{DGNV}. An implication of such proposals is that the Gromov-Witten/Donaldson-Thomas correspondence \cite{MNOP} can be viewed as a kind of S-duality between the A-model and B-model topological strings in complex dimension three. In Remark \ref{rmk:GW/DT}, we suggest a way in which the correspondence of \cite{MNOP} can be realized in our framework, as a kind of S-duality relating two different twists of type IIB. 

\subsection{Conventions}

When we describe a field theory, we work with the BV formalism in a perturbative setting. We work with $\bb Z/2$-grading but may write $\bb Z$-grading in a way that it gives an expected cohomological degree when it can. By tensor product of the space of fields, we mean the one of nuclear spaces as, for example, explained in \cite[Appendix 2]{CostelloBook}. The crucial property we use is that for manifolds $M,N$, one has $C^\infty(M)\otimes C^\infty(N)  =C^\infty(M\times N)$ as well as a similar statement for sections of vector bundles on $M$ and $N$. Algebras are complexified unless otherwise mentioned; in particular, we see $\GL(N)$ when physicists expect to see $\U(N)$. \bigskip

\subsection*{Acknowledgements}

We thank Kevin Costello, Davide Gaiotto, and Si Li for their guidance on us learning aspects of string theory. Learning and even writing a paper about a part of string theory as mathematicians wasn't easy, so their help was invaluable. Especially this project wouldn't even have taken off without lectures on a related topic by Costello and following extensive discussions with him. This work was first announced at Aspen Center for Physics, which is supported by National Science Foundation grant PHY-1607611. We thank Dylan Butson, Jacques Distler, Nick Rozenblyum, Brian Williams, and Edward Witten for critical questions and useful discussions. This work was supported by the National Research Foundation of Korea (NRF) grant funded by the Korea government(MSIT) (No. 2022R1F1A107114212) and by the LAMP Program of the National Research Foundation of Korea (NRF) grant funded by the Ministry of Education (No. RS-2023-00301976).


\section{Topological Strings and BCOV Theory}\label{Section: Topological Strings}

A goal of this paper is to provide a mathematical description of how S-duality acts on certain supersymmetry-protected sectors of type IIB string theory and supergravity. In this section, we wish to establish a mathematical context for discussing such protected sectors in terms of topological string theory.

In Subsection \ref{sub:mixed}, we start with a brief mathematical treatment of topological string theory. Although this plays a mostly organizational role for our paper, this setting allows us to rigorously describe many maneuvers familiar to string theorists in terms of data attached to a Calabi--Yau category. In particular, open string field theory and closed string field theory both arise as natural moduli problems determined by categorical data.

Then we move on to reviewing the relation between open and closed sectors in Subsection \ref{sub:coupling}; in particular, we discuss that further deformations of gauge theories living on D-branes can be identified with closed string fields in the topological string.

In Subsection \ref{sub:BCOV}, we introduce a modification of BCOV theory in dimension 2 and 3 as it plays a major role in our later discussion. Moreover, we prove that there is a hidden $\SL(2,\Z)$ symmetry on BCOV theory, which we later argue is compatible with expectations for S-duality of a twist of type IIB supergravity (see Theorem \ref{thm:mainequivariance}).

\subsection{Topological String Theory}\label{sub:mixed}

A large portion of topological string theory can be understood in terms of axiomatic 2-dimensional extended topological quantum field theory (TQFT). In fact, for our purposes we will think of topological string theory as such. While we do not use it in an essential way, the language can serve as a useful organizational device. A fully-extended 2d TQFT attaches to a point, a category, and natural moduli problems associated to this category can be viewed as the equations of motion for two separate classical field theories. The first of these is open string field theory and will be discussed in Subsection \ref{subsub:brane gauge}. The second of these is closed string field theory as will be discussed in Subsection \ref{subsub:closed}.

\subsubsection{Review of 2d TQFT}

Let us provide a brief review of 2-dimensional TQFT as relevant for our purpose. For more details, refer to the original articles \cite{CostelloTCFT, Lurie}. A 2-dimensional TQFT is a symmetric monoidal functor $Z\colon  \mr{Bord}_2\to \mr{DGCat}$. Here, $\mr{Bord}_2$ is a 2-category whose objects are 0-manifolds, 1-morphisms between objects are 1-manifolds with boundary the given 0-manifolds, and 2-morphisms are 2-manifolds with corresponding boundaries and corners, and $\mr{DGCat}$ denotes the 2-category of small DG categories with colimit-preserving functors. In fact, everything should be considered in an $\infty$-categorical context, but we suppress any mention of it as the discussion in this section is mostly motivational.

By the cobordism hypothesis, a fully extended framed 2d TQFT is determined by a fully dualizable object of $\mr{DGCat}$, and an object of $\mr{DGCat}$ is known to be fully dualizable if and only if it is smooth and proper.\footnote{For the definition of a DG category being smooth and proper, see, for instance, \cite{KontsevichSoibelman}. It is also known that for a quasi-compact quasi-separated scheme $X$, the DG category $\perf(X)$ of perfect complexes on $X$ is smooth and proper if and only if $X$ is smooth and proper in the classical sense, respectively.} Additionally, we often wish to consider oriented theories, which are likewise determined by Calabi--Yau categories. In other words, a smooth proper Calabi--Yau category determines an oriented 2-dimensional extended TQFT \cite{CostelloTCFT, Lurie}. Moreover, by only considering 2-manifolds where each connected component has at least one outgoing boundary component, one can similarly consider a version of oriented 2d TQFT determined by a not necessarily proper Calabi--Yau category \cite[Section 4.2]{Lurie}. In what follows, we will freely remove any compactness assumption on spaces at the price of working with the latter version.

Given a DG category $\mc C$, one can consider its Hochschild chains $\Hoch_\bullet(\mc C)$ and Hochschild homology $\HH_\bullet(\mc C)$. It is a well-known fact that Hochschild chains admit an action of a circle $S^1$. Now, note that in terms of the 2d framed TQFT $Z_\mc{C}$ determined by a smooth proper DG category $\mc{C}$, one has $\Hoch_\bullet(\mc C)=Z_{\mc{C}}(S_{\mr{cyl}}^1)$, namely, what $Z_{\mc C}$ assigns to a circle $S_{\mr{cyl}}^1$ with the cylinder framing. This is the 2-framing induced from the canonical framing of $S^1$. From this perspective, the fact that one may rotate such a circle without changing the framing is responsible for the $S^1$-action on Hochschild chains. 

On the other hand, one may also consider Hochschild cochains $\Hoch^\bullet(\mc C)$ and Hochschild cohomology $\HH^\bullet(\mc C)$ of a DG category $\mc C$. Reasoning similar to the above explains why Hochschild cochains admit the structure of an $\bb E_2$-algebra. Indeed, by considering $Z_{\mc{C}}(S_{\mr{ann}}^1)$ for a circle $S_{\mr{ann}}^1$ with the annulus framing, since a pair of pants diagram can be drawn in a way that respects the framing, $Z_{\mc{C}}(S_{\mr{ann}}^1)$ is seen to have the desired algebra structure.

For a smooth proper Calabi--Yau DG category $\mc C$, or equivalently, an oriented 2d TQFT $Z_{\mc C}$ determined by it, these two spaces are identified up to a shift, in which case the two structures are combined to yield a BV algebra structure. Physically, we may think of such an identification between Hochschild chains and Hochschild cochains as a state-operator correspondence.

In our setting, we are interested in a particular type of topological string theory. Let $M$ be a (compact) symplectic manifold of real dimension $2m$ and let $X$ be a (compact) Calabi--Yau manifold of complex dimension $n$ such that $2m+2n=10$. Then by a \textit{mixed A-B topological string theory with target $M\times X$}, which we succinctly name as a topological string theory on $M_A \times X_B$, we mean the 2-dimensional oriented TQFT determined by the Calabi--Yau 5-category \[\fuk (M)\otimes\perf (X),\] where $\fuk (M)$ refers to a Fukaya category of $M$ and $\perf(X)$ is the DG category of perfect complexes of $X$.\footnote{The 2-category $\mr{DGCat}$ of dg categories is a symmetric monoidal category. For instance, we have $\perf(X)\otimes \perf(Y) \simeq \perf(X\times Y)$.}  This Calabi--Yau 5-category should be thought of as the category of D-branes for the topological string theory on $M_A\times X_B$.

\begin{note}\label{note:wrapped Fukaya}
We do not specify which version of Fukaya category we are considering here -- the above generality is only used for narrative purposes. The primary example of relevance for us is when $M=T^* S^1$ as appearing in Subsection \ref{sub:T-duality}, which admits an explicit description. In view of that, let us note that when the symplectic manifold $M$ is of the form $M=T^*N$, the corresponding wrapped Fukaya category is, roughly speaking, equivalent to the category of modules over the algebra of chains $C_\bullet(\Omega_* N)$ where $\Omega_* N$ is the based loop space \cite{AbouzaidWrap}.
\end{note}

\subsubsection{Topological Open String Field Theory}\label{subsub:brane gauge}

We now turn to describing the open string field theory of a topological string theory. Given a topological string theory described by a smooth Calabi--Yau $d$ category $\mc C$, we may think of an object $\mc F\in \mc C$ as describing a locus on spacetime on which open strings can end. The dynamics of open strings stretched from the support of $\mc F$ to itself are described by a field theory called the \textit{D-brane gauge theory} or \textit{world-volume theory}.

Before specializing to the examples of primary interest, let us briefly describe a nonperturbative characterization of such field theories, utilizing language from the work of Brav and Dyckerhoff \cite{BravDyckerhoffII}. Though we will primarily work perturbatively in the main body of this paper, this characterization serves as a useful organizational device. 

Given a smooth Calabi--Yau $d$ category, one may consider its moduli of objects $\mc M_{\mc C}^{\text{obj}}$. This is a prestack associated to $\mc C$ whose $U$-points are given by $\mc M_{\mc C}^{\text{obj}}(U):= \Hom_{\mr{DGCat}}(\mc C, \perf U)$; we think of this as defining a sort of universal open string field theory. In the work \cite{BravDyckerhoffII}, it is shown that the Calabi--Yau structure on $\mc C$ equips $\mc M_{\mc C}^{\text{obj}}$ with a $(2-d)$-shifted symplectic structure. Now given a point $\mc F \in \mc M_{\mc C}^{\text{obj}}$,  we may consider the $(-1)$-shifted tangent complex $\bb T_{\mc F} [-1] \mc{M}_{\mc C}^{\text{obj}}$ -- this is an $L_\infty$-algebra. Furthermore, it is also proven in \cite{BravDyckerhoffII} that there is an equivalence $\bb T_{\mc F} [-1] \mc M_{\mc C}^{\text{obj}}\cong \R\Hom^\bullet _{\mc{C}}(\mc F, \mc F)$ as $L_\infty$-algebras where the $L_\infty$-structure on the right hand side comes from skew-symmetrizing the natural $A_\infty$-structure. Points of the moduli of objects correspond to left-proper objects of $\mc C$; for such objects, the $(2-d)$-shifted symplectic structure equips $\bb T_{\mc F}[-1] \mc{M}_{\mc C}^{\text{obj}}$ with an odd pairing and defines the worldvolume theory of $\mc F$ as a perturbative BV theory. 

Let us explicate this in the example of the B-model on a Calabi--Yau variety $X$ of odd dimension $d$. The corresponding Calabi--Yau category is the dg category $\coh(X)$ of coherent sheaves on $X$. The general story in the previous paragraph tells us that for a D-brane $\mc F\in \coh(X)$, the algebra of open string states ending on it is given by the DG algebra $\R\Hom^\bullet_{\coh(X)} (\mc F,\mc F)$. Moreover, as the space of fields of a field theory is a local entity, we consider the sheaf of DG algebras $\bb R\ul{\Hom}_{\coh(X)}(\mc F,\mc F)$. A special case of interest for us is when we wrap $N$ coincident D-branes along a complex submanifold $Y$, meaning that $\mc F$ is the trivial vector bundle of rank $N$ over a subvariety $Y$ of $X$. In this case, the above sheaf describes the sheaf of holomorphic sections of $\wedge^\bullet N_{X/Y}$ where $N_{X/Y}$ is the normal bundle of $Y$ in $X$. In the approach to the BV formalism we utilize, spaces of fields are locally free over smooth functions -- therefore, we consider the Dolbeault resolution \[\mc E =\Omega^{0,\bullet}(Y, \wedge^\bullet N_{X/Y} )\otimes \mf{gl}(N)[1].\]

From the fact that $\coh(X)$ is a Calabi--Yau category, this space $\mc E$ has an induced structure of a classical BV theory with a symplectic pairing of degree $2-d$. Again we obtain a field theory in a $\Z$-graded sense exactly when $d=3$. For other $d$, by regrading one may still be able to get a $\Z$-graded theory. Note that if $X$ were even-dimensional, then the space $\mc E$ behaves like a field theory except that the symplectic pairing would be of even degree.

\begin{ex}

Consider topological string theory on $\bb C^5_B$.

\begin{itemize}

	\item Consider $\bb C^2\subset \bb C^5$. Computing as above yields

	\[\mc E_{\text{D3}}^{\text{Hol}}= \Omega^{0,\bullet}(\bb C^2)[\eps_1,\eps_2,\eps_3]\otimes \mf{gl}(N)[1],\] where $\eps_1,\eps_2,\eps_3$ are odd variables that can be understood as parametrizing the directions normal to $\bb C^2$ in $\bb C^5$ and hence describing transverse fluctuations of the brane (See \cite[Section 4]{EY2} for more detail on this perspective and its consequences). This is the holomorphic twist of 4-dimensional $\mc N=4$ supersymmetric gauge theory with gauge group $\GL(N)$ with a certain choice of a twisting datum.

	\item Consider $\bb C^5\subset \bb C^5$. Computing as above yields 

	\[\mc E_{\text{D9}}^{\text{Hol}}= \Omega^{0,\bullet}(\bb C^5)\otimes \mf{gl}(N)[1].\] A result of Baulieu \cite{Baulieu} identifies this as a holomorphic twist of 10-dimensional $\mc N=1$ supersymmetric gauge theory with gauge group $\GL(N)$.

\end{itemize}	

\end{ex}

Next, let us consider a symplectic manifold $M$ of dimension $2k$ with odd $k$ as a target for A-type topological string theory. Then a D-brane should be given by a Lagrangian $L$ of $M$. In this case, we should similarly compute its (derived) endomorphism algebra in the Fukaya category. In the current paper, we are mostly interested in the case of $M=\bb R^{2k}$ for which $N$ coinciding D-branes on $L=\bb R^k\subset \bb R^{2k}$ yield a theory described by 
\[\mc E= \Omega^\bullet(\bb R^k)\otimes \mf{gl}(N)[1].\]
Again if $k$ were even, then we would have gotten an even symplectic pairing  on $\mc E$.

\begin{ex}

Let $L$ be a 3-manifold and take $M=T^*L$. Then we obtain Chern--Simons theory

\[\mc E=\Omega^\bullet(L)\otimes \mf{gl}(N)[1]\]
as argued by \cite{WittenCS}.

\end{ex}

In general, given a tensor product $\mc {C}_1\otimes\mc{C}_2$ of Calabi--Yau categories, and an object $\mc{F}_1\otimes\mc{F}_2\in\mc{C}_1\otimes\mc{C}_2$, we may compute $\bb{R}\ul{\Hom}_{\mc{C}_1\otimes\mc{C}_2}(\mc{F}_1\otimes\mc{F}_2,\mc{F}_1\otimes\mc{F}_2)\cong\bb{R}\ul{\Hom}_{\mc{C}_1}(\mc{F}_1,\mc{F}_1)\otimes\bb{R}\ul{\Hom}_{\mc{C}_2}(\mc{F}_2,\mc{F}_2)$. Again, if it is an odd-dimensional Calabi--Yau category, as is for a topological string theory on $M_A\times X_B$, it would give a field theory. That is, given a D-brane in an arbitrary mixed A-B topological string theory, the corresponding D-brane gauge theory is described by a combination of the above two classes of examples with an odd symplectic pairing.

\begin{ex}\label{ex:HT twist}

Having $N$ D3-branes on $\bb R^2\times \bb C\subset \bb R^4_A \times \bb C^3_B$ leads to \[\mc E_{\mr{D3}}^{\mr{HT}} = \Omega^\bullet(\bb R^2)\otimes \Omega^{0,\bullet}(\bb C)[\eps_1,\eps_2]\otimes \mf{gl}(N)[1],\]
which is the holomorphic-topological twist of 4d $\mc N=4$ gauge theory with gauge group $\GL(N)$. 

\end{ex}

\begin{ex}\label{ex:5d HT twist}
Having $N$ D4 branes on $\R^3 \times \C \subset \R^6_A\times \C^2_B$ leads to
\[\mc E_{\mr{D4}}^{\text{H}_1\text{T}_3} = \Omega^\bullet(\bb R^3)\otimes \Omega^{0,\bullet}(\bb C)[\eps ]\otimes \mf{gl}(N)[1],\]
which is a holomorphic-topological twist of 5d $\mc N=2$ gauge theory with gauge group $\GL(N)$.
\end{ex}

\subsubsection{Topological Closed String Field Theory}\label{subsub:closed}

The next aim is to understand closed string field theory and its structures \cite{WittenZwiebach, Zwiebach} in the context of topological string theory. Once again, we begin by hinting at a more general framework for narrative purposes; one can take our discussion of BCOV theory in Subsection \ref{subsub:originalBCOV} as the starting point of a mathematical discussion of closed string field theory.

Let $Z$ be a topological string theory in the above sense. Naively, one may think of the space of closed string states as $Z(S^1)$. However, in the physical theory, the worldsheet theory is a 2-dimensional conformal field theory coupled to 2-dimensional gravity; in particular, the space of closed string states should be invariant under the group $\mr{Diff}(S^1)$ of diffeomorphisms of $S^1$ acting by reparametrizing the boundary components of the worldsheet. In our topological setting, this amounts to taking the $S^1$-invariant space $Z(S^1)^{S^1}$. In a categorical setting, for a Calabi--Yau DG category $\mc C$, we define its cyclic cochain complex to be $\mr{Cyc}^\bullet(\mc C): =\Hoch^\bullet (\mc C)^{S^1}$. An expectation is that there is a natural way to equip it with the structure of a Poisson degenerate field theory in the sense of \cite{BY} (at least when a topological string theory $Z$ is local in nature, for example, a B-model).

Here the space $V^{S^1} $ of homotopy $S^1$-invariants for a cochain complex $(V,d_V)$ with an action of $S^1$, or equivalently a cochain map $C_\bullet(S^1)\to \End(V)$, may be modeled by \[V^{S^1} =  (V\llb t\rrb, d_V+ t B),\] where $t$ is of cohomological degree 2 and $B \colon V\to V$ is an operator of cohomological degree $-1$, which is the image of the fundamental class $[S^1]$ under the map $C_\bullet(S^1)\to \End(V)$. In our case where $V=\Hoch^\bullet(\mc C)$, the operator $B$ is precisely Connes's $B$ operator and we obtain that the shifted cyclic cochains $\mr{Cyc}^\bullet (\mc C)[1]$ have the structure of an $L_\infty$-algebra. Moreover, the Calabi--Yau structure of $\mc C$ yields a cyclic $L_\infty$-algebra structure on the shifted cyclic cochains $\mr{Cyc}^\bullet (\mc C)[1]$. The cyclic $L_\infty$-structure may be thought of as coming from a realization of $\mr{Cyc}^\bullet (\mc C)[1]$ as the $-1$ shifted tangent complex at $\mc C$ in the moduli of smooth proper Calabi--Yau $d$ categories \cite{BravRozenblyum}.

Moreover, a folklore theorem asserts that the formal neighborhood of $\mc C$ in the moduli of smooth proper Calabi--Yau $d$-categories is odd shifted Poisson. Indeed, consider the periodic cyclic cochains $\mr{PCyc}^\bullet(\mc C)$, which can be identified with the Tate fixed points of $\Hoch^\bullet(\mc C)$, where the space $V^{\mr{Tate},S^1}$ of Tate fixed points for the $S^1$ action on $(V,d_V)$ is modeled by \[V^{\mr{Tate},S^1}  = (V \llp t \rrp, d_V + tB ).\] Using the trace pairing and the residue pairing, $\mr{PCyc}^\bullet(\mc C)$ has a canonical symplectic structure of degree $6-2d$  where $d$ is the Calabi--Yau dimension of $\mc C$. Now from the canonical embedding $\cyc^\bullet(\mc C)\to \mr{PCyc}^\bullet(\mc C)$, the failure of the differential $B$ preserving a splitting of $ \mr{PCyc}^\bullet(\mc C)$ into $\cyc^\bullet(\mc C)$ and its complement induces a Poisson structure on $\cyc^\bullet(\mc C)$ of degree $2d-5$. Note that it is only for $d=3$ that naturally is equipped with a $\bb  P_0$-structure in a $\Z$-graded sense.

All these combined, if one can find a local model $\mc E$ for $\mr{Cyc}^\bullet (\mc C)[2]$, then it would have the structure of a degenerate field theory by construction.

\begin{ex}
Consider a (compact) Calabi--Yau 5-fold $X$ and the topological string theory determined by $\mc C=\coh(X)$. We want to identify the corresponding closed string field theory. For more details, we refer the reader to Subsection \ref{subsub:originalBCOV}.

We know $Z_{\mc C}(S^1)$ is identified with Hochschild cochains. By the HKR theorem (and the Kontsevich formality \cite{KontsevichPoisson} for an $L_\infty$-equivalence after shifting), the Hochschild cochain complex is identified with the space of polyvector fields $\PV(X)= \bigoplus_{i,j} \PV^{i,j}(X)$ with the differential $\delbar$ (and the Lie bracket $[-,-]_{\mr{SN}}$, the Schouten--Nijenhuis bracket), where $\PV^{i,j}(X)= \Omega^{0,j}(X, \wedge^i  T_X )$. Moreover, one finds $B=\del \colon \PV^{i,j}(X)\to \PV^{i-1,j}(X)$. This leads to a description of closed string field theory of the topological string theory associated to $\coh(X)$ as
\[  \mc E(X)= ( \PV(X)\llb t \rrb [2]; \delbar + t\del , [-,-]_{\mr{SN}} ). \]
Finally, from the symplectic pairing on $\PV(X)\llp t \rrp$ given by 
\[ \omega( f(t)\mu, g(t)\nu  ) = (\Res_{t=0} f(t) g(-t) ) \cdot  \tr(\mu\nu) \]
it induces the kernel $(\del\otimes 1)\delta_{\mr{diag}}$ for the shifted Poisson structure on $\mc E(X)$. 

In fact, more is true in this example. As explained in the original paper \cite{CLBCOV}, by a highly nontrivial $L_\infty$-quasi-isomorphism, one can find another description of this degenerate field theory that admits a description in terms of a local action functional, as discussed in Subsection \ref{subsub:originalBCOV}.
\end{ex}

\begin{note}\label{note:onlycotangent}
Hochschild (co)homology of a Fukaya category $\mr{Fuk}(M)$ on a general symplectic manifold $M$ is known to be closely related to quantum cohomology $\mr{QH}(M)$ \cite{Ganatra, Sanda}. In particular, it is very hard to imagine a local cochain model which encodes all the non-perturbative information. Our interest will be restricted to the case where we consider $M_A\times X_B$ with $M=T^* N$: then Note \ref{note:wrapped Fukaya} suggests that the closed string field theory should be in terms of $C_\bullet( LN )$ and $\mc E(X)$, where $LN$ is the free loop space and $C_\bullet( LN )$ is identified as the space of Hochschild (co)chains of $C_\bullet(\Omega_* N )$-mod. This will be used in Subsection \ref{sub:closed SFT}.
\end{note}

Finally, one should note that by definition a closed string field gives a deformation of the category one started with. For example, consider $\mc C=\coh(X)$ for a smooth scheme $X$. Then first-order deformations of the category are classified by the second Hochschild cohomology $\HH^2(\coh(X)) =\HH^2(X)$ By way of the Hodge decomposition, we have that $\HH^2 (X) \cong  H^0(X,\wedge^2 T_X)\oplus H^1(X,T_X) \oplus H^2(X, \mc O_X) $; $H^1(X,T_X)$ classifies geometric deformations of $X$, $H^0(X,\wedge^2 T_X)$ deformations of $\mc O_X$ as an associative algebra, and $H^2(X,\mc O_X)$ gerbal deformations of $\coh(X)$. See, for instance, \cite{Toda} for a concrete description of the deformed category. 

As another example, we think of $W\in \HH^0(\coh(X))\cong H^0(X,\mc O_X)$ giving a deformation as a $\Z/2$-graded category, by introducing a degree 2 parameter $\beta$; a folklore theorem says that the resulting category can be identified as a matrix factorization category $\text{MF}(W)$. 

 Moreover, the degree 2 element of the cyclic cohomology, or equivalently, degree 0 element of the associated closed string field theory, precisely gives a first-order deformation of $\mc C$ as a Calabi--Yau category. In other words, those deformations corresponding to the second Hochschild cohomology respect the Calabi--Yau structure if they define  classes of the second cyclic cohomology.

\subsubsection{BCOV Theory}\label{subsub:originalBCOV}

In this subsection we discuss BCOV theory as the main example of closed string field theory. We will discuss and study its variants in Subsection \ref{sub:BCOV}.

BCOV theory was originally introduced in the seminal work of Bershadsky--Cecotti--Ooguri--Vafa \cite{BCOV} under the name of Kodaira--Spencer theory of gravity for a Calabi--Yau 3-fold, as a string field theory for the topological B-model that describes deformations of complex structures. It was later generalized by Costello and Li \cite{CLBCOV} for an arbitrary Calabi--Yau manifold.

Let us briefly review the set-up. Let $X$ be a $d$-dimensional Calabi--Yau manifold with a holomorphic volume form $\Omega_X$. Consider the space $\PV^{i,j}(X)= \Omega^{0,j}(X, \wedge^i  T_X )$, where $T_X$ is the holomorphic tangent bundle of $X$; this is a locally free resolution of the sheaf of holomorphic $i$-polyvector fields  on $X$. We let $\PV(X) =  \bigoplus_{i,j} \PV^{i,j}(X)$ where the summand $\PV^{i,j}(X)$ lives in cohomological degree $i+j$; accordingly, we write $|\mu|=i+j$ for $\mu \in \PV^{i,j}(X)$. Then $\PV(X)$ is a commutative differential graded algebra with the differential $\delbar$ and the product being the wedge product. Contracting with $\Omega_X$ yields an identification \[(-) \vee \Omega_X \colon \PV^{i,j}(X) \cong \Omega^{d-i,j}(X).\] This allows us to transfer the operator $\del$ on $\Omega ^{\bullet,\bullet}(X)$ to yield operator $\del$ on $\PV^{\bullet,\bullet}(X)$. The result $\del \colon \PV^{i,j}(X) \to \PV^{i-1,j}(X) $ is the divergence operator with respect to a holomorphic volume form $\Omega_X$; it is a second-order differential operator of cohomological degree $-1$ so that \[ [\mu,\nu]_{\mr{SN}} := (-1)^{|\mu|-1} \left( \del (\mu  \nu) - (\del \mu)\nu - (-1)^{|\mu|} \mu  (\del \nu) \right)\] defines a Poisson bracket of degree $-1$. Here the sign factor is necessary to precisely match with the Schouten--Nijenhuis bracket, explaining the notation. One can check $\del$ is a shifted derivation of $[-,-]_{\mr{SN}}$ in that \[\del [\mu,\nu]_{\mr{SN}} = [\del \mu,\nu]_{\mr{SN}} + (-1)^{|\mu|-1}[\mu, \del\nu]_{\mr{SN}}\] holds. It also satisfies a shifted version of graded Lie algebra axioms in that the following hold:
\begin{align*}
 [\alpha,\beta]_{\mr{SN}}& = -(-1)^{(|\alpha|-1) (|\beta|-1)} [\beta,\alpha]_{\mr{SN}} \\
 [\alpha,[\beta,\gamma]_{\mr{SN}} ]_{\mr{SN}} & =[[\alpha,\beta]_{\mr{SN}}, \gamma  ]_{\mr{SN}} +  (-1)^{ ( |\alpha|-1)(|\beta|-1) }   [\beta,[\alpha,\gamma ]_{\mr{SN}} ]_{\mr{SN}}.
\end{align*}
In fact, $(\PV(X)[1],\delbar, [-,-]_{\mr{SN}})$ is a differential graded Lie algebra.  We also consider the trace map \[\tr \colon \PV_c(X)\to \bb C\qquad \text{given by} \qquad \mu\mapsto \int_X (\mu  \vee  \Omega_X) \wedge \Omega_X\] where $\PV_c(X)$ stands for compactly supported sections. Note that the trace map is non-trivial only on $\PV^{d,d}_c(X)$.

Consider $\PV(X)[2]$ so that the summand $\PV^{1,1}(X)$ is in degree 0. This part of the fundamental fields of the theory govern deformations of the complex structure on $X$, which is the reason why it was originally called Kodaira--Spencer gravity. Consider the Schouten--Nijenhuis bracket and the shifted Poisson structure given by the kernel $(\del\otimes 1)\delta_{\mr{diag}}$. Then this immediately gives the free part of the action functional of BCOV theory as $\frac{1}{2}\tr(\mu \ol{\del}\del^{-1} \mu )$, which is precisely the form as discussed by the original work \cite{BCOV}.

As motivated in Subsection \ref{subsub:closed} and discussed in \cite{CLBCOV}, we use $\mc E_{\mr{BCOV}}(X)$ to denote the resolution of the subcomplex of divergence-free polyvector fields $(\ker\del)(X)\subset \PV(X)[2]$ given by the sheaf of cochain complexes $(\PV(X)\llb t\rrb [2], Q= \delbar+t\del)$ where $t$ is a parameter of cohomological degree 2. We have the $t$-linear extension of the Schouten--Nijenhuis bracket that we still denote by $[-,-]_{\mr{SN}}$. The shifted Poisson structure $\{-,-\}$ given by the kernel $(\del\otimes 1)\delta_{\mr{diag}}$ equip $\mc E_{\mr{BCOV}}(X)$ with the structure of a Poisson degenerate BV theory in the sense of \cite{BY} -- this is precisely the closed string field theory of the topological B-model as discussed in Subsection \ref{subsub:closed}. 

Moreover, it is shown by Costello--Li \cite[Section 6]{CLBCOV} that the DG Lie algebra $(\PV(X)\llb t\rrb [2], Q=\delbar +t \del, [-,-]_{\mr{SN}} )$ is quasi-isomorphic to another $L_\infty$-algebra on $\PV(X)\llb t \rrb[2]$ that is determined by a vector field of the form $Q+ \{I_{\mr{BCOV}},-\}$. Here the ``interaction term'' $I_{\mr{BCOV}}$ is a local functional $I_{\mr{BCOV}}$ of cohomological degree $6-2d$ on $\mc E_{\mr{BCOV}}(X)$ that satisfies the classical master equation $QI + \frac{1}{2}\{I,I\}=0$. To define it, we first introduce the notation $\langle -\rangle_0 \colon \sym(\PV(X)\llb t \rrb ) \to \PV(X)$ for the descendent integral given by 
\[  \langle t ^{k_1}  \mu_1, t^{k_2} \mu_2,\cdots, t^{k_n}\mu_n\rangle_0 = \left(\int _{\ol{\mc M}_{0,n}} \psi_1^{k_1}\cdots \psi_n^{k_n} \right)  \mu_1 \cdots \mu_n = \binom{n-3}{k_1,k_2,\cdots,k_n} \mu_1 \cdots \mu_n. \]
Then one can succinctly write
\[I_{\mr{BCOV}}(\mu) = \tr \langle e^\mu\rangle_0.\]
To explain the point of introducing this more complicated $L_\infty$-structure, recall that for an ordinary non-degenerate perturbative BV theory $\mc E$, an action functional on $\mc E$ and $L_\infty$-structure on $\mc E$ is interchangeable, as any vector field is Hamiltonian in the formal moduli setting. On the other hand, BCOV theory is a Poisson degenerate BV theory, so its $L_\infty$-structure doesn't necessarily come from an action functional. However, the above claim exhibits that the $L_\infty$-structure indeed comes from an action functional, which is strictly more data than a mere $L_\infty$-structure. Henceforth, by BCOV theory we mean the sheaf of cochain complexes $(\PV(X)\llb t\rrb[2], Q =\ol\del + t \del)$ together with the shifted Poisson structure $\{-,-\}$ induced from the kernel $(\del\otimes 1)\delta_{\mr{diag}}$ as well as $L_\infty$-structure induced from $\{I_{\mr{BCOV}},-\}$.

\subsection{Coupling between Open and Closed Sectors} \label{sub:coupling}

We now discuss the relations between the closed string sector (or supergravity) and open string sector (or D-brane gauge theory). We discuss how an element of the space of closed string states yields a deformation of D-brane gauge theory in Subsection \ref{subsub:CO map}  and how conversely having D-brane yields a closed string field as a flux sourced by the brane in Subsection \ref{subsub:FieldsSourced}. 

\subsubsection{Closed-Open Map}\label{subsub:CO map}

This subsection is based on \cite[Subsection 7.2]{CLSugra}. For more details, the readers are advised to refer to the original paper.

Consider the topological closed string field theory described by $\mc E_{\mr{BCOV} }(X)$ on a Calabi--Yau manifold $X$. As we have learned, when we consider D-branes of twisted type IIB string theory, we have a D-brane gauge theory living on them. Physically, whenever we have a BRST closed element of a closed string field theory, it yields a deformation of the D-brane gauge theory since those theories are coupled. This construction is implemented via the closed-open map. In our setting of twisted string theory or topological string theory, this can be understood in a conceptual way because we can consider the category of boundary conditions for the B-model, namely, $\coh(X)$. The closed-open map then codifies the idea that a deformation of a category should induce deformations of the endomorphisms of every object. The following theorem can be understood as a closed-open map with the universal target $\coh(X)$. 

\begin{thm}\cite{KontsevichPoisson, Willwacher--Calaque}
Let $X$ be a Calabi--Yau manifold. There is an equivalence of $L_\infty$-algebras $(\PV(X)\llb t\rrb [1], \delbar +t\del, [-,-]_{\mr{SN}})\to \mr{Cyc}^\bullet (\coh(X))[1]$.
\end{thm}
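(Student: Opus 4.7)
The plan is to bootstrap from Kontsevich's formality theorem for Hochschild cochains of a smooth affine variety, upgrade it to the cyclic setting by tracking the Connes $B$-operator, and then globalize to an arbitrary Calabi--Yau manifold via Gel'fand--Kazhdan formal geometry.

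I would begin by recalling Kontsevich's formality theorem: for a smooth commutative algebra $A$ over $\bb C$, it provides an $L_\infty$-quasi-isomorphism $\PV^\bullet(\spec A)[1] \to \Hoch^\bullet(A)[1]$ intertwining the Schouten--Nijenhuis bracket with the Gerstenhaber bracket, refining the Hochschild--Kostant--Rosenberg identification to the $L_\infty$-level. Adjoining a formal parameter $t$ of cohomological degree $2$ on both sides, I would then extend this to the cyclic (equivalently, negative cyclic) setting, where $t$ tracks powers of the Connes $B$-operator so that $\mr{Cyc}^\bullet(A)[1] \simeq (\Hoch^\bullet(A)\llb t\rrb[1],\, b + tB)$. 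Under HKR, the operator $B$ corresponds to the divergence operator $\del$ on polyvector fields induced by the volume form, producing an equivalence $(\PV^\bullet(\spec A)\llb t\rrb[1],\, t\del) \to \mr{Cyc}^\bullet(A)[1]$ of $L_\infty$-algebras; this cyclic enhancement of Kontsevich formality is due to Willwacher, with closely related contributions of Shoikhet, Dolgushev, and Calaque--Rossi. Finally, for a Calabi--Yau $X$, I would globalize by assembling the bundle of formal coordinate systems on $X$, applying the local cyclic formality fiberwise in a $\GL$-equivariant way, and descending using a torsion-free connection. Taking the Dolbeault resolution of the resulting sheaf of $L_\infty$-algebras produces the $\delbar$ differential on the polyvector field side and computes $\mr{Cyc}^\bullet(\coh(X))[1]$ on the Hochschild side, since $\coh(X)$ is locally modeled on the endomorphism algebras of its affine opens.

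The main obstacle is the globalization step. Kontsevich's explicit graphical formula depends on a choice of affine coordinates, and the resulting $L_\infty$-structure is only $\GL$-equivariant rather than equivariant under all formal diffeomorphisms. Producing a globally defined $L_\infty$-quasi-isomorphism of sheaves therefore requires Fedosov-type resolutions with a choice of formal exponential, together with a verification that the Kontsevich propagators respect the relevant $\GL$-action. A further subtlety special to our setting is that the Calabi--Yau structure of $X$ is needed both to make $\del$ globally well-defined on $\PV^{\bullet,\bullet}(X)$ and to endow $\coh(X)$ with the Calabi--Yau structure with respect to which $\mr{Cyc}^\bullet$ is to be computed. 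Threading this Calabi--Yau data through the Fedosov globalization while maintaining $L_\infty$-compatibility with the full differential $\delbar + t\del$ constitutes the most technical component of the Willwacher--Calaque argument, and is where the bulk of the work should be expected to lie.
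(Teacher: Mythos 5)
This statement is not proved in the paper at all: it is imported wholesale from the cited references (Kontsevich's formality theorem and the cyclic formality theorem of Willwacher--Calaque), and the authors immediately remark that they only ever use the induced map on cohomology. So there is no internal argument to compare yours against; the right benchmark is whether your outline faithfully reflects how the cited literature establishes the result, and on that score it does. Your three-step plan --- local Kontsevich formality $\PV^\bullet[1]\to\Hoch^\bullet(A)[1]$, its cyclic enhancement matching $t\del$ (divergence against the volume form) with the Connes-type differential on cyclic cochains, and Fedosov/Gel'fand--Kazhdan globalization with a Dolbeault resolution supplying $\delbar$ and Morita-type arguments identifying $\mr{Cyc}^\bullet(\coh(X))$ with the sheaf-level computation --- is exactly the architecture of the Willwacher--Calaque proof, and you correctly flag the globalization and the threading of the Calabi--Yau datum as the technical heart.

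Two cautions. First, be aware that what you have written is a roadmap, not a proof: each of the three steps is itself a substantial theorem (Kontsevich formality, cyclic formality, equivariant Fedosov descent), and your text defers all of them to the references --- which is fine here precisely because the paper does the same, but it means no step of yours could be checked independently of those sources. Second, your phrase ``$t$ tracks powers of the Connes $B$-operator so that $\mr{Cyc}^\bullet(A)[1]\simeq(\Hoch^\bullet(A)\llb t\rrb[1], b+tB)$'' is slightly off as stated: $B$ acts natively on Hochschild \emph{chains}, and to obtain the corresponding degree-$+1$ operator on \emph{cochains} one must first invoke the cyclic pairing coming from the Calabi--Yau (volume-form) structure to identify chains and cochains; only then does it transport to $\del$ on polyvector fields under HKR. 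You essentially acknowledge this later when you note the CY structure is needed on both sides, but the definition of the cyclic cochain complex you are mapping to should be phrased through that pairing from the start, as Willwacher--Calaque do.
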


This $L_\infty$-morphism is complicated to describe but fortunately we don't need to keep track of the higher maps for our purpose and will explicitly describe the map after taking cohomology.

The theorem succinctly encodes the coupling information between the closed string field theory and D-brane gauge theory. To see this, first note that for a D-brane $\mc F\in \coh(X)$, one always has a map $\Hoch^\bullet(X)  \to \Hoch^\bullet (\bb R\ul{\Hom}_{\coh(X)}(\mc F,\mc F) )$. Identifying $\Hoch^\bullet(X)$ with $\PV(X)$ via the HKR theorem, the Calabi--Yau structure equips $\PV(X)$ with $\del$ and $\Hoch^\bullet (\bb R\ul{\Hom}_{\coh(X)}(\mc F,\mc F))$ with the Connes $B$-operator in a compatible way, yielding a map of cochain complexes $\PV(X)\llb t \rrb \to \cyc^\bullet (\bb R\ul{\Hom}_{\coh(X)}(\mc F,\mc F) ) $. Note that a cyclic cohomology class gives a first-order deformation as an $A_\infty$-algebra with a trace pairing, which precisely gives a deformation of the gauge theory one would construct out of $\bb R\ul{\Hom}_{\coh(X)}(\mc F,\mc F)$ for an odd Calabi--Yau manifold $X$. Then, the main content of the theorem is that with care about higher maps, this can be done in such a way that respects formal deformation theory.

As mentioned, our main interest is when $X$ is a flat space and after we take cohomology. For example, consider $X=\bb C^5$ and D3 branes on $\bb C^2\subset \bb C^5$ so that we obtain $\Hom^\bullet_{\coh(\bb C^5)} (\mc O_{\bb C^2},\mc O_{\bb C^2})\cong \mc O(\bb C^2)[\eps_1,\eps_2,\eps_3 ]=\mc O(\bb C^{2|3})$. Then after taking cohomology the map of interest is from the canonical map $\PV_{\mr{hol}}(\bb C^5) \to \PV_{\mr{hol}}(  \bb C^{2|3} )$. Here $\PV_{\mr{hol}} =\oplus_k \PV^k_{\mr{hol}}$ with $\PV^k_{\mr{hol}}= \PV^{k,0}_{\mr{hol}}\subset \PV^{k,0}$ is the space of holomorphic polyvector fields, that is, consisting of those which are in the kernel of $\delbar$. The map is the identity map on $\bb C^2$, whereas for the normal coordinates $w_1,w_2,w_3$ of $\bb C^2\subset \bb C^5$, the map is given by a Fourier transform $w_i\mapsto \del_{\eps_i}$ and $\del_{w_i}\mapsto \eps_i$. Recall that having a formal parameter $t$ together with additional differential $t\del$ amounts to considering $\ker \del$ in our formulation. In our model of twisted supergravity theory, we will use a modified BCOV theory (see Subsection \ref{sub:BCOV}) with additional $\ker\del\subset \PV^{d,\bullet}(X)$, which should get sent to zero for a degree reason.

In other words, a first-order deformation of a D-brane gauge theory, which should be described by an element of a cyclic cohomology class, can be represented by a closed string state as desired. Note that the number of D-branes does not matter because Hochschild cohomology is Morita-invariant and $\mf{gl}(N)$ is Morita-trivial. This is compatible with the expectation that a deformation given by a closed string state should work for arbitrary $N$ in a uniform way.

\begin{rmk}
Nothing in this discussion crucially depends on the fact that $X$ is of dimension 5. Indeed, for our main application, we will consider a theory on $\bb R^4_A \times \bb C^3_B$ which corresponds to the case of $X=\bb C^3$.	
\end{rmk}

\subsubsection{Boundary States and Fields Sourced by D-branes}\label{subsub:FieldsSourced}

Just as in the physical string theory, branes in the topological string theory also source certain closed string fields that interact with the closed string sector. Mathematically, this means that fixing a D-brane should yield an element of the space of closed string states. Here, we will derive a procedure for computing such an element by examining some constraints on couplings between open and closed string field theories that are forced upon us from TQFT axiomatics.

Consider a mixed A-B topological string theory on $M\times X$ with category of D-branes $\mc{C}$ and fix a D-brane $\mc{F}\in\mc{C}$. To first order, a coupling between the D-brane gauge theory of $\mc F$ and the closed string field theory is given by a pairing 
\[\bb{R}\ul{\Hom}_{\mc{C}}(\mc{F},\mc{F})\otimes \cyc^\bullet (X)\to \bb C.\]  We may equivalently view this as an $S^1$-invariant map \[\bb{R}\ul{\Hom}_{\mc{C}}(\mc{F},\mc{F})\otimes\Hoch^\bullet (X)\to \bb C \] and using the identification between Hochschild chains and Hochschild cochains afforded by the Calabi--Yau structure of $\mc C$, we have a map \[\bb{R}\ul{\Hom}_{\mc{C}}(\mc{F},\mc{F})\otimes \Hoch_{\bullet} (X)\to \bb C.\]

Now this latter map is exactly what the TQFT $Z_{\mc C}$ assigns to a world-sheet depicting a scattering process where the endpoints of an open string, which are labeled by the brane $\mc F$, fuse to yield a closed string, which then annihilates with another closed string. This is depicted in the left-hand side of the figure below, and the endpoints of the open string are depicted in blue.

\begin{center}
\includegraphics[width=\textwidth]{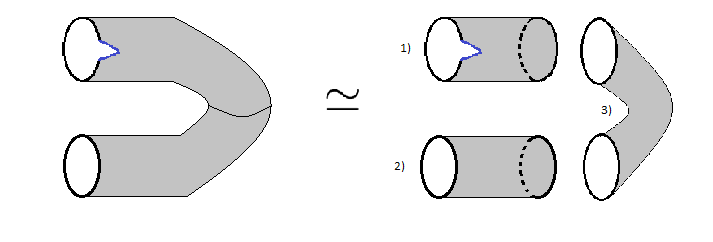}
\end{center}

Now, from the functoriality of $Z_{\mc C}$ with respect to compositions of cobordisms, we may compute $Z_{\mc{C}}$ of the left-hand side above, by computing $Z_{\mc{C}}$ of each of the pieces of the right-hand side above and composing them appropriately. Accordingly, we have that:
\begin{itemize}
\item applying $Z_{\mc C}$ to the cobordism labeled 1) above yields a map $\del_{\mr{st}} : \bb{R}\ul{\Hom}_{\mc{C}}(\mc{F},\mc{F})\to\Hoch_{\bullet} (X).$ We suggestively denote the image of $\id_{\mc F}$ by $\mr{ch} (\mc F)$; this is a mathematical codification of the \textit{boundary state} associated to a boundary condition \cite{MooreSegal}.
\item applying $Z_{\mc C}$ to the cobordism labeled 2) above yields the identity map $\id \colon \Hoch_\bullet (X)\to \Hoch_\bullet (X)$.
\item applying $Z_{\mc C}$ to the cobordism labeled 3) above yields a pairing $\tr \colon \Hoch_\bullet(X)\otimes\Hoch_\bullet (X)\to\bb C$.
\end{itemize}

Now composing the above, we see that the cobordism on the left-hand side above yields a map
\begin{align*}
\bb{R}\ul{\Hom}_{\mc{C}}(\mc{F},\mc{F})\otimes\Hoch_\bullet (X)&\to\Hoch_\bullet (X)\otimes\Hoch_\bullet (X)\to\bb C\\
\id_{\mc F}\otimes\alpha &\mapsto \tr(\mr{ch}(\mc F)\otimes\alpha)
\end{align*}
Finally, appealing to the Calabi--Yau structure of $\mc C$ once more and letting $\Omega $ denote the isomorphism $\Hoch^\bullet(X)\cong\Hoch_\bullet (X)$, we see that the desired coupling must be a map
\begin{align*}
\bb{R}\ul{\Hom}_{\mc{C}}(\mc{F},\mc{F})\otimes\Hoch^\bullet (X)&\to \bb C\\
\id_{\mc F}\otimes\mu & \mapsto\tr(\mr{ch}(\mc F)\otimes \Omega (\mu)).
\end{align*}

For concreteness, let us explicate the above in the case of the topological B-model with target a Calabi--Yau 5-fold $X$, that is, $\mc C=\coh(X)$. In this case the HKR theorem gives us isomorphisms $\Hoch_\bullet (X) = \Omega^{-\bullet}(X)$, $\Hoch ^\bullet(X) \cong \PV^{\bullet,\bullet}(X)$, the map $\del_{\mr{st}}$ sends $\id_{\mc{F}}$ to the ordinary Chern character of $\mc{F}$, the map $\Omega$ is given by contracting with the Calabi--Yau form $\Omega_X$, and the pairing $\tr$ is given by wedging and integrating. In sum, the above composition is the map
\begin{align*}
\bb{R}\ul{\Hom}_{\coh(X)}(\mc{F},\mc{F})\otimes\PV^{\bullet,\bullet} (X) &\to\bb C\\
\id_{\mc F}\otimes\mu &\mapsto \int_X \mr{ch}(\mc F)\wedge (\mu\vee\Omega_X)
\end{align*}
In particular, when $X=\bb C^5$ and $\mc F=\mc O_Y^N$ where $Y\subset\bb C^5$ is a subvariety of complex dimension $k$, then one has $\mr{ch}(\mc F)= N\delta_Y$, where $\delta_{Y}$ denotes the $(5-k,5-k)$-form with distributional coefficients corresponding to the usual $\delta$-function supported on $Y$. Thus, the right-hand side of the above functional becomes $N \int_Y \mu\vee\Omega_X$. 

Now let us incorporate the action of $C_\bullet(S^1)$ on $\Hom(\PV^{\bullet,\bullet}(X),\C)$ coming from rotating the closed string, in order to get a $\del$-invariant functional. Explicitly, the action is given by precomposing with $\del^{-1}$. The image of a map under this action will necessarily be $S^1$-invariant; thus, the desired functional introduced by the presence of the brane $\mc{F}=\mc O_Y^N$ is given by
\[I(\mu)=N \int_Y \del^{-1}\mu\vee\Omega_X .\]
We note that this is only non-zero on the $(4-k,k)$-component of $\mu$.

Let us see how considering a D-brane $\mc F= \mc O_Y^N$ and its coupling in the above sense modifies the equations of motion. For this purpose, we work with the formulation of BCOV theory without the interaction term; thus, the output of our derivation will be a linear approximation to the actual field sourced by a D-brane. The terms in the action functional involving the $\mu^{4-k,k}$ term are 
\begin{align*}
&\int_X (\mu^{k,4-k}\delbar\del^{-1}\mu^{4-k,k}\vee\Omega_X)\wedge\Omega_X+N\int_Y \del^{-1}\mu^{4-k,k}\vee\Omega_X\\
= -&\int_X (\delbar\mu^{k,4-k}\del^{-1}\mu^{4-k,k}\vee\Omega_X)\wedge\Omega_X+N\int_Y \del^{-1}\mu^{4-k,k}\vee\Omega_X
\end{align*}
where in the second line we have integrated by parts. Varying with respect to $\del^{-1}\mu^{4-k,k}$ yields the equations of motion \[\delbar\mu^{k,4-k}\vee \Omega_X =N\delta_{Y}\]
Here we should think of $\mu^{k,4-k}$ satisfying the equations of motion as the field sourced by $\mc O_Y^N$ on $X$.

This motivates the following definition:
\begin{defn}\label{defn: sourcing}
Let $X$ be a Calabi--Yau variety and $\mc F\in \coh(X)$ be a coherent sheaf. \emph{A flux density sourced by $\mc{F}$ on $X$} is a choice of a representative $F_\mc{F}$ for a class in $\Hoch^\bullet(X)\cong \PV(X) $ trivializing the Chern character $\ch(\mc{F})\vee \Omega_X^{-1}$ and satisfying $\del F_\mc{F}=0$.
\end{defn}

The only example we care about in this paper is the following:

\begin{ex}
Let $X=\bb C^d$, $Y=\bb C^k\subset \bb C^d$, and $\mc{F}=\mc{O}_Y^N$. In this case, one can additionally ask the gauge fixing condition $\delbar^* F_{\mc{F}}=0$. It turns out that 
\[\delbar F_{\mc F}\vee \Omega_X = N \delta_{Y},\qquad \del F_\mc{F}=0,\qquad  \delbar^* F_{\mc{F}}=0  \]
uniquely characterize a representative, which is the so-called \textit{Bochner--Martinelli kernel} \cite{GH}.
\end{ex}

This can be generalized to the case when the normal bundle $N_{X/Y}$ is trivial.

\subsection{Further Points on BCOV Theory} \label{sub:BCOV}

In this subsection, we introduce minimal BCOV theory as well as further variants of minimal BCOV theory in dimensions 2 and 3, and discuss a hidden $\SL(2,\Z)$ symmetry on the modified BCOV theory on a Calabi--Yau 3-fold.

\subsubsection{Modified BCOV Theories}

 Recall BCOV Theory is equipped with a shifted Poisson structure from the kernel $(\del\otimes 1)\delta_{\mr{diag}}$. Note that the Poisson structure pairs components of $\PV^{i,\bullet}(X)$ with $\PV^{d-i-1,\bullet}(X)$. This suggests that $\oplus_{i\leq d-1} \PV^{i,\bullet}(X)$ are the propagating fields while fields in $\PV^{d,\bullet}(X)$ and any summand of $\mc E_{\mr {BCOV}}(X)$ nonconstant in $t$ are certain background fields. One would like to discard the nonpropagting fields, but since the differential $\ol{\del} +t\del$ increases degree in $t$ the result would not be closed under the differential. This motivates the following definition.

\begin{defn}	
A \textit{minimal BCOV theory} $\mc E_{\mr{mBCOV}}(X) \subset \mc E_{\mr {BCOV}}(X)$ is defined by asking the space of fields to be given by the sheaf of subcomplexes \[\mc E_{\mr{mBCOV}}(X)=\left( \bigoplus_{i+ k\leq d-1}t^{k}\PV^{i,\bullet}(X), \ol{\del}+t\del\right)  \]
and setting $I_{\mr{mBCOV}}=I_{\mr{BCOV}}|_{\mc E _{\mr{mBCOV}}}$. 
\end{defn}
This is minimal in the sense that it is the smallest subcomplex containing the propagating fields.

\begin{ex}[Minimal BCOV theory for Calabi--Yau surfaces] The space $\mc E_{\mr{mBCOV}}(X)$ of fields of the minimal BCOV theory on a Calabi--Yau surface $X$ is
\[\xymatrix @R=0.5em {
 \ul{-2} &   \ul{-1} &   \ul{0}\\
 \PV^{0,\bullet}(X)& \\
& \PV^{1,\bullet}(X) \ar[r]^-{t\del } & t\PV^{0,\bullet}(X)\\
}\]
and the action functional $I_{\mr{mBCOV}}$ is
\[ I _{\mr{mBCOV}}(\alpha^0,\mu^1,t \mu^0) =   \frac{1}{2} \tr   \left\langle  \alpha^0 \mu^1 \mu^1  e^{  t  \mu^0 }\right\rangle_0  =\frac{1}{2} \sum_{k\geq 0} \tr   \left( \alpha^0 \mu^1 \mu^1  ( \mu^0)^k  \right)     \]
where $\alpha^0 \in \PV^{0,\bullet}(X)$ and $(\mu^1, t \mu^0)\in \PV^{1,\bullet}(X) \oplus  t \PV^{0,\bullet}(X)$.
\end{ex}

\begin{ex}[Minimal BCOV theory for Calabi--Yau 3-folds]
One has the space of fields
\[\xymatrix @R=0.5em {
 \ul{-2} &   \ul{-1} &   \ul{0}&   \ul{1}&   \ul{2}\\
 \PV^{0,\bullet}(X) \\
& \PV^{1,\bullet}(X) \ar[r]^-{t\del } & t\PV^{0,\bullet}(X)\\
 &&\PV^{2,\bullet}(X)  \ar[r]^-{t\del }  & t\PV^{1,\bullet}(X) \ar[r]^-{t\del }  & t^2 \PV^{0,\bullet}(X)
}\]
with the action functional given by
\begin{align*}
 I_{\mr{mBCOV}} &=   \tr \left\langle  \left(\alpha^0 \mu^1 \eta^2 + \frac{1}{6} \mu^1\mu^1\mu^1  + \frac{1}{2} \alpha^0 \alpha^0 \eta^2 (t \eta^1) + \frac{1}{2} \alpha^0\alpha^0 \mu^1 \eta^2 (t^2 \eta^0) \right) e^{t \mu^ 0}   \right\rangle_0  \\
&=  \sum_{k\geq 0 } \tr   \left( \alpha^0 \mu^1 \eta^2 (\mu^0)^k+ \frac{1}{6}\mu^1\mu^1\mu^1   (\mu^0)^k \right)	 + \tr \left\langle  \left( \frac{1}{2} \alpha^0 \alpha^0 \eta^2 (t \eta^1) + \frac{1}{2} \alpha^0\alpha^0 \mu^1 \eta^2 (t^2 \eta^0) \right) e^{t \mu^ 0}   \right\rangle_0
\end{align*}
where $\alpha^0 \in \PV^{0,\bullet}(X)$, $(\mu^1, t \mu^0)\in \PV^{1,\bullet}(X) \oplus  t \PV^{0,\bullet}(X)$, and $(\eta^2, t \eta^1, t^2 \eta^0) \in \PV^{2,\bullet}(X) \oplus t \PV^{1,\bullet}(X) \oplus t^2 \PV^{0,\bullet}(X)$.
\end{ex}

Now, we would like to introduce modifications of minimal BCOV theory for $d\geq 2$. The idea is to consider $ \PV^{d,\bullet}(X)$ instead of $\left(\bigoplus_{i+ k=  d-1}t^{k}\PV^{i,\bullet}(X) , \ol\del +t\del \right)$ and the pullback of minimal BCOV theory along the natural cochain map
\[\del \colon  \PV^{d,\bullet}(X) \to\left(\xymatrix{\PV^{d-1,\bullet}(X) \ar[r] &  t \PV^{d-2,\bullet}(X) \ar[r] & \cdots \ar[r] & t^{d-1} \PV^{0,\bullet}(X) }  \right).\]
Note that $\PV^{d,\bullet}(X)$ should live in the same cohomological degree as $\PV^{d-1,\bullet}(X)$, whose parity is different from the usual one for an element of $\PV^{d,\bullet}(X)$.

To motivate the formal definition for the free part or shifted symplectic/Poisson structure for $d=2$, recall that if we have usual BV theory described by a shifted $L_\infty$-algebra $\mc E$ together with a $(-1)$-shifted symplectic pairing $\omega$, then the corresponding action functional would be
\[ S(\phi) = S_{\mr{free}}(\phi) + I(\phi) = \int \frac{1}{2} \omega( \phi, \ell_1 \phi  ) + \sum_{k\geq 2}\frac{1}{(k+1)!} \omega ( \phi, \ell_k(\phi,\phi,\cdots,\phi )  ), \qquad \phi\in\mc E. \]
For the free part, if we formally insert things with $\omega(-,-)$ understood as $\tr((-)\wedge \del^{-1}(-))$, then for the original (minimal) BCOV functional
\[ S_{\mr{free}}(\mu) =  \tr \left( \frac{1}{2}  \mu  \wedge  \ol{\del} \del^{-1}  \mu\right), \qquad \mu\in\mc E_{\mr{mBCOV}}(X) \]
its pullback becomes of the form
\[ S(\alpha,\beta) =   \tr \left(\alpha \wedge  \ol{\del} \beta  \right), \qquad \alpha \in \PV^{0,\bullet}(X),\ \beta\in \PV^{2,\bullet}(X) .\]
Hence we ``derived'' the following definition.

\begin{defn}
Let $X$ be a Calabi--Yau surface. \textit{Minimal BCOV theory on $X$ with potentials} $\wt{\mc E}_{\mr{mBCOV}}(X)$ is the $\Z/2$-graded BV theory with
\begin{itemize}
  \item underlying space of fields given by the sheaf of cochain complexes \[ \wt{\mc E}_{\rm {mBCOV}}(X) =\PV^{0,\bullet}(X)[2] \oplus \PV^{2,\bullet}(X)[1],\]
  \item odd symplectic structure given by the map $\tr$, and
  \item an action functional given by
 \[\wt{I}_{{\text{mBCOV}}}(\alpha,\beta) =\frac{1}{2} \tr \left( \alpha \del\beta \del\beta  \right)  \]
where $\alpha \in \PV^{0,\bullet}(X)$ and $\beta \in \PV^{2,\bullet}(X)$. 
\end{itemize}
\end{defn}

\begin{note}\label{note:2d Linfty description}
Since the theory is non-degenerate theory, one may equivalently encode the datum in terms of shifted $L_\infty$-structure. Then we have $\ell_{1}=\ol{\del}$, $\ell_{2} = [-,-]$, and $\ell_{n\geq 3} = 0$, where $[-,-]$ is defined in terms of the Schouten--Nijenhuis bracket as follows:
    \begin{enumerate}[label=(\arabic*)]
      \item For $\alpha\in \PV^{0,\bullet}(X)$ and $\beta\in\PV^{2,\bullet}(X)$, we define
        \begin{align*}
          [\alpha,\beta] &:= (-1)^{|\alpha|-1}\del[\alpha,\beta]_{\rm {SN}} &\in \PV^{0,\bullet}(X) \\
          [\beta,\alpha] &:= \del[\beta,\alpha]_{\rm {SN}} &\in \PV^{0,\bullet}(X)
        \end{align*}
where $|\cdot|$ denotes degree as an element in $\PV(X)$.
      \item For $\beta,\tilde\beta\in \PV^{2,\bullet}(X)$, we define
        \[[\beta,\tilde\beta] := (-1)^{|\beta| } [\del\beta, \tilde\beta  ]_{\mr{SN}} =  [\beta,\del\tilde\beta]_{\rm {SN}}\in\PV^{2,\bullet} (X)\]
        \item The bracket between elements of $\PV^{0,\bullet}(X)$ vanishes.
    \end{enumerate} 	
\end{note}

If we apply the same logic for a Calabi--Yau 3-fold, we obtain the following definition.
 
\begin{defn}
  Let $X$ be a Calabi--Yau 3-fold. \textit{Minimal BCOV theory on $X$ with potentials} $\wt{\mc E}_{\mr{mBCOV}}(X)$ is the Poisson degenerate BV theory with:
\begin{itemize}
\item underlying space of fields given by the sheaf of cochain complexes
 \[  \PV^{0,\bullet}(X)[2]\oplus (\PV^{1,\bullet}(X)[1]\to t\PV^{0,\bullet}(X))\oplus  \PV^{3,\bullet}(X) \]
\item shifted symplectic structure between $\PV^{0,\bullet}$ and $\PV^{3,\bullet}$ given by $\tr$ and shifted Poisson structure given by the kernel $(\del\otimes 1)\delta_{\mr{diag}}$ on $(\PV^{1,\bullet}(X)[1]\to t\PV^{0,\bullet}(X))$, and
\item an action functional given by
 \[\wt{I}_{{\text{mBCOV}}}(\alpha,\mu,t\nu, \gamma  )= \sum_{k\geq 0}  \tr \left( \alpha \mu  \del\gamma   \nu^k + \frac{1}{6} \mu\mu\mu  \nu^k   \right)  \]
where $\alpha \in \PV^{0,\bullet}(X)$, $(\mu,t\nu ) \in \PV^{1,\bullet}(X)\oplus t \PV^{0,\bullet}(X)$, and $\gamma\in \PV^{3,\bullet}(X)$. 
\end{itemize}
\end{defn}

It is obvious by construction that both functionals satisfy the classical master equation. We denote by the map from minimal BCOV theory with potential to minimal BCOV theory $\Phi \colon  \wt{\mc E}_{\mr{mBCOV}}(X)\to \mc E_{\mr{mBCOV}}(X) $.

\begin{rmk}\label{rmk:higher dim modified BCOV}

Even if $d>3$, the above algorithm gives a perfectly reasonable definition of meaningful Poisson degenerate BV theory. On the other hand, the definition doesn't deserve the name of minimal BCOV theory with potentials as we explain now.

To get a feel for the necessary ingredients for the definition deserving the name, it helps to identify the physical meaning of this modification. As explained in detail in Subsection \ref{subsub:table}, our replacement of $\left(\bigoplus_{i+ k=  d-1}t^{k}\PV^{i,\bullet}(X) , \ol\del +t\del \right)$ by  $\PV^{d,\bullet}$ amounts to a choice of a potential for certain Ramond--Ramond (RR) field strengths when $d=5$. This suggests that one should think of the minimal BCOV theory as twisted supergravity theory without any choice of RR forms and our definition as with a choice of a potential, hence the name. On the other hand, if $d>3$, then for the democratic formulation of superstring/supergravity theory, one needs to make more choices of potentials. For $d=2$, we made a choice of potential in degree 1, and for $d=3$, we did in degree 2. However, when $d=4,5$, one should make a choice of a potential in two summands, say in degree $d_1$ and in degree $d_2$ such that $d_1+d_2\neq d-1$. In the case of $d=5$, the last condition exactly corresponds to the fact that one cannot simultaneously make a choice of potentials for field strengths that are electro-magnetic dual to each other. On the contrary, for $d=4$, one can make such choices without any problem and obtain a non-degenerate $\Z/2$-graded BV theory as a result.
\end{rmk}

\subsubsection{An $\SL (2,\Z)$ Symmetry of BCOV Theory with Potential}\label{subsub:SL2}

The goal of this subsection is to argue that BCOV theory with potential on a Calabi--Yau 3-fold $X$ should admit an action of $\SL(2,\Z)$ in a model without any descendants. Exhibiting an action respecting an $L_\infty$-structure seems nontrivial so we hope to revisit the topic in the future. On the other hand, what we argue below is enough for all practical applications we have in mind.

Note that minimal BCOV theory with potential has underlying cochain complex \[ \PV^{3,\bullet}(X)\oplus (\PV^{1,\bullet}(X)[1]\to t\PV^{0,\bullet}(X))\oplus \PV^{0,\bullet}(X)[2].\]
with a nontrivial $L_\infty$-bracket. To remove descendants, we just take cohomology with respect to $t\del$ to replace $(\PV^{1,\bullet}(X)[1]\to t\PV^{0,\bullet}(X))$ by $\ker\del\subset \PV^{1,\bullet}(X)[1]$ to obtain
\[ \wt{\mc E}_{\mr{mBCOV}}^\circ :=   \PV^{3,\bullet}(X)\oplus (\ker \del \subset \PV^{1,\bullet}(X)[1] )\oplus \PV^{0,\bullet}(X)[2].\]
Then it becomes DG Lie algebra with $\ell_1=\ol{\del}$ and $\ell_2 = [-,-]$ defined by
\begin{enumerate}[label=(\arabic*)]
\item For $\alpha\in\PV^{0,\bullet}(X)$ and $\gamma\in \PV^{3,\bullet}(X)$, we define
\begin{align*}
 [\alpha,\gamma] &:=  (-1)^{|\alpha|-1} \del  [\alpha,\gamma]_{\mr{SN}}&\in \ker\del\subset  \PV^{1,\bullet}(X) \\
 [\gamma,\alpha] &: = \del [\gamma,\alpha]_{\mr{SN}} & \in \ker\del\subset \PV^{1,\bullet}(X)
\end{align*}
\item For $\mu\in \ker\del\subset \PV^{1,\bullet}(X)$ and $\gamma\in \PV^{3,\bullet}(X)$, we define
\begin{align*}
 [\mu,\gamma] & :=  (-1)^{|\mu|-1}[\mu,\gamma]_{\mr{SN}}& \in \PV^{3,\bullet}(X)\\	
 [\gamma,\mu ] & :=  [\gamma,\mu ]_{\mr{SN}}& \in \PV^{3,\bullet}(X)\end{align*}
\item All other brackets are the Schouten--Nijenhuis bracket.
\end{enumerate}

Now we define an action of $\SL(2,\Z)$ on $\wt{\mc E}_{\text{mBCOV}}^\circ $. The following definition is motivated by a construction in \cite{CostelloGaiotto}.

\begin{defn}\label{defn:twisted S-duality}
Let $(X,\Omega_X)$ be a Calabi--Yau 3-fold with a holomorphic volume form $\Omega_X$. Recall the presentation of $\SL (2, \Z)$ as \[\SL (2, \Z)=\left\langle S = \begin{pmatrix}0 & -1 \\ 1 & 0\end{pmatrix},\ T = \begin{pmatrix}1 & 1 \\ 0 & 1\end{pmatrix}\ \middle |\ S^4=1,\ (ST)^3=S^2 \right\rangle.\]  Denoting elements of $\operatorname{Aut}(\wt{\mc E}_{\rm{mBCOV}}^\circ )$ with block matrices of the form for a linear transformation on $\PV^{0,\bullet}(X) \oplus \left( \ker\del\subset  \PV^{1,\bullet}(X) \right) \oplus  \PV^{3,\bullet}(X)$, the action is given by \[ S \mapsto \begin{pmatrix} & & - (-)\vee\Omega_X \\ & \id & \\ (-)\wedge\Omega_X^{-1} & & \end{pmatrix}, \qquad  T\mapsto \begin{pmatrix}\id & & (-) \vee \Omega_X \\ & \id & \\ & & \id\end{pmatrix}\]
That is,
\begin{itemize}
\item Given $\alpha \in \PV^{0,\bullet}(X)$,
\[ S(  \alpha) =   \alpha \wedge  \Omega_X^{-1}  \in \PV^{3,\bullet}(X), \qquad T(  \alpha ) =  \alpha  \in \PV^{0,\bullet}(X); \]
\item Given $\mu\in  \ker\del\subset \PV^{1,\bullet}(X) $, 
\[ S( \mu)=\mu \in   \ker\del\subset  \PV^{1,\bullet}(X) , \qquad T( \mu)=\mu \in  \ker\del\subset  \PV^{1,\bullet}(X);\]
\item Given $ \gamma  \in  \PV^{3,\bullet}(X)$ 
\[  S( \gamma )=-\alpha  \in   \PV^{0,\bullet} (X), \qquad T( \gamma )= \gamma +\alpha   \in   \PV(X)\]
where  $\alpha$ is such that $\gamma= \alpha \wedge \Omega_X^{-1}$.
\end{itemize}
\end{defn}

In fact, this action is an action on a DG Lie algebra and extends to the one of $\SL(2,\Z)$:

\begin{thm}\label{thm:SL2 action}
Let  $(X,\Omega_X)$ be a Calabi--Yau 3-fold. Then $S$ and $T$ generate an action of $\SL (2, \Z)$ on the DG Lie algebra $\wt{\mc E}_{\mr{mBCOV}}^\circ (X) $. 
\end{thm}

\begin{proof}
First, we need to check  $S^4=\id$ and $(ST)^3=S^2$; the first is apparent and the second follows from noting $S|_{ \ker\del\subset \PV^{1,\bullet}(X) }=\id =T |_{ \ker\del\subset \PV^{1,\bullet}(X) }$ and the following chain
\[\xymatrix{
\alpha  \ar[r]^-{T } & \alpha  \ar[r]^-{S}  & \gamma \ar[r]^-{T } & \gamma + \alpha  \ar[r]^-{S}  & - \alpha + \gamma  \ar[r]^-{T } &  \gamma    \ar[r]^-{S}  & -\alpha 
}\]	
where $\alpha \in \PV^{0,\bullet}(X)$ and $\gamma= \alpha \wedge \Omega_X^{-1}\in \PV^{3,\bullet}(X) $.
Now we argue that $S$ and $T$ preserve the DG Lie structure on $\wt{\mc E}_\mr{mBCOV}^\circ (X)$. For $S$, we have the following cases;
\begin{itemize}
\item Let $\alpha \in \PV^{0,\bullet}(X)$ and $\tilde{\alpha}  \in \PV^{0,\bullet}(X)$. In this case, $S([\alpha ,\tilde{\alpha}])=0=  [S(\alpha ),S(\tilde{\alpha}) ]$. This also proves the claim for the bracket between $\gamma,\tilde\gamma \in \PV^{3,\bullet}(X)$.
\item Let $\alpha \in \PV^{0,\bullet}(X)$ and $\mu\in  \ker\del\subset  \PV^{1,\bullet}(X) $. Then 
\[ S \left( [\alpha ,\mu] \right) = S\left( [ \alpha ,\mu]_{\mr{SN}} \right)=  [\alpha ,\mu]_{\mr{SN}} \wedge \Omega_X^{-1} .\]
On the other hand, we have that
\[ [S (\alpha ),S(\mu)] = [   \alpha  \wedge \Omega_X^{-1} ,\mu]  = [ \alpha  \wedge \Omega_X^{-1}, \mu ]_{\mr{SN}}.\]
These two agree because $[-,\mu]_{\mr{SN}}$ intertwines the identification $(-)\wedge \Omega_X^{-1} \colon \PV^{0,\bullet}(X)\cong \PV^{3,\bullet}(X)$. This also proves the claim for the bracket between $\gamma\in \PV^{0,\bullet}(X)$ and $\mu\in  \ker\del\subset \PV^{1,\bullet}(X)$.

\item Let $\alpha \in \PV^{0,\bullet}(X)$ and $\gamma \in  \PV^{3,\bullet}(X)$, say $\gamma = \tilde{\alpha}\wedge \Omega_X^{-1}$ for $\tilde{\alpha}\in\PV^{0,\bullet}(X)$. Then
\[ S\left( [\alpha , \gamma  ]  \right) =   (-1)^{|\alpha |-1} S \left( \del[ \alpha  , \tilde{\alpha} \wedge \Omega_X^{-1} ]_{\mr{SN}} \right) = (-1)^{|\alpha |-1} \del[ \alpha  , \tilde{\alpha} \wedge \Omega_X^{-1} ]_{\mr{SN}} \]
On the other hand, we have \[[ S (\alpha ),S(\gamma )]  = [\alpha  \wedge \Omega_X^{-1} , -\tilde{\alpha}] =  - \del [ \alpha \wedge \Omega_X^{-1}, \tilde{\alpha} ]_{\mr{SN}} .\]
One can check that if $\del(\alpha \wedge \Omega_X^{-1})=0$, then both terms vanish. Suppose $\del(\alpha \wedge \Omega_X^{-1})\neq 0$. Then the first term is  
\begin{align*}
&(-1)^{|\alpha |-1} \del[ \alpha , \tilde{\alpha} \wedge \Omega_X^{-1} ]_{\mr{SN}}\\
&  = (-1)^{|\alpha |-1} (-1)^{|\alpha |-1} \del \left( \del (\alpha  \wedge \tilde{\alpha} \wedge \Omega_X^{-1}) - \del \alpha  \wedge \tilde{\alpha} \wedge \Omega_X^{-1} - (-1)^{|\alpha |} \alpha  \wedge  \del(\tilde{\alpha} \wedge \Omega_X^{-1})  \right) \\
& =  (-1)^{|\alpha |-1} \del \left( \alpha  \wedge  \del(\tilde{\alpha}\wedge \Omega_X^{-1}  ) \right)\ =\   (-1)^{|\alpha |-1} \del \left( \alpha  \wedge  \iota_{\Omega_X^{-1}}\del_{\mr{dR}}  \tilde{\alpha} \right) \  \\
& = - \del  \left(  \iota_{\Omega_X^{-1}} ( \alpha  \wedge  \del_{\mr{dR}}  \tilde{\alpha}  ) \right) \ =\  -\iota_{\Omega_X^{-1}} \left( \del_{\mr{dR}} ( \alpha  \wedge  \del_{\mr{dR}}  \tilde{\alpha} ) \right)  \ = \  -\iota_{\Omega_X^{-1}}  \left(\del_{\mr{dR}}  \alpha  \wedge  \del_{\mr{dR}}   \tilde{\alpha} \right)  
\end{align*}
where $\del_{\mr{dR}} $ is the holomorphic de Rham differential applied to $\Omega^{0,\bullet}(X)$, $\iota_{\Omega_X^{-1}} \colon  \Omega (X)\to \PV(X)$ is the contraction with $\Omega_X^{-1}$, and $\wedge$ stands for the wedge product of forms or polyvector fields. The second term is
\begin{align*}
 &   - \del [ \alpha \wedge \Omega_X^{-1}, \tilde{\alpha} ]_{\mr{SN}}\\
  & =- (-1)^{|\alpha \wedge  \Omega_X^{-1}| -1} \del \left( \del (\alpha \wedge \Omega_X^{-1}\wedge \tilde{\alpha}) - \del (\alpha \wedge \Omega_X^{-1})\wedge \tilde{\alpha} - (-1)^{|\alpha \wedge  \Omega_X^{-1}|} \alpha  \wedge \Omega_X^{-1} \wedge \del \tilde{\alpha}  \right) \\
  & = (-1)^{|\alpha |} \del \left( \del (\alpha \wedge \Omega_X^{-1})\wedge \tilde{\alpha} \right) \ = \ (-1)^{|\alpha |}   \del \left( ( \iota_{\Omega_X^{-1}}  \del_{\mr{dR}}   \alpha ) \wedge \tilde{\alpha} \right)  \\
  & = (-1)^{|\alpha |} \del \left(  \iota_{\Omega_X^{-1}} (\del_{\mr{dR}}   \alpha  \wedge \tilde\alpha ) \right) \ = \  (-1)^{|\alpha |}  \iota_{\Omega_X^{-1}}  \left( \del_{\mr{dR}}   (   \del_{\mr{dR}}   \alpha  \wedge \tilde{\alpha }) \right)  \ = \ - \iota_{\Omega_X^{-1}}   \left(  \del_{\mr{dR}}   \alpha  \wedge  \del_{\mr{dR}}  \tilde{\alpha } \right).
\end{align*}
where all the degrees are the usual polyvector degrees. Note that the two coincide as desired.
\end{itemize}

It remains to show that the action of $ T$ preserves the DG Lie algebra structure. We prove in a similar way as follows:
\begin{itemize}
\item Let $\alpha \in \PV^{0,\bullet}(X)$ and $\mu\in     \ker\del\subset \PV^{1,\bullet}(X)  $. Then 
\[ T \left( [\alpha ,\mu] \right) = T\left( [ \alpha ,\mu]_{\mr{SN}} \right)= [ \alpha ,\mu]_{\mr{SN}} =[T (\alpha ), T(\mu)]_{\mr{SN}} =[ T (\alpha ),T(\mu)]  . \]
\item Let $\alpha \in \PV^{0,\bullet}(X)$ and $\gamma\in  \PV^{3,\bullet}(X)$, say $\gamma= \tilde\alpha \wedge \Omega_X^{-1}$ for $\tilde\alpha \in\PV^{0,\bullet}(X)$.
\[ T \left(  [\alpha ,\gamma]  \right) = [\alpha ,\gamma] = [\alpha ,\gamma + \tilde\alpha  ]  =  [  T(\alpha ), T(\gamma) ]   \]
\item Let $\mu\in  \ker\del\subset   \PV^{1,\bullet}(X) $ and $\gamma\in  \PV^{3,\bullet}(X)$, say $\gamma=  \tilde\alpha  \wedge \Omega_X^{-1}$. Then note
\[T ( [\mu,\gamma] ) =  [\mu,\gamma]  +  \alpha  \]
where $[\mu,\gamma ]=  \alpha \wedge \Omega_X^{-1}$. On the other hand, we have
\[ [ T (\mu), T(\gamma)]  = [ \mu, \gamma  + \tilde\alpha  ]  = [\mu,\gamma  ]  + [\mu, \tilde\alpha ]_{\mr{SN}} \]
Now it remains to show $\alpha = [\mu, \tilde\alpha ]_{\mr{SN}} $, or equivalently, $[\mu, \tilde\alpha \wedge \Omega_X^{-1} ]  =  [\mu,\tilde\alpha  ]_{\mr{SN}} \wedge \Omega_X^{-1} $. Note
\begin{align*}
[\mu,  \tilde\alpha \wedge \Omega_X^{-1} ] &  = (-1)^{|\mu|-1}   [\mu, \tilde\alpha \wedge \Omega_X^{-1} ]_{\mr{SN}} = (-1)^{|\mu|-1}  (-1)^{ (|\mu|-1)|\tilde\alpha |  }   [ \tilde\alpha \wedge \Omega_X^{-1},\mu ]_{\mr{SN}}\\
&=  (-1)^{ (|\mu|-1)(|\tilde\alpha | -1)  }    [ \tilde\alpha ,\mu ]_{\mr{SN}} \wedge \Omega_X^{-1} \\
&  = (-1)^{ (|\mu|-1)(|\tilde\alpha | -1)  } (-1)^{(|\mu|-1)(|\tilde\alpha |-1) }  [ \mu,\tilde\alpha  ]_{\mr{SN}} \wedge \Omega_X^{-1}  \\
&=    [ \mu,\tilde\alpha  ]_{\mr{SN}} \wedge \Omega_X^{-1}  
\end{align*}
where we again used the fact that $[-,\mu]_{\mr{SN}}$ intertwines the identification $(-)\wedge \Omega_X^{-1} \colon \PV^{0,\bullet}(X)\cong \PV^{3,\bullet}(X)$.
\end{itemize}
\end{proof}

\begin{rmk}
Note that we may in fact define an action of all of $\SL(2,\C)$ -- under the identification of minimal BCOV theory with potentials with a twist of type $\IIB$ supergravity established in the next section, this is a complexification of the $\SL(2, \R)$ duality group of perturbative type $\IIB$ supergravity. The fields acted on nontrivially are describing components of the B-field and Ramond--Ramond 2-form of type $\IIB$ that survive the twist. Nonperturbatively, this is broken to $\SL(2, \Z)$, due to charge quantization effects. A natural way to incorporate these effects would be to replace those fields, with a suitable holomorphic Deligne complex; see \cite{SaberiWilliamsConstraints} for a related discussion. We hope to return to these points elsewhere. 
\end{rmk}

\section{Twisted Supergravity}\label{section:SUGRA}

Having introduced topological string theory, we wish to argue that it indeed describes protected sectors of the physical superstring theory. Ideally, there would be a twisting procedure that takes in the physical superstring theory as an input and produces the topological string theory we work with as an output. Unfortunately, a rigorous treatment of physical string theory and a mathematical codification of such a procedure is currently out of reach. 

On the other hand, Costello and Li \cite{CLSugra} define a class of so-called \textit{twisted supergravity backgrounds} that conjecturally have the feature that the fields of supergravity in perturbation theory around such a background map to the closed string field theory of a topological string theory.

Following the work of Costello and Li, we explain the twisting procedure for supergravity in Subsection \ref{app:sugra} and moreover provide operational descriptions of twists of supergravity in Subsection \ref{sub:closed SFT}. In particular, closed string field theory of topological string theory contains twisted supergravity as discussed in more detail in the latter.

Although we explain several points in a way that has not appeared in the literature before, much of the idea was already present in \cite{CLSugra}; the reader may find the original paper a useful supplement.

\subsection{Constructing Twisted Supergravity}\label{app:sugra}

One of the main utilities of string theory is that the low-energy dynamics of closed strings describe the dynamics of gravitons. Analogously, topological closed string field theory contains a version of gravity that governs a particular subclass of metric deformations present in physical gravity. For those topological string theories that conjecturally describe twists of superstrings, this version of gravity is known as twisted supergravity.

In this subsection, we review the construction of twisted supergravity following \cite{CLSugra}, with the goal of motivating the conjectural definitions and discussions in the next subsection. We also briefly describe the relationship between the twisting of supersymmetric field theories and supergravity, touching on the role that twisting homomorphisms for the former play in the latter context. This subsection is largely independent of the rest of the paper, and readers who are interested primarily in mathematical applications of our work may wish to skip ahead to the next subsection.

\subsubsection{Type IIB Supergravity}\label{subsub:IIB}

The construction of twisted supergravity uses a description of type II supergravity in the BV-BRST formalism. Such a description of the full theory is both unwieldy and excessive; here we will give a partial description of the theory that includes the relevant ingredients for describing the twisting procedure. For concreteness, we will discuss the construction in the setting of type IIB supergravity on $\R^{10}$ -- the construction for type IIA supergravity is completely analogous and the generalization to an arbitrary 10-manifold is straightforward.

We will work with supergravity in the first-order formalism. Roughly speaking, one may think of the theory as a gauge theory for a 10-dimensional supersymmetry algebra. Let us begin by recalling the definition of the relevant supersymmetry algebra.

We first fix some notation. Note that the Hodge star operator acting on $\Omega^5(\R^{10})=\Omega^5(\bb R^{10};\bb C)$ squares to $-1$; we will use a subscript of $\pm$ to denote the $\pm i$-eigenspace of this action. Additionally, we decorate a space of forms with the subscript $\cc$ to denote the space of such forms with constant coefficients. Recall that the Lie algebra $\so(10,\C)$ has two irreducible spin representations $S_\pm$, each of complex dimension 16. Furthermore, we have an isomorphism of $\so (10, \C)$-representations $\sym^2 S_+\cong \C^{10}\oplus \Omega^5_{+,\cc}(\R^{10})$; let $\Gamma^+\colon \sym^2 S_+ \to\C^{10}$ denote the projection.

\begin{defn}

The \emph{10-dimensional $\mc{N}=(2,0)$ super-translation algebra} is the Lie superalgebra with underlying $\Z/2$-graded vector space
\[\mc{T}^{(2,0)} := \C^{10}\oplus \Pi (S_+\otimes\C^2)\]
and bracket given as follows. Choose an inner product $\langle -,-\rangle$ on $\C^2$ and let $\{e_1,e_2\}$ denote an orthonormal basis. The bracket on odd elements is given by \[[\psi_1\otimes\alpha,\psi_2\otimes\beta] = \Gamma^+(\psi_1,\psi_2)\langle\alpha,\beta\rangle.\] The \emph{10-dimensional $ \mc{N}=(2,0)$ supersymmetry algebra} is the Lie superalgebra given by the semidirect product \[\siso_{\IIB}:=\mf{so}(10,\C)\ltimes\mc{T}^{(2,0)}.\]

\end{defn}

As remarked above, we will work in the first-order formalism, where supergravity is a theory with fundamental field a $\siso_{\IIB}$-valued connection \cite{CDP}. The idea of describing gravity in this way may be unfamiliar, so let us first recall how this works in ordinary Einstein gravity. In the first-order formalism for (Euclidean) Einstein gravity on $\R^4$, the fundamental field is  $A\in\Omega^1(\R^4)\otimes\siso (4)$ where $\siso(4)=\mf{so}(4)\ltimes \R^4$ denotes the Poincar\'e Lie algebra. Decomposing this into components, we find 

\begin{itemize}

\item The component $e\in\Omega^{1}(\bb R^4)\otimes\bb R^4$ is the \textit{vielbein} and encodes the metric as $g=  (e \otimes e) $ where we used the standard inner product $(-,-)$ on $\bb R^4$;

\item The component $\Omega\in\Omega^1(\bb R^4)\otimes \mf{so}(4)$ is the \textit{spin connection}.

\end{itemize}

The action of the theory takes the form $S(e,\Omega)=\int_{\bb R^4} e\wedge e\wedge  F_\Omega$ where $F_\Omega$ denotes the curvature of $\Omega$. Note that it is of first order -- hence the name of the formulation.

Returning to the supergravity setting, the fundamental field is a $\siso_{\IIB}$-valued connection, which we may locally express as a $\siso_{\IIB}$-valued 1-form and decompose into components. This yields the following fields:

\begin{itemize}

\item The component $E\in\Omega^1(\R^{10})\otimes \C^{10}$ is the vielbein which encodes the metric as above;

\item The component $\Omega\in\Omega^1(\R^{10})\otimes \mf{so}(10,\bb C)$ is the \textit{spin connection};

\item The component $\Psi\in\Omega^1(\R^{10})\otimes \Pi (S_+\otimes \C^2)$ is the \textit{gravitino}.

\end{itemize}

The theory also includes other fields such as the B-field that we have not included here \cite{CDP}.

Note that we have an action of the Lie algebra $C^\infty(\R^{10},\siso_{\IIB})$ on the above space of fields. We wish to treat the theory in the BV-BRST formalism, and to do so, we take a homotopy quotient of the space of fields by the action of $C^\infty(\R^{10},\siso_{\IIB})$ and then take the $(-1)$-shifted cotangent bundle. The resulting extended space of fields is $\Z\times \Z/2$-graded and contains the following

\begin{center}

\begin{tabular}{|c|c|c|c|c|}	

\hline & $-1$ & 0 & 1  & 2\\

\hline

even & $\Omega^0(\bb R^{10})\otimes \bb C^{10}$ & $\Omega^1(\bb R^{10})\otimes\bb C^{10} $&  $\Omega^{9}(\bb R^{10})\otimes \bb C^{10}$ &  $\Omega^{10}(\bb R^{10})\otimes \bb C^{10}$  \\

even &  $\Omega^0(\bb R^{10})\otimes \mf{so}(10,\bb C)$ & $\Omega^1(\bb R^{10})\otimes \mf{so}(10,\bb C)$ &  $\Omega^9(\bb R^{10})\otimes \mf{so}(10,\bb C)$& $\Omega^{10}(\bb R^{10})\otimes \mf{so}(10,\bb C) $ \\

odd & $\Omega^0(\bb R^{10})\otimes \Pi S$ & $\Omega^1(\bb R^{10})\otimes\Pi S$ &  $\Omega^{9}(\bb R^{10})\otimes \Pi S$ & $\Omega^{10}(\bb R^{10})\otimes \Pi S$  \\

\hline 

\end{tabular}

\end{center}
where we have put $S = S_+\otimes \bb C^2$, the $\bb Z$-grading is listed horizontally, and the $\bb Z/2$-grading vertically. We emphasize that this is a partial description of the theory that just includes fields needed in our construction. For some more details on this topic in the BV formalism, we refer the reader to \cite[Section 12.2]{ESW}.

We also use the following terminologies:
\begin{itemize}

\item The \textit{bosonic ghost} is the field $q\in C^\infty(\R^{10}, \Pi S)$. This field has bidegree $(-1,1)$ with respect to the $\Z\times \Z/2$-grading and will play a central role in the construction of twisted supergravity in the next subsection. 

\item The \textit{ghost for diffeomorphisms} is the field $V\in\Omega^0(\bb R^{10})\otimes \bb C^{10}$ and $V^*\in \Omega^{10}(\bb R^{10})\otimes\bb C^{10}$ denotes its antifield. 

\item The field $\Psi^*\in \Omega^{9}(\bb R^{10})\otimes\Pi S$ is the antifield to the gravitino.

\end{itemize}
On a curved spacetime $(M^{10}, g)$ where $g$ satisfies the supergravity equations of motion (i.e. Ricci flat), each entry above should be replaced with forms valued in an appropriate bundle on $M$.

\begin{note}\label{gravity not gauge}
There is a nontrivial bracket between the diffeomorphism ghosts. If we replace $\bb R^{10}$ with a more general manifold, the diffeomorphism ghosts are going to be vector fields. The existence of this nontrivial bracket gives a sense in which first-order gravity is not a gauge theory, at least at face value. On the other hand, we are still going to colloquially refer to the action of $C^\infty(M; \mf{siso}_{\IIB} )$ as gauge transformations.	
\end{note}

In addition to the bracket mentioned in the remark, the action includes the following terms:

\begin{itemize}

\item $\int_{\bb R^{10}} V^*[q,q]$, where  $[-,-]$ denotes the bracket of $\siso_{\IIB}$ extended $C^\infty(\bb R^{10})$-linearly.

\item $\int_{\bb R^{10}}\Psi^*\nabla_g q$, where $\nabla_g$ is a metric connection on the trivial spinor bundle.

\end{itemize}

Note that varying the action functional with respect to $V^*$ yields the equation of motion $[q,q]=0$. Therefore, it makes sense to take $q$-cohomology. Further, varying with respect to $\Psi^*$ yields the equation of motion $\nabla_g q=0$ so the bosonic ghost must be covariantly constant. Below we will use the subscript ``cov'' to refer to being covariantly constant on a possibly curved spacetime $(M,g)$.

\subsubsection{Twisting Supergravity}\label{subsub:twising analogy}

Let us now describe the construction of twisted supergravity. Afterwards, we describe some analogies with phenomena in supersymmetric and non-supersymmetric gauge theory to help orient the readers.

\begin{defn}

\textit{Twisted supergravity} on $M=M^{n}$ by a supercharge $Q\in \Pi S$ is supergravity in perturbation theory around a solution to the equations of motion where the bosonic ghost $q$ is required to be the covariantly constant scalar $d_Q\in C^\infty_{\mr{cov}}(M; \Pi S )[1]$. We say the twist is an \emph{$H$-invariant twist} if $Q\in \Pi S$ is invariant under $H\subset \Spin(n)$.

\end{defn}

That is, twisting supergravity simply amounts to choosing a particular vacuum around which to do perturbation theory. In that regard, it may be helpful to think of the following analogy with choosing a vacuum on the Coulomb branch of a supersymmetric gauge theory:

\begin{center}

\begin{tabular}{|c|c|}

\hline

supersymmetric gauge theory & supergravity \\

\hline 

$G$ gauge group & $\mf g=\siso_{\IIB}$ supersymmetry algebra\\

$\phi\in C^\infty(M,\mf g) $ scalar in vector multiplet  & $q\in C^\infty(M;\Pi S)[1] $ bosonic ghost \\

\hline 

\textbf{Coulomb branch} & \textbf{twisted supergravity} \\

$\phi_0\in \mf g$ or $\phi_0\in C^\infty_{\mr{flat}}(M,\mf g)$  & $Q\in \Pi S \subset  \mc \siso_{\IIB}$ or $d_Q \in C^\infty_{\mr{cov}}(M; \Pi S )[1]$   \\

ask $\phi=\phi_0$ & ask $q= d_Q$ \\

\hline 

broken gauge group  & broken SUSY algebra \\

 $\mr{Stab}_G(\phi_0)$  & $\mr{Stab}_{\siso_{\IIB}}(Q) := H^\bullet(\siso_{\IIB},Q)$\\

\hline 

\end{tabular}

\end{center}

Here the subscript ``flat'' means that we take flat sections of a connection, on the background $G$-bundle, induced by the metric on $M$.

It is not essential that the gauge theory be supersymmetric to make the above analogy. To clarify this, let us provide a different analogy:

\begin{center}

\begin{tabular}{|c|c|}

\hline

gauge theory with super gauge group & supergravity \\

\hline 

$G=\GL(N|N)$ gauge group & $\mf g=\siso_{\IIB}$ supersymmetry algebra\\

$\phi\in C^\infty(M,\Pi \mf{gl}(N))[1]$ bosonic ghost  & $q\in C^\infty(M;\Pi S)[1]$ bosonic ghost \\

\hline 

\textbf{twisted gauge theory} & \textbf{twisted supergravity} \\

$\phi_0\in \Pi  \mf{gl}(N) \subset \mf{gl}(N|N)$ or $\phi_0\in C^\infty_{\mr{flat}}(M,\Pi  \mf{gl}(N) )$  & $Q\in \Pi S \subset  \siso_{\IIB}$ or   $d_Q \in C^\infty_{\mr{cov}}(M; \Pi S )[1]$  \\

ask $\phi=\phi_0$ & ask $q= d_Q$ \\

\hline 

broken gauge group  & broken SUSY algebra\\

 $\mr{Stab}_G(\phi_0)$ & $\mr{Stab}_{\siso_{\IIB}}(Q) := H^\bullet(\siso_{\IIB},Q)$ \\

\hline 

\end{tabular}

\end{center}

Note that one cannot assign a non-zero vacuum expectation value to a fermionic element. However, if we have a fermionic component of an algebra, then its ghost is fermionic in the cohomological grading as well. This gives a bosonic ghost which can admit a non-zero vacuum expectation value so we can ask a field to be at that vacuum. 

The upshot is that a twist of supergravity by $d_Q$ has residual supersymmetry action of $H^\bullet(\siso_{\IIB},Q)$. Moreover, by construction of supergravity theory, $H^\bullet(\siso_{\IIB},Q)$ should arise as fields of twisted supergravity. We intend to revisit this idea in future work.

\subsubsection{Gravitational Backgrounds from Twisting Homomorphisms}\label{subsub:backgrounds}

In this subsection we wish to relate twisted supergravity with the familiar procedure for twisting supersymmetric field theories. The main claim is that from a twisting homomorphism of a supersymmetric field theory, one can construct a twisted supergravity background such that coupling the supersymmetric field theory to the given background has the effect of twisting the supersymmetric field theory. That a world-volume theory of D-branes on a curved (non-twisted) supergravity background should be twisted was already known \cite{BSV}.

Let us begin by outlining the general procedure. Given a square zero supercharge $Q$ for a supersymmetric field theory in dimension $n$, let us consider the stabilizer subgroup $G(Q):=\mr{Stab}_{\Spin(n)\times G_R }(Q)$ of $\Spin(n)\times G_R$, that is, the largest subgroup for which $Q$ is invariant. In practice, there exists a  subgroup $H\subset  \Spin(n)$ and a homomorphism $\rho \colon H\to G_R$ such that its graph is the group $G(Q)$. In this case, we call $\rho$ a twisting homomorphism. The theory obtained by twisting with $Q$ can then be defined on manifolds whose structure group is contained in $H$.

Now, let us restrict to the setting of world-volume theories of D-branes on type IIB string theory, namely, those with maximal supersymmetry. In this case, it is known that as we consider $D_{2k-1}$-brane on $M^n=M^{2k}$, the R-symmetry group is $G_R=\Spin(10-2k)$. Suppose the structure group of $M$ is contained in $H$ and let $V$ denote the vector representation of the R-symmetry group $G_R$. We construct a supergravity background with the data of $(M, \rho, V)$ as follows. Fixing a spin structure on $M$ and using the assumption that the structure group of $M$ is $H$, let $P$ denote the $H$-reduction of the spin frame bundle of $M$. The claim is that the desired supergravity background is the 10-manifold $X = \operatorname{Tot}(P\times_H  V)$, the total space of the $V$-bundle associated to $P$ via $\rho$, together with a Calabi--Yau metric and a covariantly constant spinor. The existence of covariantly constant spinors is guaranteed by the Calabi--Yau structure if it exists, but as far as we are aware, it must be shown on a case-by-case basis that such a 10-manifold is Calabi--Yau. Finally, $X$ should be completed along $M$ to yield the background $X^\wedge$, as D-branes do not know the global geometry.

We now carry out this procedure for the case of the Kapustin (or holomorphic-topological) twist of 4-dimensional $\mc N =4$ gauge theory to identify the supergravity background. The twist will be invariant under a subgroup $\Spin(2)\times \Spin(2)\hookrightarrow \Spin(4)$; we identify this subgroup as $H=\U(1)\times \U(1)$ and think of a spin representation of $\U(1)$ to be of weight $\frac{1}{2}$. Then the Kapustin twisting homomorphism is given by 
\[\rho \colon \U(1)\times \U(1) \to \SU(4),\qquad  (\lambda, \mu)\mapsto \begin{pmatrix}\lambda^{1/2}\mu^{1/2} & & & \\ & \lambda^{-1/2}\mu^{1/2} & & \\ & & \lambda^{1/2}\mu^{-1/2} & \\ & & & \lambda^{-1/2}\mu^{-1/2}\end{pmatrix}\]

In the notation of our general procedure above, we have $H= \U(1)\times \U(1)$ so our twisted theory can be formulated on $M=\Sigma \times C$ where the $\Sigma$ and $C$ are Riemann surfaces. Here, $V$ is the vector representation of $\Spin(6)$. Note that its complexification $V_{\C}$, under the isomorphism $\Spin(6,\C)\cong\SL(4,\C)$, is isomorphic to $\wedge^2 \C^4$ where $\C^4$ is the fundamental representation of $\SL(4,\C)$. To determine $X$ let us first decompose $V_{\C}$ as a representation of $\im\rho$. By the functoriality of restriction, we have that
\begin{align*}
\operatorname{Res}_{\SL(4,\bb C)}^{\im \rho}\wedge^2 \bb C^4& \cong \wedge^2\operatorname{Res}_{\SL(4, \C)}^{\im\rho} \C^4 \\
 & \cong \wedge^2 (\C_{(\frac{1}{2},\frac{1}{2})}\oplus\C_{(-\frac{1}{2},\frac{1}{2})}\oplus\C_{(\frac{1}{2},-\frac{1}{2})}\oplus\C_{(-\frac{1}{2},-\frac{1}{2})})\\
 &\cong \C_{(0,1)}\oplus\C_{(1,0)}\oplus \C_{(0,0)}\oplus \C_{(0,0)}\oplus \C_{(-1,0)}\oplus \C_{(0,-1)}
\end{align*}

The representation $\wedge^2\C^4$ admits a real structure given by the Hodge star operator $\star$ which is a complex conjugate linear map satisfying $\star^2=\id$ on $\wedge^2 \C^4$. If we write a complex basis of $\C^4$ as $e_1,e_2,e_3,e_4$, then the six real basis elements are given by
\[\begin{cases}
e_ 1 \wedge e_2 + e_3 \wedge e_4 \\
\sqrt{-1}(e_ 1 \wedge e_2 - e_3 \wedge e_4)
\end{cases}\qquad 
\begin{cases}
e_1 \wedge e_3 + e_4 \wedge e_2\\
\sqrt{-1}(e_ 1 \wedge e_3 - e_4 \wedge e_2)
\end{cases}\qquad 
\begin{cases}
e_1 \wedge e_4 + e_2 \wedge e_3\\
\sqrt{-1}(e_1 \wedge e_4 - e_2 \wedge e_3 )
\end{cases}\]

Now from the $\U(1)$-weights of $e_i$, one can see that the left two pairs of them make vector representations while the third pair is the trivial representation. Choosing a spin structure on each of $\Sigma$ and $C$ and letting $P$ denote  $K^{1/2}_{\Sigma}\oplus K^{1/2}_{C}$, we have that $X = T^*(\Sigma\times C)\times \C$. Let $X^{\wedge}$ denote $T^*_{\text{form}}(\Sigma \times C)\times \C$ where the subscript is used to denote the formal neighborhood of the zero section. Since $\Sigma \times C $ is K\"ahler, a result of Kaledin \cite{Kaledin} gives that $T^*_{\text{form}}(\Sigma \times C)$ is hyperk\"ahler, and hence Calabi--Yau.

Then $X^{\wedge}$ evidently is Calabi--Yau as well. Thus, we have constructed the desired gravitational background. As a consistency check, one observes that sections of the normal bundle of $\Sigma \times C$ in $X$ have spins that agree with those of the six adjoint scalars of 4-dimensional $\mc N = 4$ gauge theory after performing the Kapustin twist, that is, four 1-forms and two scalars.

Finally, let us discuss the effect of twisting datum, which is used to equip the twisted supersymmetric field theory with a $\Z$-grading. The twisting datum of interest is $\alpha \colon  \U(1)\to \SU(4)$ given by $\lambda \mapsto \text{diag}(\lambda ,\lambda^{-1},\lambda^{-1},\lambda  )$. Introducing the weight under the image as a superscript, now $\wedge^2 \C^4$ as a representation of $H\times \U(1)$ is given by
\begin{align*}
\wedge^2 \C^4 &\cong  \wedge^2 (\C^{(1)}_{(\frac{1}{2},\frac{1}{2})}\oplus\C^{(-1)}_{(-\frac{1}{2},\frac{1}{2})}\oplus\C^{(-1)}_{(\frac{1}{2},-\frac{1}{2})}\oplus\C^{(1)}_{(-\frac{1}{2},-\frac{1}{2})})\\
 &\cong \C^{(0)}_{(0,1)}\oplus\C^{(0)}_{(1,0)}\oplus \C^{(2)}_{(0,0)}\oplus \C^{(-2)}_{(0,0)}\oplus \C^{(0)}_{(-1,0)}\oplus \C^{(0)}_{(0,-1)}	
\end{align*}
This suggests that the cotangent direction of $T^*(\Sigma \times C)$ doesn't acquire a nontrivial cohomological degree but the constant direction does. In sum, the supergravity background which yields Kapustin twist of 4d $\mc N=4$ gauge theory on D3 branes as a $\Z$-graded theory has to be $X=T^*(\Sigma \times C)\times \C[2]$. The nontrivial cohomological shift in $\C[2]$ is essential for its application to geometric Langlands theory, specifically to capture the singular support condition (see  \cite{EY2}). 

\subsection{Describing Twisted Supergravity} \label{sub:closed SFT}

\subsubsection{Definition of Closed String Field Theory and Twisted Supergravity}

In the previous subsection, we discussed how one constructs twisted supergravity in some generality with the focus on twists of type IIB theory. However, given the complexity of 10-dimensional supergravity theories, working out a description of the twisted theory from first principles should be regarded as a difficult research question in and of itself. Instead, we take a conjectural description of Costello and Li \cite{CLSugra} as our starting point and make the following definitions of twisted closed string field theory and twisted supergravity.

\begin{defn} \hfill
\begin{itemize}
	\item The closed string field theory for the $\SU(5)$-invariant twist of type IIB superstring theory on a Calabi--Yau 5-fold $X^5$ is $\IIB_{\mr{cl}}[X^5_B]: = \mc E_{\mr{BCOV}}(X^5) $.
	\item The closed string field theory for the $\SU(3)$-invariant twist of type IIB superstring theory on $T^*N\times X^3$, where $N$ is a 2-manifold and $X^3$ is a Calabi--Yau 3-fold, is $\IIB_{\mr{cl}}[ (T^*N)_A \times X^3_B]: = C_\bullet(LN)^{S^1}\otimes_{H^\bullet (BS^1)} \mc E_{\mr{BCOV}}(X^3) $. Here $LN$ is the free loop space of $N$ (see Note \ref{note:onlycotangent}).
	\item The closed string field theory for the $\SU(4)$-invariant twist of type IIA superstring theory on $T^*N \times X^4$, where $N$ is a 1-manifold and $X^4$ is a Calabi--Yau 4-fold, is $\IIA_{\mr{cl}}[ (T^*N) _A\times X^4_B]:=C_\bullet( L N)^{S^1}\otimes_{H^\bullet (BS^1)} \mc E_{\mr{BCOV}}(X^4)$. 
 	\item The closed string field theory for the $\SU(2)$-invariant twist of type IIA superstring theory on $T^*N\times X^2$, where $N$ is a 3-manifold and $X^2$ is a Calabi--Yau surface, is $\IIB_{\mr{cl}}[ (T^*N)_A \times X^2_B]: = C_\bullet(LN)^{S^1}\otimes_{H^\bullet (BS^1)} \mc E_{\mr{BCOV}}(X^2) $.

\end{itemize}
\end{defn}		

\begin{note}
The twist of type II superstring theory on a 10-manifold $(T^*N)_A\times X_B$ gives $C_\bullet( L N)^{S^1}\otimes_{H^\bullet (BS^1)} \mc E_{\mr{BCOV}}(X)$ as long as $N$ is a symplectic manifold and $X$ is a Calabi--Yau manifold, but it does depend on whether it is of type IIA or IIB to see which backgrounds are allowed.	
\end{note}

\begin{rmk}\hfill 
\begin{itemize}
	\item These definitions when the spacetime manifolds are flat are provided as a conjectural description of a twist of string theory in \cite{CLSugra}. At the moment, it seems impossible to make a precise mathematical definition of string theory even in that generality and hence the conjecture isn't mathematically posed. Therefore, we decide to take a generalization of their conjectures as definitions and hence the starting point of our mathematical discussion.

\item On the other hand, one may consider the closed string field theory for an even more general background. As explained in Subsection \ref{subsub:closed}, nominally, one would hope to recover these as $\cyc^\bullet(\mc{C})[2]$ for a topological string theory described by the Calabi--Yau category $\mc{C}$, that is, the cyclic $L_\infty$-algebra structure on $\cyc^\bullet(\mc{C})[1]$ from the Calabi--Yau structure together with a shifted Poisson structure. Indeed, the above definitions are all models for $\cyc^\bullet(\fuk_W(T^*N)\otimes \coh (X))$. On the other hand, on a space not of the form $M=T^*N$, there is no local model for the A-model that captures the non-perturbative effects. Because we don't need such a general situation in the following discussion, we are content with the above definition.
\end{itemize}
\end{rmk}

Moreover, supergravity is supposed to be a theory of low-energy limit of closed string field theory where we don't see non-propagating fields in the B-model and non-perturbative information in the A-model. This suggests the following definitions which were also stated as conjectures in \cite{CLSugra} (modulo our modification of using $\wt{\mc E}_{\mr{mBCOV}}$ instead of $\mc E_{\mr{mBCOV}}$). Note that we discussed in Section \ref{app:sugra} the constructions underlying these conjectural definitions.

\begin{defn}\label{twistedsugradefn} \hfill
\begin{itemize}
\item Twists of type IIB supergravity on $M_A\times X_B$ for a symplectic $(8-4n)$-manifold $M$ and a Calabi--Yau $(2n+1)$-fold $X$ are of the form $\IIB_{\mr{SUGRA}}[M_A\times X_B]:= (\Omega^\bullet(M),d) \otimes \wt{\mc E}_{\mr{mBCOV}}(X) $.
\item Twists of type IIA supergravity on $M_A \times X_B$ for a symplectic $(10-4n)$-manifold $M$ and a Calabi--Yau $2n$-fold $X$ are of the form $\IIA_{\mr{SUGRA}}[M_A\times X_B]:=(\Omega^\bullet(M),d)\otimes \wt{\mc E}_{\mr{mBCOV}}(X)$.
\end{itemize}
We will sometimes refer to the twist of IIB on $X_B^5$ as \textit{minimal twists}.
\end{defn}

\begin{rmk}
As the second version of the present paper was being written, several checks of the above ``conjectures'' have been performed. We summarize them here. First, the free limit of the $\SU(5)$ and $\SU(4)$ twists of type $\IIB$ and $\IIA$ supergravity on flat space have been computed using pure spinor superfield techniques in \cite{SaberiWilliamsSUGRA}. The answers there-in differ from our conjectural definitions by copies of the constant sheaf on the holomorphic directions. The interactions of the $\SU(4)$ twist of type $\IIA$ can also be computed via dimensional reduction of the minimal twist of eleven-dimensional supergravity \cite{RaghavendranSaberiWilliams} -- see Remark \ref{rmk:hol-red} for more. Taking into account these extra copies of the constant sheaf, one sees that the interacting theory has a description where $\wt{\mc E}_{\mr{mBCOV}}$ is replaced with a different potential theory. Such potential theories are studied in generality in a paper by the authors \cite{RaghavendranYoo24}.
\end{rmk}

\begin{rmk}\label{rmk:sugratostring}
We anticipate that for a given twist of type IIA/B supergravity, there should be a map of field theories to the closed string field theory for the corresponding twist of type IIA/B superstring theory. This map is clear in cases where the A-model directions are flat -- it is induced from the natural map $\widetilde {\mc E}_{\mr{mBCOV}} (X) \to \mc E_{\mr{BCOV}}(X)$. However, in cases where closed string field theory contains non-flat A-model directions, defining this map seems more subtle and contingent on, for example, a de Rham model for the $S^1$ equivariant string topology of $N$. In Subsection \ref{Tdualitysugra} below, we will describe such a map in the case where $N = \R^2 \times S^1$, after restricting to zero modes along $S^1$. This particular case will be enough for our current purposes, and we leave the general case for future work.
\end{rmk}

\begin{rmk}
From the above conjectural descriptions, the main difference between twists of closed string field theory and supergravity theory is exactly given by background fields of the BCOV theory. Then it is a natural question to ask how to interpret those backgrounds fields. Partly motivated by this question, in a paper of the second author with W. He, S. Li, and X. Tang \cite{HLTY}, it is suggested that those background fields should be understood as symmetry algebra in the BV framework. In some special case, the corresponding current observables turn out to yield infinitely many mutually commuting Hamiltonians of a dispersionless integrable hierarchy.
\end{rmk} 

From now on, we consider the $\SL(2,\Z)$ action on $\wt{\mc E}_{\mr{mBCOV}}(X)$ that is extended to an $\SL(2,\Z)$ action on $\IIB_{\mr{SUGRA}}[M_A\times X_B]$ in a way that is linear over $\Omega^\bullet(M)$. We denote the endomorphisms corresponding to the generators $S, T\in \SL(2,\Z)$ by $\bb S$, $\bb T$ respectively, and refer to those endomorphisms as \textit{twisted S-duality}.

\subsubsection{Comparison with Physical Supergravity Theory}\label{subsub:table}

Aspects of the above construction may be evocative of a familiar feature from other theories with (higher) gauge fields such as Maxwell theory or physical type II supergravity theories. In such theories, one may consider charged objects as we have done here; such objects source a flux, or field strength. In this subsection, which is independent of other parts of the paper, we explain the analogy with physical supergravity theory and in particular introduce RR forms and RR field strengths in the context of twisted supergravity. 

The analogy is summarized in the following table:

\begin{center}
\begin{tabular}{|c|c|c|c|}
\hline
 & Maxwell theory  & type IIB supergravity  & \multirow{2}{*}{$\IIB_{\mr {SUGRA}}[\C^5_B]$} \\
 & on $M^4$ & on $X^{10}$ &    \\
 \hline \hline 
 extended objects & particle of charge $q$  & $N$ $D_{2k-1}$-branes  & $N$ $D_{2k-1}$-branes \\
on worldvolume & $W\subset M^4$ & $Y \subset X^{10}$ & $\C^k \subset \C^5$\\
 \hline 
 \multirow{4}{*}{charged under} & gauge field   & RR $2k$-form  & RR $2k$-form \\
 & \multirow{3}{*}{$A\in\Omega^1(M^4)$} & \multirow{3}{*}{$C^{(2k)}\in \Omega^{2k}(X^{10})$}  & $\mc C^{(2k)} \in \Omega^{k,k}(\C^5)$ where \\
 && & $\mc C^{(2k)}:= (\del^{-1}\mu^{4-k,k}) \vee \Omega$ \\
&&& for $ \mu^{4-k,k}\in \PV^{4-k,k}(\C^5)$ \\
\hline \hline 
flux or & \multirow{2}{*}{$F=dA \in \Omega^2(M^4)$}  & \multirow{2}{*}{$G^{(2k+1)}=d C^{(2k)} \in \Omega^{2k+1}$}  & $ d  \mc C^{(2k)} \in \Omega^{2k+1} $  \\
field strength &&& $=(\id + \delbar\del^{-1} )\mu^{4-k,k}\vee \Omega$\\
\hline 
duality & $F \leftrightarrow  *F$ & $G^{(10-2k-1)} = \ast G^{(2k+1)}  $ &``$\mc G^{(10-2k-1)} = \ast \mc G^{(2k+1)}  $'' \\
\hline
choice of  &  & if $G^{(2k+1)}= d C^{(2k)}$ & if $\mc G^{(2k+1)} = \bar\del \del^{-1} \mu^{4-k,k}\vee \Omega $ \\
potentials && $G^{(10-2k-1 )} := \ast G^{(2k+1)}$  & $\mc G^{(10-2k-1)} = \mu^{k,4-k}\vee\Omega $\\
 \hline  \hline 
\multirow{2}{*}{modified action} & $\int_{M^4} F\wedge * F  $ & $\int_{X^{10}} G^{(2k+1)}\wedge G^{(10-2k-1)}  $ & $\int_{\C^5}  \mc G^{(2k+1)}  \wedge \mc G^{(10-2k-1)}  $ \\
& $+ q\int _W d^{-1}F + \cdots$ & $ + N\int_Y d^{-1} G^{(2k+1)} + \cdots$ & $+ N\int_{\C^k} d^{-1} \mc G^{(2k+1)} + \cdots$\\
\hline
equation for &  &  &   \\
 sourced flux or &$d*F=q\delta_W$&$d G^{(10-2k-1)}=N\delta_Y$& $\delbar \mu^{k, 4-k}\vee \Omega_{\C^5} = N \delta_{\C^k}$\\
 field strength&&&   \\
 \hline
\end{tabular}
\end{center}

Note that unlike Maxwell theory where we treat $A$ as a fundamental field and its field strength $F$ as a derived notion, in the supergravity setting we regard the field strength itself as an element of the space of fields and call it the flux sourced by the brane. This may seem unusual, but is in accordance with the democractic formulation of string theory where the Ramond--Ramond (RR) field strengths $G^{(2k+1)}$ satisfying the constraint $G^{(2k+1)}= * G^{(10-2k-1)}$ are taken to be fundamental fields and one is free to choose potentials for exactly half of them \cite{Townsend}. That is, when we use $G^{(2k+1)}= d C^{(2k)}$, the RR field strength $G^{(10-2k-1)}$ doesn't admit a description in terms of $C^{(10-2k-2)}$, but it is defined as $G^{(10-2k-1 )} := \ast G^{(2k+1)}$. Physicists sometimes do not a priori make any choice and say that RR $2k$-form $C^{(2k)}$ is electro-magnetic dual to RR $(10-2k-2)$-form $C^{(10-2k-2)}$. In particular, the RR 5-form field strength $G^{(5)}$ is self-dual -- this constraint must be dealt with carefully in discussing charge quantization. Once a choice of RR $2k$-form $C^{(2k)}$ is made one can add the term $\int_{Y^{2k}} C^{(2k)}$ to the action functional. It is then colloquial to say that the $D_{2k-1}$-brane is electrically charged under RR $2k$-form and and magnetically charged under the RR $(10-2k-2)$-form though for $k\neq 2$ only the electric field is part of the space of fields. In particular, we see that D3 branes are both electrically and magnetically charged under the RR 4-form.

In our twisted supergravity setting, in order for the democratic formulation to be respected, one should make a definition of $\mc G^{(2k+1)}$ and $\mc G^{(10-2k-1)}$ simultaneously when we choose a potential:

\begin{defn}
For $k\neq 2$, we define a pair of \emph{Ramond--Ramond field strengths} $\mc G^{(2k+1)}$ and $\mc G^{(10-2k-1)}$ 
\begin{enumerate}[label=(\arabic*)]
	\item \emph{with the choice of a Ramond--Ramond $2k$-form $\mc C^{(2k)}=\del^{-1} \mu^{4-k,k}\vee \Omega_{\C^5}$ as a potential} 
\[\mc G^{(2k+1)} =\delbar\del^{-1} \mu^{4-k,k}\vee \Omega_{\C^5} \quad \text{and} \quad \mc G^{(10-2k-1)}=\mu^{k,4-k}\vee \Omega_{\C^5}\]
	\item \emph{with the choice of a Ramond--Ramond $(10-2k-2)$-form as a potential} to be \[\mc G^{(2k+1)} =\mu^{4-k,k}\vee \Omega_{\C^5} \quad \text{and} \quad \mc G^{(10-2k-1)}=\delbar\del^{-1} \mu^{k,4-k}\vee \Omega_{\C^5} \]
\end{enumerate}
\end{defn}

Note that with this definition and interpretation, the free part of the action of BCOV theory, $ \int_X ( \mu \delbar\del^{-1}\mu \vee \Omega) \wedge \Omega  $, is precisely of the form as expected from type IIB supergravity on $X^{10}$.

\begin{rmk}
The flux or field strength can be integrated on a sphere linking the support of the charged object to yield the charge. Mathematically, the charge is often a characteristic class for the bundle carrying the gauge field, and charge quantization amounts to an integrality condition on the characteristic class \cite{Freed}. 
\end{rmk}

\section{Twisted S-duality for $\IIB_{\mr{SUGRA}}[M^4_A \times X^3_B]$} \label{section:S-duality}
In Definition \ref{defn:twisted S-duality}, we have described an action of $\SL(2, \Z)$ on minimal BCOV theory $\widetilde{\mc E}_{\mr{mBCOV}}(X^3)$ with potentials in dimension 3. This extends by the identity to an action on $\IIB_{\mr{SUGRA}}[M^4_A \times X^3_B]$ -- we call this $\SL(2, \Z)$ action \textit{twisted S-duality}. The goal of this section is to justify why this action is deserving of such a name. 

As described in the introduction, one way physicists think of S-duality of type IIB string theory comes from the fact that type IIB on a circle is equivalent to M-theory on a torus. Via this equivalence, S-duality is realized as a spacetime symmetry -- it comes from the action of the mapping class group of a torus. This situation is neatly summarized in Figure \ref{sdiagram}.

In this section, we establish a comparison between M-theory on a torus and type IIB on a circle that holds at the level of twists. Explicitly, in Subsection \ref{subsection:11d to IIA}, we introduce the maximal twist of eleven-dimensional supergravity. This is a holomorphic-topological theory that can be defined on products of the form $M\times X$ where $M$ is a $G_2$ manifold and $X$ is a hyperk\"ahler surface. Combined with T-duality map constructed in Subsection \ref{sub:T-duality}, we then have a map from this theory placed on $\R^5 \times T^2 \times X$ to $\IIB_{\mr{SUGRA}}[\R^4_A\times (\C^\times \times X)_B]$. The construction of this map is a direct implementation of the physical equivalence that holds at the level of twists -- it is defined to be a composition of a map gotten from dimensional reduction and T-duality. Note that the source and target of this map carry actions of $\SL(2, \Z)$ by the usual mapping class group and twisted S-duality respectively. We show in Subsection \ref{subsec:twisted S-duality} that this map is equivariant.

\subsection{A $G_2\times \SU(2)$-invariant Twist of 11d Supergravity}\label{subsection:11d to IIA}

The main claim of this subsection is that there is a shadow of the usual relation between 11-dimensional supergravity and type IIA supergravity even after a twist.

We learned the following definition of a twist of 11-dimensional supergravity theory from Costello in the context of his paper on the subject \cite{CostelloM-theory}.

\begin{defn}
The \emph{$G_2\times \SU(2)$-invariant twist of 11-dimensional supergravity} on $M\times X$, where $M$ is a $G_2$-manifold and $X$ is a hyperk\"ahler surface, is a BV theory described as follows: the space of fields with its shifted $L_\infty$-algebra structure is \[\M[M_A\times X_B]:=(\Omega^\bullet(M)\otimes\Omega^{0,\bullet}(X)[1];\ \ell_1=d_{M}\otimes 1 + 1\otimes \bar{\del}_{X},\ \ell_2=\{-,-\}, \ \ell_n=0\text{ for } n\geq 3)\] where $\{-,-\}$ denotes the Poisson bracket with respect to the holomorphic volume form $\Omega_X$ on $X$ canonically extended to the entire space, together with an odd invariant pairing given by wedging and integrating against the holomorphic volume form on $X$.
\end{defn}

\begin{rmk}
This definition was initially proposed as a conjectural description of a twist of 11d supergravity in \cite{CostelloM-theory}. Since writing the first version of this paper, several checks of this conjecture have ben performed. The free limit has been derived from physical eleven-dimensional supergravity in \cite{EagerHahner}, and the arguments there show that fields in the twist describe certain components of the C-field and gravitino in the physical theory. The interactions have been derived from a minimal twist of 11d supergravity studied by the first author in \cite{RaghavendranSaberiWilliams}; this argument together with the method for twisting the interacting physical theory introduced in \cite{HahnerSaberi} amount to a proof of Costello's conjecture. Moreover, an $\Omega$-deformed version of this theory has appeared in several holographic discussions \cite{CostelloM-theory}, \cite{CostelloM2}, \cite{GO}, \cite{GA}, \cite{GR}, \cite{GRZ}, \cite{OZ1}, \cite{OZ2}, \cite{OZ3}, \cite{IJZ}, \cite{Ash}, \cite{5d4d}.
\end{rmk}

The following lemma plays a crucial role in establishing our claim, and is essentially proved in \cite[Section 13]{CostelloM-theory}. For the statement, let us introduce 5-dimensional BV theory on $S^1\times X$ where $X$ is a Calabi--Yau 2-fold, given by a truncation of the 11-dimensional supergravity theory. That is, the space of fields is $\Omega^\bullet(S^1)\otimes \Omega^{0,\bullet}(X)[1]$ and the other structures are defined in the same way as before. By the formality of $S^1$, we think of its reduction to 4-dimensional theory where the space of fields is $ \mc E_{\text{5d}\to \text{4d}} (X) =  \Omega^{0,\bullet}(X)[\eps] [1] $ from the identification $H^\bullet(S^1)\cong \C[\eps]$. Then we have the action functional of the 4-dimensional theory given by
\[ S_{\text{5d}\to \text{4d}}(A,B) =\int (B \ol\del A)\wedge \Omega_X + \frac{1}{2} \int (B \{ A,A\})\wedge\Omega_X,\qquad\text{where }  B \eps \in \Omega^{0,\bullet}(X)\eps, \ A \in \Omega^{0,\bullet}(X)[1]. \]

\begin{lemma}\label{lemma:M to IIA}
There is an equivalence of $\Z/2$-graded BV theories between the above 4-dimensional theory and 2-dimensional minimal BCOV theory with potential.
\end{lemma}

\begin{proof}
Recall that the minimal BCOV theory with potential on $X$ is defined by the space of fields being  $\wt{\mc E}_{\rm {mBCOV}}(X) =\PV^{0,\bullet}(X)[2] \oplus \PV^{2,\bullet}(X)[1]$ and the action functional is 
\[ S (\alpha,\beta)  = \int  (\alpha \ol\del \beta \vee \Omega_X)\wedge \Omega_X + \frac{1}{2} \int  ( \alpha \del\beta\del\beta \vee \Omega_X) \wedge \Omega_X ,\qquad \text{where } \alpha \in\PV^{0,\bullet}(X)[2], \ \beta\in\PV^{2,\bullet}(X)[1] \]
We define a quasi-isomorphism of cochain complexes \[ \mc E_{\text{5d}\to \text{4d}} (X)\to \wt{\mc E}_{\rm {mBCOV}}(X) \] given by 
\begin{align*}
\Omega^{0,\bullet}(X)\eps \to & \PV^{0,\bullet}(X)  && \Omega^{0,\bullet}(X)\to  \PV^{2,\bullet }(X) \\ 
B\eps \mapsto &B  &  & \qquad\quad A \mapsto A  \wedge \Pi_X .
\end{align*}
Moreover, the equivalence of these two theories immediately follows as one notes 
\[ \{A,A \} = \Pi_X (\del A \del A  ) = ( \del (A \wedge \Pi_X) \wedge \del ( A \wedge \Pi_X))\vee \Omega_X. \]
\end{proof}

The following corollary demonstrates that the usual relationship between 11-dimensional supergravity and type IIA supergravity holds at the level of twisted theories. Let us fix a $G_2$ structure on $M^6\times S^1$. For instance, $M^6$ may be a Calabi--Yau 3-fold.

\begin{cor}\label{cor:M to IIA}

There is an equivalence of DG Lie algebras

\[\mr{red}_{M}\colon  \M[(M^6\times S^1)_A\times X^2_B] \to \IIA_{\mr{SUGRA}}[M^6_A\times X^2_B].\]

\end{cor}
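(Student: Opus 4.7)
The plan is to derive the corollary directly from the preceding lemma by factoring the passive $\Omega^\bullet(\R^6)$-factor out of both sides. Using the Künneth decomposition of de Rham complexes, I would first rewrite
\[\M[(\R^6\times S^1)_A\times\C^2_B] \;=\; \Omega^\bullet(\R^6)\otimes\bigl(\Omega^\bullet(S^1)\otimes\Omega^{0,\bullet}(\C^2)\bigr),\]
viewing this as a tensor product of DG Lie algebras: the de Rham differentials on $\R^6$ and on $S^1$ are both derivations of the trivially-extended Poisson bracket on $\C^2$, and the bracket itself only pairs elements through their $\C^2$-components.

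Next, I would simplify the target. For a Calabi--Yau surface $X=\C^2$ the condition $i\leq d-2$ in the definition of $\wt{\mc E}_{\mr{mBCOV}}$ reduces to $i\leq 0$, giving
\[\wt{\mc E}_{\mr{mBCOV}}(\C^2) \;=\; \PV^{0,\bullet}(\C^2)\,\oplus\,(\im\del\subset\PV^{1,\bullet}(\C^2))\,\oplus\,(\ker\del\subset\PV^{2,\bullet}(\C^2)),\]
which, up to the BCOV degree convention already built into $\wt{\mc E}_{\mr{mBCOV}}$, is the target of the DG Lie algebra quasi-isomorphism $\Phi$ produced in the proof of the preceding lemma. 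I would then set
\[\mr{red}_M \;:=\; \id_{\Omega^\bullet(\R^6)}\otimes\Phi,\]
and check two compatibilities: (i) the source differential $d_{\R^6}\otimes 1+1\otimes(d_{S^1}+\delbar_{\C^2})$ is carried by $\id\otimes\Phi$ to the differential of $\Omega^\bullet(\R^6)\otimes\wt{\mc E}_{\mr{mBCOV}}(\C^2)$, and (ii) the Lie bracket on the source (the Poisson bracket on $\C^2$ extended constantly in the $\R^6\times S^1$-directions) is intertwined with the Schouten--Nijenhuis bracket on $\PV(\C^2)$ extended constantly along $\R^6$. Both compatibilities reduce, factor by factor, to the corresponding statements for $\Phi$ that are verified in the lemma, because the $\Omega^\bullet(\R^6)$-factor is inert with respect to the bracket and contributes only its own de Rham differential.

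I do not anticipate a substantive obstacle: the nontrivial physical content --- trading a dependence on the M-theory circle $S^1_M$ for higher polyvector degree on the B-side --- has already been packaged into the preceding lemma, whose proof performs exactly the identification needed. The only residual work is bookkeeping of degree shifts, ensuring that the $\PV[2]$-convention used to define $\wt{\mc E}_{\mr{mBCOV}}(\C^2)$ lines up with the $\eps$-degree introduced by the identification $\Omega^\bullet(S^1)\simeq\C[\eps]$; this is essentially the content of the $[1]$-shift appearing on the $\PV^{0,\bullet}(\C^2)$-summand in the statement of the lemma.
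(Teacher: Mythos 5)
Your proposal is correct and follows the same route as the paper: the paper's proof simply observes that the statement follows from the preceding lemma and the definitions, since $\mr{red}_M$ is the identity along $\Omega^\bullet(\R^6)$ and the lemma supplies the DG Lie algebra equivalence on the $\Omega^\bullet(S^1)\otimes\Omega^{0,\bullet}(\C^2)$ factor, whose target is exactly $\wt{\mc E}_{\mr{mBCOV}}(\C^2)$. Your additional bookkeeping (the $i\leq d-2$ reduction and the tensor-factor compatibility checks) just spells out what the paper leaves implicit.
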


\begin{proof}

This follows from the above lemma and definition because the map is the identity in the direction of $M^6$.

\end{proof}

\begin{rmk}\label{rmk:G2instantons}
Note that as defined, the theory $\M[M^7_A\times X_B]$ does not appear to depend on a $G_2$ structure on $M^7$. In this remark, we wish to posit a potential modification of the conjectural description of \cite{CostelloM-theory}; this modification will include some nonperturbative effects and will depend on the $G_2$ structure.

Consider the spacetime of the form $T^*N \times Y$, where $Y$ is a Calabi--Yau 3-fold, and note that there is a map $\IIA_{\mr{SUGRA}}[(T^*N)_A\times Y_B]\to \IIA_{\mr{cl}}[(T^*N)_A\times Y_B]$. The target of this map in particular captures certain non-perturbative effects such as string states in the A-model with nontrivial winding modes. 

One can ask for an M-theoretic lift of these non-perturbative effects, i.e., an $L_\infty$-algebra $\mc M[(T^*N\times S^1)_A\times Y_B]$ such that the square below is cocartesian:
\begin{center}
\begin{tikzcd}
\mc M[(T^*N\times S^1)_A\times Y_B]                &  \IIA_{\mr{cl}}[(T^*N)_A\times Y_B]\arrow[l, "\cong"', no head, dashed]   \\
\M[(T^*N\times S^1)_A\times Y_B] \arrow[u, dashed] &  \IIA_{\mr{SUGRA}}[(T^*N)_A\times Y_B]\arrow[l, "\cong"', no head] \arrow[u]
\end{tikzcd}.
\end{center}

Physically, string states with nontrivial winding mode in IIA should correspond to certain configurations of M2 branes in M-theory. BPS configurations for an M2 brane in M-theory on a $G_2$ manifold are afforded by associative 3-folds in the G2. Therefore, we might expect that upon replacing $T^*N \times S^1$ with an arbitrary $G_2$ manifold $M^7$, the description of $\mc M[M^7_A\times Y_B]$ ought to involve a deformation of the de Rham complex of $M^7$ as an $\bb E_3$-algebra, where the deformed product involves counts of associative 3-folds \cite{Joyce}. Comparison with the usual curve counting invariants in the context of topological string theory is summarized in the following table:\begin{center}
\begin{tabular}{|c|c|c|}
\hline
& (A-model) topological string & twisted M-theory \\
\hline
spacetime = $\R^{4}\times M$ & $M$ : CY 3-fold  & $M$ : $G_2$ manifold  \\
\hline
extended objects & F1 string $\Sigma$ & M2 brane $S$ \\
supersymmetric cycles & $J$-holomorphic curves in $M$ & associative 3-folds in $ M$\\
\hline
theory on an extended object & A-model on $\Sigma$ with target $M$ & 3d theory on $S$ with target $M$ \\
\hline
\multirow{2}{*}{states/observables} & quantum cohomology $\mr{QH}(M)$ & M2 cohomology $\mr{M2}(M)$\footnote{There is a sense in which quantum cohomology is a misnomer because as a deformation of de Rham cohomology $H_{\mr{dR}}(M)$ of $M$, it is from a stringy effect. Hence string cohomology (or perhaps F1 cohomology) may have been a better name for quantum cohomology. However, string cohomology now referes to a different construction so we are stuck with established terminology. In view of this remark, we opt to call the corresponding deformation of de Rham cohomology of a $G_2$ manifold M2 cohomology.} \\
 & counting $J$-holomorphic curves & counting associative 3-folds \\
structure at cochain level & $\bb E_2$-algebra & $\bb E_3$-algebra  \\
\hline 
\end{tabular}
\end{center} 

Relations between the interactions of 11d supergravity, both twisted and untwisted, and 2-shifted Poisson structures are identified in \cite{HahnerSaberi}, and relations between the $\Omega$-deformed $G_2\times \SU(2)$ twist of 11d supergravity and enumerative invariants of $G_2$ manifolds are developed in \cite{DZOZ}, \cite{OZ}.
\end{rmk}

\begin{rmk}\label{rmk:hol-red}

Since the reduction of 11d supergravity on any circle should yield IIA supergravity, it is natural to ask if reducing M-theory on a circle $S^1\subset X$ yields a twist of IIA supergravity theory and if so which one. For simplicity, let us consider a flat space $\M[\R^7_A\times (\C\times \C^\times)_B]$ with $S^1\subset\C^\times $. To answer this question, we choose $\SU(3)\subset G_2$ with which we write $\bb R^7\cong \bb R \times  \bb C^3_{\mr{CY}}$ so that $\bb C^3_{\mr{CY}}$ is thought of having a corresponding Calabi--Yau structure. Then we expect that such a reduction yields $\IIA[\R^2_A\times (\C^3_{\mr{CY}}\times \C)_B]$ where $\C$ is equipped with a linear superpotential.

One way to see this involves viewing the $G_2 \times \SU(2)$ invariant twist as a further deformation of the minimal twist \cite{RaghavendranSaberiWilliams}. This minimal twist can be placed on backgrounds of the form $\R\times \C^\times \times \C^4$ and has the feature that dimensional reduction along $S^1\subset \C^\times$ is indeed the $\SU(4)$ invariant twist of type $\IIA$. The fields of the theory are given by resolutions of holomorphic divergence free vector fields and co-closed 1-forms on $\C^\times \times \C^5$ -- the deformation to the $G_2\times \SU(2)$ twist on $\R^7\times \C^\times \times \C$ is given by a certain linear coefficient 1-form on $\C^\times \times \C$. Under dimensional reduction, this linear coefficient 1-form is identified with a linear superpotential on $\C$ as expected.

\end{rmk}

\subsection{T-duality}\label{sub:T-duality}

 In this subsection, we explain a version of the usual T-duality between type IIA and type IIB superstring theories that holds in the protected sectors we have defined.

The idea of T-duality is simple; a string cannot detect a difference between a circle of radius $r$ and a circle of radius $1/r$. Therefore, when one considers a string theory defined on a spacetime manifold with a factor of a circle $S^1_r$ of radius $r$, it may be identified with a seemingly different theory defined on a spacetime manifold with the circle replaced by a circle $S^1_{1/r}$ of radius $1/r$.

The main claim is that a shadow of the physically well-known T-duality between IIA and IIB string theories holds at the level of twisted supergravity. We begin by discussing T-duality for topological string theory.

\subsubsection{T-duality in Topological String Theory}

We claim that topological string theory on $ (M^{2k}\times  \bb R \times S^1)_A \times X^{4-k}_B$ is equivalent to topological string theory on $M^{2k}_A\times  (\bb C^\times \times X^{4-k})_B$. This follows directly from an equivalence between A-model on $\bb R \times S^1$ and B-model on $\bb C^\times$ (see \cite{AAEKO}):
\[ \mr{Fuk}_{\mc W}(T^*S^1)\simeq \coh(\bb C^\times ) \]
This is the underlying input for T-duality between topological string theories as we need. This equivalence of categories will induce an identification between cyclic cochains. We will use this identification to produce a map between the subcomplexes that we have identified as describing twists of supergravity.

Now we formulate T-duality between the closed string field theories we consider. As mentioned, an equivalence of DG categories induces an identification between Hochschild chains that intertwines the natural $S^1$ actions. We summarize this at the level of homology in the following table:
\begin{center}

\begin{tabular}{|c|c|c|}

\hline

 & A-model on $T^*S^1$ & B-model on $\C^\times$ \\

 \hline 

$\HH_\bullet$ & $\mr{SH}^{-\bullet}(T^* S^1) \cong H_{\bullet}(LS^1)$ & $\HH_\bullet(\bb C^\times)\cong  \Omega^{-\bullet}(\bb C^\times)$  \\	

 &$ \cong \bb C[z,z^{-1}][\eps]  $ & $\cong \bb C[z,z^{-1}][\frac{dz}{z}]$ \\

 \hline 

$\Delta$ & $[S^1]\colon H_0(LS^1)\to H_{1}(LS^1)$ & $\del \colon \Omega^{0}(\bb C^\times)\to \Omega^{-1}(\bb C^\times)  $ \\

 & $f \mapsto (z\del_z f)\eps $ & $f \mapsto (z\del_z f) \cdot \frac{dz}{z}$ \\

 \hline 

\end{tabular}

\end{center}
where Hochschild homology of wrapped Fukaya category of the cotangent bundle $T^*S^1$ is identified as symplectic cohomology $\mr{SH}^{-\bullet}(T^*S^1)$, or equivalently, homology $H_\bullet(LS^1)$ (see \cite{Abouzaid}). Here $LS^1\cong S^1 \times \bb Z $ is the free loop space where $\bb Z$ encodes the winding number, $S^1$ encodes the initial position, and we write $\eps=[S^1]\in H_1(S^1)$. Now identifying Hochschild homology and Hochschild cohomology using the Calabi--Yau structure, we obtain an isomorphism between $\bb C[z,z^{-1}][\eps]$, where we abuse the notation to still write $\eps$ for the odd variable, and $ \PV_{\mr{hol}}(\bb C^\times)\cong \bb C[z,z^{-1}][z\del_z]$ such that the $S^1$-actions are still preserved.

\begin{rmk}\label{rmk: T}
T-duality is supposed to exchange momentum and winding modes. Understanding this will guide us in finding a T-duality statement in the twisted supergravity setting below.

In the A-model, those closed string states in $H^\bullet(LS^1)\cong H^\bullet(S^1)\otimes H^\bullet(\bb Z)$ with low energy (or without any non-perturbative effect) can be thought of as where the winding number is zero. This is exactly the summand $H^\bullet(S^1)\cong \bb C[\eps]$. On the other hand, in the B-model, the $\ol{\del}$-cohomology of $\PV(\bb C^\times)$ is $\bb C[z,z^{-1}]\otimes \C [z\del_z]$. Here the fields in $\PV^{1,\bullet}(\C^\times)$ do not propagate, so the $\ol{\del}$ cohomology of the fields in the supergravity approximation (as described by minimal BCOV) is exactly $\C[z,z^{-1}]$. The decomposition of closed string states into momentum and winding modes is summarized in the following table:
\begin{center}
\begin{tabular}{|c|c|c|}

\hline

 & A-model on $T^*S^1$ & B-model on $\C^\times$ \\

 \hline 

$\HH^\bullet$ & $ \bb C[z,z^{-1}]\otimes \bb C[\eps] $ & $ \PV_{\mr{hol}}(\bb C^\times) \cong \bb C[z,z^{-1}]\otimes \bb C[z\del_z]$  \\	

 \hline 

low-energy & $\bb C[\eps]$ &  $  \PV^0_{\mr{hol}}(\bb C^\times) \cong \bb C[z,z^{-1} ] $    \\

winding & $\bb C[z,z^{-1}]$ & $\bb C[z \del_z ]$ \\

 \hline 

\end{tabular}
\end{center}
The T-dual image of the low energy fields in the A-model must therefore have constant coefficient in the direction where we apply T-duality. In other words, we cannot find non-constant functions as the T-dual image of fields in the supergravity approximation of the A-model. 
\end{rmk}

\subsubsection{T-duality in Twisted Supergravity}\label{Tdualitysugra}

We now wish to formulate T-duality between supergravity theories on particular backgrounds, namely, $\IIA_{\mr{SUGRA}}[ (\R^4 \times   T^*S^1 )_A  \times X^2_B]$ and $\IIB_{\mr{SUGRA}}[\R^4_A \times (\C^\times \times X^2)_B ]$. Note that this does not immediately follow from our description in the previous subsection involving closed string field theory and its description in terms of BCOV theory -- our model for supergravity theories is based on the theories with potentials. Still one would want to find a map $\mb T$ fitting in the following diagram
\[\xymatrix{
\IIA_{\mr{cl}}[ (\R ^4 \times   T^*S^1 )_A  \times X^2_B] \ar@{-}[r]^-\simeq  &\IIB_{\mr{cl}}[\R^4_A \times (\C^\times \times X^2)_B ]\\
\IIA_{\mr{SUGRA}}[ (\R^4 \times   T^*S^1 )_A  \times X^2_B] \ar[r]^-{\mb T} \ar[u]^-\Psi   &  \IIB_{\mr{SUGRA}}[\R^4_A \times (\C^\times \times X^2)_B ] \ar[u]_-\Phi 
} \]
Here $\Phi$ is a natural map induced from $\wt{\mc E}_{\mr{mBCOV}} \to \mc E_{\mr{mBCOV}} \to \mc E_{\mr{BCOV}}$, but as noted in Remark \ref{rmk:sugratostring} $\Psi$ is a bit subtle to define. However, upon restricting to those field configurations that have zero winding and zero momentum around $S^{1}$, the inclusion is easier to describe. That is, we will define a map
\[\Psi' \colon  H^{\bullet}(\IIA_{\mr{SUGRA}}[ (\R^4 \times   T^*S^1 )_A  \times X^2_B]; d_{T^*S^1}+t\Delta) \to H^{\bullet}(\IIA_{\mr{cl}}[ (\R^4 \times   T^*S^1 )_A  \times X^2_B]; d_{T^*S^1}+t\Delta).\]
The notation is meant to suggest that this is a map that is expected to admit a cochain level lift.

First of all, we note \[H^{\bullet}(\IIA_{\mr{SUGRA}}[ (\R^4 \times   T^*S^1 )_A  \times X^2_B]; d_{T^*S^1}+t\Delta)\simeq H^{\bullet}(\IIA_{\mr{SUGRA}}[ (\R^4 \times   T^*S^1 )_A  \times X^2_B]; d_{T^*S^1}),\] because the spectral sequence immediately degenerates for a degree reason. Then the desired map should be induced by a natural map
\[ \Psi' \colon  H^\bullet( \Omega^{\bullet} (T^*S^1)\otimes \wt{\mc E}_{\mr{mBCOV}}(X) , d_{T^*S^1}) \to  H^\bullet( C^{\bullet}(LS^1)^{S^1}\otimes_{H^\bullet (BS^1)}\mc E_{\mr{BCOV}}(X), d_{T^*S^1}+t\Delta)  \]
which we abuse the notation to still denote by $\Psi'$ and is defined as follows. To describe the map, observe that using quasi-isomorphisms $(\Omega^{\bullet} (T^*S^1), d_{T^*S^1}) \simeq (\Omega^{\bullet} (S^1), d_{S^1}) \simeq \C[\eps]$, the cohomology $H^\bullet( \Omega^{\bullet} (T^*S^1)\otimes \wt{\mc E}_{\mr{mBCOV}}(X) , d_{T^*S^1})$ is given by
 \[\xymatrix @R=0em   {
 \ul{-2} &   \ul{-1} &   \ul{0}    \\
 \PV^{0,\bullet}_X \\
&  \PV^{2,\bullet}_X \oplus \PV^{0,\bullet}_X \eps  \\
 &&   \PV^{2,\bullet}_X \eps . \\
}\]
On the other hand, $H^\bullet( C^{\bullet}(LS^1)^{S^1}\otimes_{H^\bullet (BS^1)}\mc E_{\mr{BCOV}}(X); d_{T^*S^1}+t\Delta)$, which we know should be isomorphic to $H^{\bullet}( \PV^{\bullet,\bullet } (\C^\times \times X^2) \llb t \rrb  , \ol\del_{\C^{\times}}+t\del_{\C^{\times}} )$ from the previous subsection, is given by 
 \[\xymatrix @C=1em @R=0em {
 \ul{-2} & \ul{-1} & \ul{0} & \ul{1} \\
{\PV^{0,\bullet}_X[z^{\pm 1}]\llb t \rrb  }  \\
& { \PV^{1,\bullet}_X[z^{\pm 1}]\llb t \rrb \oplus  \PV^{0,\bullet}_X[z^{\pm 1}]\llb t \rrb  \eps  }  \\
 &   & { \PV^{2,\bullet}_X[z^{\pm 1}]\llb t \rrb \oplus \PV^{1,\bullet}_X[z^{\pm 1}]\llb t \rrb  \eps  } \\
 &   && {\PV^{2,\bullet}_X[z^{\pm 1}] \llb t \rrb  \eps }
}\]
where we use the Kunn\"eth formula $\PV^{k,\bullet }(X \times Y) \cong \displaystyle\bigoplus_{i+j=k} \PV^{i,\bullet}(X) \otimes \PV^{j,\bullet}(Y)$ and the identification
\[H^\bullet (\PV^{i,\bullet}(\C^\times),\ol\del  ) \cong \begin{cases}
\C[z,z^{-1}] & \text{if } i=0\\
\C[z,z^{-1}]z \del_z & \text{if } i=1.
\end{cases}
\] 
Then $\Psi'$ is the natural map defined by
\[\Psi' = \begin{cases}
 \id & \text{on  $\PV^{0,\bullet}_X \oplus \PV^{0,\bullet}_X \eps $}\\
 \del & \text{on $\PV^{2,\bullet}_X$}	\\
\del \otimes 1 & \text{on $\PV^{2,\bullet}_X \eps $}.
 \end{cases}
 \]
 With this in hand, we proceed to describe the following diagram
\begin{center}
\begin{figure}[h]
 \[\xymatrix{
H^{\bullet}(\IIA_{\mr{cl}}[ (\R ^4 \times   T^*S^1 )_A  \times X^2_B]; d_{T^*S^1}+t\Delta) \ar@{-}[r]^-\simeq  &H^{\bullet}(\IIB_{\mr{cl}}[\R^4_A \times (\C^\times \times X^2)_B]; \ol\del_{\C^{\times}}+t\del_{\C^{\times}}  ) \\
H^{\bullet}(\IIA_{\mr{SUGRA}}[ (\R^4 \times   T^*S^1 )_A  \times X^2_B]; d_{T^*S^1}) \ar[r]^-{\mb T} \ar[u]^-{\Psi'}   &  H^{\bullet}(\IIB_{\mr{SUGRA}}[\R^4_A \times (\C^\times \times X^2)_B]; \ol\del_{\C^{\times}}+t\del_{\C^{\times}}  )\ar[u]_-{\Phi'}
} \]\caption{T-duality Diagram}\label{diagramt}	
\end{figure}
\end{center}

Such a diagram follows from the following:

\begin{defnprop}
There is a map \[\mb T \colon \wt{\mc E}_{\mr{mBCOV}}(X)[\eps] \to  H^\bullet(\wt{\mc E}_{\mr{mBCOV}}(\C^\times\times X) ; \ol{\del}_{\C^\times} + t\del_{\C^\times }  ) \] of cochain complexes such that the following diagram is commutative
\[\xymatrix{ H^{\bullet}(C^\bullet(LS^1)^{S^1}\otimes_{H^\bullet (BS^1)}\mc E_{\mr{BCOV}}(X); d_{T^*S^1}) \ar[r]^-{\simeq } & H^{\bullet}(\mc E_{\mr{BCOV}}(\C^\times\times X); \ol\del_{\C^{\times}}+t\del_{\C^{\times}}) \ar[l]           \\
 \wt{\mc E}_{\mr{mBCOV}}(X)[\eps] \ar[u]  \ar@{.>}[r]^-{\mb T} & H^\bullet(\wt{\mc E}_{\mr{mBCOV}}(\C^\times\times X) ;\ol \del _{\C^\times }+t \del_{\C^\times })    \ar[u]_-{\Phi' } } \]
Moreover, the induced map on $\del$-cohomology is a map of DG Lie algebras. 

\end{defnprop}

\begin{proof}
We begin by describing the maps. The entries of the top row are already identified as 
 \[\xymatrix @C=1em @R=0em {
 \ul{-2} & \ul{-1} & \ul{0} & \ul{1} \\
{\PV^{0,\bullet}_X[z^{\pm 1}]\llb t \rrb  }  \\
& { \PV^{1,\bullet}_X[z^{\pm 1}]\llb t \rrb \oplus  \PV^{0,\bullet}_X[z^{\pm 1}]\llb t \rrb  z\del_z }  \\
 &   & { \PV^{2,\bullet}_X[z^{\pm 1}]\llb t \rrb \oplus \PV^{1,\bullet}_X[z^{\pm 1}]\llb t \rrb  z\del_z  } \\
 &   && {\PV^{2,\bullet}_X[z^{\pm 1}] \llb t \rrb z\del_z }
}\]
Let us turn to the map $\Phi'$ which is the map on cohomology induced from the natural map from IIB supergravity to IIB closed string field theory. To describe the source, we can compute cohomology using the spectral sequence associated to the filtration by powers of $t$. The spectral sequence degenerates at the first page for degree reasons. Using the K\"unneth formula and the computation of $H^\bullet(\PV^{i,\bullet}(\C^\times),\ol{\del})$ as before, we find that the cohomology $H^\bullet ( \wt{\mc E}_{\mr{mBCOV}}(\C^\times \times X) , \ol \del _{\C^\times }  )$ is given by
\[\xymatrix @C=0.4em @R=0.5em {
\ul{-2} &\ul{-1}& \ul{0}  \\
 {\PV^{0,\bullet}_X [z^{\pm 1}]}   \\ &  \PV^{1,\bullet}_X[z^{\pm 1}]    \oplus   \PV^{0,\bullet}_X [z^{\pm 1}]  z \del_z   & t \PV^{0,\bullet}_X [z^{\pm 1 }]   \\ 
  &&{\PV^{2,\bullet}_X[z^{\pm 1}]z\del_z} 
}\]
and hence $H^\bullet ( \wt{\mc E}_{\mr{mBCOV}}(\C^\times \times X) , \ol \del _{\C^\times }+t \del_{\C^\times }  )$ is given by taking cohomology with respect to the differential
\[ t \del_{\C^\times }\colon  \PV^{0,\bullet}_X [z^{\pm 1}]  z \del_z  \longrightarrow t \PV^{0,\bullet}_X [z^{\pm 1 }]\]
which yields the complex
\[\xymatrix @C=0.4em @R=0.5em {
\ul{-2} &\ul{-1}& \ul{0}  \\
 {\PV^{0,\bullet}_X [z^{\pm 1}]}   \\ & {\begin{aligned}  \PV^{1,\bullet}_X[z^{\pm 1}] \\   \oplus \qquad   \\ \PV^{0,\bullet}_X   z \del_z  \end{aligned}} \ar@/^4ex/@{>}[r]^-{t\del_X}  & t \PV^{0,\bullet}_X    \\ 
  &&{\PV^{2,\bullet}_X[z^{\pm 1}]z\del_z} 
}\]
Then the map \[\Phi' \colon  H^\bullet ( \wt{\mc E}_{\mr{mBCOV}}(\C^\times \times X) , \ol \del _{\C^\times }+t \del_{\C^\times }  )\to H^\bullet ( \mc E_{\mr{BCOV}}(\C^\times \times X) , \ol \del _{\C^\times }+t \del_{\C^\times } )\] is given by the identity on the summands in degrees $-2$ and $-1$ in the decomposition above, given by zero on $t\PV^{0,\bullet}_X$, and is given by the divergence operator $\del$ on the summand $ \PV^{2,\bullet}_X[z,z^{-1}] z\del_z$. However, not all components of the divergence operator act nontrivially in cohomology. Decomposing $\del = \del_{\C^{\times}}\otimes 1 + 1\otimes \del_{X}$, we see that the second term only acts nontrivially on those elements in $\PV^{2,\bullet}_X z\del_{z}$. Indeed, the image of any element outside of this subspace under the $\del_{X}$ operator will be a vector field with nonzero modes along $\C^{\times}$, which are trivial in cohomology. Moreover, on $\PV^{2,\bullet}_X z\del_{z}  $, the term $\del_{\C^{\times}}\otimes 1$ acts as zero. Thus we have described the map $\Phi'$ as
\[\Phi' = \begin{cases}
 \id & \text{on  $\PV^{0,\bullet}_X[z^{\pm 1}]  \oplus  \PV^{1,\bullet}_X[z^{\pm 1}]\oplus \PV^{0,\bullet}_X   z \del_z  $ }\\
\del_X & \text{on $ \PV^{2,\bullet}_X z\del_z$}	\\
0 & \text{on  $t\PV^{0,\bullet}_X$ and $ \PV^{2,\bullet}_X z^{k+1}\del_z$, $k\neq 0$}.
 \end{cases}
 \]

Finally, we define the dashed arrow $\mb T$ to be the map of cochain complexes given by
\[\xymatrix @C=0.4em @R=0.5em {
\ul{-2} & \ul{-1} & \ul{0}&  \ul{-2} & \ul{-1} &&&& \ul{0}\\
{\PV^{0,\bullet}_X} \ar@{.>}[rrr] &  & & {\PV^{0,\bullet}_X[z^{\pm 1}]}  \\
& {\begin{aligned}\PV^{2,\bullet}_X  \\ \oplus  \ \ \ \\  \PV^{0,\bullet}_X \eps  \end{aligned}}  \ar@/_4ex/@{.>}[rrr]_-{\eps \mapsto z\del_z} \ar@/^4ex/@{.>}[rrr]^-{\del }  &                                      & & {\begin{aligned}  \PV^{1,\bullet}_X[z^{\pm 1}]  \\  \oplus  \qquad  \\  \PV^{0,\bullet}_X z \del_z    \ar@/^4ex/@{>}[rrrr]^-{t\del_X }\end{aligned}} &&&& t \PV^{0,\bullet}_X \\
&&{\PV^{2,\bullet}_X}\eps  \ar@{.>}[rrrrrr]_-{\eps \mapsto z\del_z} &  &   &&&& {\PV^{2,\bullet}_X[z^{\pm 1}]z\del_z} & &
}\]
where all the non-labeled maps are natural inclusion possibly using the identification of $\eps$ and $z\del_z$. It is easy to see that this is indeed a cochain map.

Next we show that the above map induces a map of DG Lie algebras on $\del_X$-cohomology after shifting. We check this explicitly -- there are several cases:
\begin{itemize}
\item Let $\alpha_1\in \PV^{0,\bullet}_X$ and $\alpha_2 \in \PV^{0,\bullet}_X$. We have $\mb T[\alpha_1, \alpha_2] = \mb T (0) = 0$. On the other hand, $[\mb T(\alpha_1), \mb T(\alpha_2 )]$ is the Schouten bracket between constants on $\C^\times$, which vanishes. Similarly, it holds for $\alpha_1\in \PV^{0,\bullet}_X$ and $\alpha_2\eps \in \PV^{0,\bullet}_X\eps$; we have $\mb T[\alpha_1, \alpha_2\eps] = \mb T (0) = 0$, while $[\mb T(\alpha_1), \mb T(\alpha_2\eps )]$ is the Schouten bracket between constants on $\C^\times$ and $z\del_z$, which vanishes. 

\item Let $\alpha\in \PV^{0,\bullet}_X$ and $\beta  \in \PV^{2,\bullet}_X$. We have $\mb T[\alpha, \beta ] = (-1)^{|\alpha|-1}\del_X [\alpha, \beta ]_\mr{SN} $. On the other hand, $[\mb T(\alpha), \mb T(\beta )] = [\alpha, \del_X\beta ] = [\alpha, \del_X\beta ]_\mr{SN}$. This equals $(-1)^{|\alpha|-1}\del_X [\alpha, \beta ]_\mr{SN}$ using the fact that the divergence operator $1\otimes \del_X$ is a shifted derivation of the Schouten bracket. Similarly, it holds for $\alpha\in \PV^{0,\bullet}_X$ and $\beta \eps \in \PV^{2,\bullet}_X\eps$; we have $\mb T[\alpha, \beta \eps] = (-1)^{|\alpha|-1}\del_X [\alpha, \beta ]_\mr{SN} z\del_z $, while $[\mb T(\alpha), \mb T(\beta \eps)] = [\alpha, \beta  z\del_z] = (-1)^{|\alpha|-1}\del_X [\alpha, \beta ]_\mr{SN} z\del_z $ as $\alpha$ and $\beta$ are both constants on $\C^\times$. A similar argument works for $\alpha \eps \in \PV^{0,\bullet}_X\eps $ and $\beta \in \PV^{2,\bullet}_X$ as well. 




\item Let $\beta_1\in \PV^{2,\bullet}_X$ and $\beta_2\in \PV^{2,\bullet}_X$. We have $\mb T[\beta_1,\beta_2] =   \del [ \beta_1,\del \beta_2  ] _{\mr{SN}} $. On the other hand, $[\mb T \beta_1,\mb T \beta_2] = [\del \beta_1,\del \beta_2 ]_{\mr{SN}}$. They coincide because $\del$ is a (shifted) derivation. Similarly, it holds for $\beta_1\in \PV^{2,\bullet}_X$ and $\beta_2\eps \in \PV^{2,\bullet}_X\eps$.
\item Other cases either trivially vanish or follow from graded skew-symmetry of the Lie brackets.
\end{itemize}

\end{proof}

Finally, we can extend T-duality to the relevant twists of supergravity and closed string field theory by taking it to act as the identity on all other tensor factors. This yields the desired commuting diagram in Figure \ref{diagramt}.

\subsection{Twisted S-duality from Eleven Dimensions} \label{subsec:twisted S-duality}

In the previous two subsections, we have constructed maps as indicated in the following diagram:
\begin{center}
 \[\xymatrix@C=2em @R=2em{
   & H^{\bullet}(\IIA_{\mr{cl}}[ (\R ^5 \times   S^1 )_A  \times X^2_B]) \ar@{-}[r]^-\simeq  &H^{\bullet}(\IIB_{\mr{cl}}[\R^4_A \times (\C^\times \times X^2)_B]  ) \\
H^\bullet(\M[(\R^5\times T^2)_A\times X_B] ) \ar[r]^-{\mr{red}_{M}}_-{\simeq} \ar[ur]^-\simeq  & H^{\bullet}(\IIA_{\mr{SUGRA}} [ (\R^5 \times   S^1 )_A  \times X^2_B] ) \ar[r]^-{\mb T} \ar[u]^-{\Psi'} &  H^{\bullet}( \IIB_{\mr{SUGRA}} [\R^4_A \times (\C^\times \times X^2)_B] )\ar[u]_-{\Phi'}
 } \]
\end{center}
where the cohomology is taken along the direction of $T^2$ for the 11d supergravity theory, $T^*S^1$ for IIA theory, and $\C^\times$ for IIB theory, respectively.

The composition $\mb T \circ \mr{red}_M$ of the two maps in the bottom row gives a map relating a twist of 11d supergravity on a torus to a twist of IIB supergravity on a ``circle $S^1\subset \C^\times$''; we denote this by $\Phi_{\M\to \IIB}=\mb T \circ \mr{red}_M$. Note that there are natural $\SL(2,\Z)$ actions on both the source and the $\del_X$ cohomology of the target -- the action on the source comes from the natural action on $H^\bullet (T^2)$ and the map on the target is the $\SL(2,\Z)$ induced from the one defined in Subsection \ref{subsub:SL2}. Our main result of this section is the following:

\begin{thm}\label{thm:mainequivariance}
The composition $\Phi_{\M\to \IIB}=\mb T \circ \mr{red}_M$ given by 
\[\xymatrix{
 H^\bullet(\M[(\R^5\times T^2)_A\times X_B]; d_{T^2})\ar[r]^-{\mr{red}_M} \ar[dr]_-{\Phi_{\M\to \IIB}} & H^\bullet(\IIA[(\R^5\times S^1)_A\times X_B]; d_{S^1}) \ar[d]^-{\mb T} \\
&  H^\bullet (\IIB[\R^4_A\times (\C^\times\times X)_B];  \ol \del _{\C^\times }+t \del_{\C^\times })
}\]
induces an $\SL(2,\Z)$ equivariant map of DG Lie algebras on $\del_X$ cohomology.
\end{thm}

\begin{proof}
Note that it is automatic that the map $\Phi_{\M\to\IIB}$ induces a map of DG Lie algebras on $\del_X$-cohomology. We need only check that equivariance. We begin by explicitly computing the map $\Phi_{\M\to\IIB}$. Note that we have a quasi-isomorphism $H^\bullet (T^2)\cong \C[\eps_M, \eps]$, where $\eps_M, \eps$ are an ordered basis of $H^1(T^2)$ in which the generators $S, T$ of $\SL (2,\Z)$ act according to Definition \ref{defn:twisted S-duality}. Here $\eps_M$ is from the circle of the M-theory, so Lemma \ref{lemma:M to IIA} is non-trivially applied to $\eps_M$, but not $\eps$.  By way of this quasi-isomorphism, we may write \[H^\bullet(\M[(\R^5\times T^2)_A\times X]; d_{T^2})\cong \Omega^\bullet (\R^5)\otimes \Omega^{0,\bullet}(X)[\eps, \eps_M].\] The map $\Phi_{\M\to\IIB}$ acts as the identity on $\Omega^\bullet (\R^5)$, so we deal with the other tensor factors. Fix $\alpha\in\Omega^{0,\bullet}(X)$.

From the definition of the maps $\mr{red}_M$ and $\mb T$, we see that:

\begin{align*} 
\Phi_{\M\to \IIB}(\alpha) & = \mb T\circ \mr{red}_M (\alpha) = \mb T(\alpha\wedge\Pi_X) = \del(\alpha\wedge\Pi_X) \\
\Phi_{\M\to \IIB}(\alpha\eps) & = \mb T\circ \mr{red}_M (\alpha\eps) = \mb T(\alpha\eps\wedge\Pi_X) = \alpha\wedge\Pi_X\wedge z\del_z\\
\Phi_{\M\to \IIB}(\alpha\eps_M) & = \mb T\circ \mr{red}_M (\alpha\eps_M) = \mb T(\alpha) = \alpha \\
\Phi_{\M\to \IIB}(\alpha\eps\eps_M) & = \mb T\circ \mr{red}_M (\alpha\eps\eps_M) = \mb T(\alpha\eps) =\alpha \wedge  z\del_z
\end{align*}
Therefore,  writing $\Omega_{X\times \C^\times}^{-1} = \Pi_X\wedge z\del_z$, we compute that:
\begin{align*}
\bb S(\Phi_{\M\to \IIB}(\alpha))&  =  \del(\alpha\wedge\Pi_X) , &\bb T(\Phi_{\M\to \IIB}(\alpha))  = \del(\alpha\wedge\Pi_X) \\
\bb S(\Phi_{\M\to \IIB}(\alpha\eps))& =  -\alpha , \ \ \ \ \ & \bb T(\Phi_{\M\to \IIB}(\alpha\eps))  = \alpha+\alpha\wedge\Pi_X\wedge z\del_z\\
\bb S(\Phi_{\M\to \IIB}(\alpha\eps_M))& =  \alpha\wedge \Pi_X \wedge z\del_z ,  & \bb T(\Phi_{\M\to \IIB}(\alpha\eps_M))  = \alpha \\
\bb S(\Phi_{\M\to \IIB}(\alpha\eps\eps_M))& = \alpha \wedge z\del_z,  & \bb T(\Phi_{\M\to \IIB}(\alpha\eps\eps_M))  = \alpha \wedge z \del_z
\end{align*}
On the other hand, we have that $S$ acts on $\C [ \eps_M, \eps]$ by $\eps_M\mapsto \eps, \eps\mapsto -\eps_M$ and $T$ acts by $\eps_M\mapsto \eps_M$ and $\eps\mapsto \eps + \eps_M$. Therefore, we see that:
\begin{align*} 
\Phi_{\M\to \IIB}(S(\alpha)) & =  \Phi_{\M\to \IIB}(\alpha) = \del(\alpha\wedge\Pi_X), \\
\Phi_{\M\to \IIB}(S(\alpha\eps))&  = -\Phi_{\M\to \IIB}(\alpha\eps_M) = - \alpha,  \\
\Phi_{\M\to \IIB}(S(\alpha\eps_M)) & = \Phi_{\M\to \IIB}(\alpha\eps) = \alpha\wedge \Pi_X\wedge z\del_z,  \\
\Phi_{\M\to \IIB}(S(\alpha\eps\eps_M))&  = \Phi_{\M\to \IIB}(\alpha\eps\eps_M) = \alpha \wedge z \del_z , \\
 \Phi_{\M\to \IIB}(T(\alpha)) & =  \Phi_{\M\to \IIB}(\alpha) = \del(\alpha\wedge\Pi_X) ,\\
 \Phi_{\M\to \IIB}(T(\alpha\eps)) & = \Phi_{\M\to \IIB}(\alpha\eps + \alpha\eps_M) = \alpha\wedge\Pi_X\wedge z\del_z + \alpha, \\
 \Phi_{\M\to \IIB}(T(\alpha\eps_M))& = \Phi_{\M\to \IIB}(\alpha\eps_M) = \alpha, \\ 
 \Phi_{\M\to \IIB}(T(\alpha\eps\eps_M)) & = \Phi_{\M\to \IIB}(\alpha\eps\eps_M) =\alpha \wedge z \del_z .
\end{align*}
This completes the proof.
\end{proof}

\begin{rmk}\label{rmk:GW/DT}
 In the above, we recovered the action of S-duality on $\IIB_{\mr {SUGRA}}[M^{4}_{A}\times (\C^{\times}\times X^{2})_{B}]$ by establishing an equivalence with $\M[(M^4 \times \R \times T^{2})_{A}\times X^{2}_{B}]$ and considering the natural action of $\SL(2, \Z)$ on the $T^{2}$ factor of the latter. It is natural to wonder whether one may obtain similar dualities from considering the $G_2\times \SU(2)$ invariant twist of M-theory on other geometries that likewise include torus factors. In this remark, we will offer some speculation that $\M[(S^{1}\times X^{3})_{A}\times \TN_{B}]$, where $X^{3}$ is a Calabi--Yau 3-fold and $\TN$ denotes Taub-NUT, deformed by a particular closed string field recovers the S-duality between topological strings $X^{3}$ developed in \cite{NOV, KapustinSduality}.

The closed string field in question will exist in $\M[(S^{1}\times X^{3})_{A}\times Q_{B}]$ where $Q$ is a holomorphic symplectic surface with a Hamiltonian $\C^{\times}$-action. Letting $H$ denote the Hamiltonian function for this action, we consider a closed string field of the form $d\theta\wedge H$ where $d\theta$ denotes a harmonic representative for the volume form on $S^{1}$. We claim that this closed string field deforms the background geometry into a twisted product $S^{1}\tilde{\times} Q$ which is $Q$-bundle over $S^{1}$ with monodromy given by the Hamiltonian $\C^{\times}$-action on $Q$. Indeed, turning on this closed string field has the effect of deforming the differential $\ell_{1}$ by $\ell_{2}(d\theta\wedge H, -)$. Explicitly, paying attention to just the components of the fields in $\Omega^{\bullet}(S^{1})\otimes\Omega^{0,\bullet}(Q)$, the restriction of the deformed action reads
  \begin{align*}
  S(\alpha) &= \int \left ( \frac{1}{2}\alpha\ell_{1}\alpha+\alpha\ell_{2}(d\theta\wedge H, \alpha)+\frac{1}{6}
              \alpha\ell_{2}(\alpha,\alpha) \right)\wedge \omega\\
    &= \int \left ( \frac{1}{2}(\alpha (d+\delbar)\alpha + \alpha d\theta \wedge\{H,\alpha\})+\frac{1}{6}
      \alpha\{\alpha,\alpha\}\right)\wedge \omega\\
    &= \int \left (\frac{1}{2}(\alpha (d+d\theta\wedge\{H,-\})\alpha+\alpha\delbar\alpha)+\frac{1}{6}
      \alpha\{\alpha,\alpha\}\right)\wedge \omega\\
  \end{align*}
 where $\omega$ denotes the volume form on $Q$. Thus we see that this closed string field acts like a background gauge field which looks like a connection whose parallel transport will rotate $Q$ according to the vector field $\{H,-\}$ as one goes around the base $S^{1}$.

  Now we specialize to the case $Q= \TN$. Identifying $\TN\cong \C^{2}$ as complex manifolds, we can pick holomorphic Darboux coordinates $z_{1}, z_{2}$. There is a Hamiltonian $\C^{\times}$ action with Hamiltonian given by $H(z_{1}, z_{2})= z_{1}z_{2}$. Then, the restriction of the deformed action to the components of the fields in $\Omega^{\bullet}(S^{1})\otimes\Omega^{0,\bullet}(\TN)$ reads
  \begin{align*}
  S(\alpha) = \int \left (\frac{1}{2}(\alpha (d+d\theta\wedge(z_{1}\del_{z_{1}}-z_{2}\del_{z_{2}}))\alpha+\alpha\delbar\alpha)+\frac{1}{6}
      \alpha\{\alpha,\alpha\}\right)\wedge \omega\\
  \end{align*}
 where $\omega$ now denotes the volume form on $\TN$.
 
  We claim that changing the roles of the topological $S^1$ and the circle $S^1\subset \TN$ is responsible the famous correspondence between the Gromov--Witten invariants and Donaldson--Thomas invariants \cite{MNOP} when we add an M2 brane on $S^1\times Y$ to the picture where $Y\subset X^3$ is a holomorphic curves:
\[\xymatrix{
 & \M[ (S^1\times X^3)_A \times_q \TN_B ] \ar[dl]_{S^1_A} \ar[dr]^{S^1\subset \TN_B} \\
\txt{$\left(\IIA[X^3_A\times (\C^2_{z_1,z_2})_B], z_1z_2\right)$\\ $\simeq $ A-model on $X^3$}\ar@{<->}[d]_-{\mb T}^-{\text{along }S^1\subset \TN_B} \ar@{<->}[drr]^-{\text{GW/DT}} && \txt{$\left(\IIA[ (\R\times S^1)_A \times (X^3\times \C_z)_B ], z \right)$\\ with a D6 brane on $S^1\times X^3$} \ar@{<->}[d]_-{\mb T}^-{\text{along }S^1_A} \\
\left(\text{no simple description}\right) \ar@{<->}[rr]^-{\mb S}  && \txt{$\left( \IIB[(X^3\times \C^\times \times \C_z)_B] , z  \right)$\\ with a D5 brane on $X^3$}
}\]

Let us discuss this in more detail. First, reducing $\M[(S^{1}\times X^{3})_{A}\times \TN_{B}]$ on the topological $S^{1}$ yields $\IIA[X^{3}_{A}\times \TN_{B}]$, under which the closed string field $d\theta\wedge z_{1}z_{2}$ maps to $z_{1}z_{2}$. This closed string field has the effect of localizing the degrees of freedom to the critical locus of $z_{1}z_{2}$, which is the origin. This yields the A-model on $X^{3}$ -- considering an M2 brane wrapping $S^{1}\times Y$, where $Y\subset X^{3}$ is a holomorphic curve, yields an F1 wrapping $Y$ upon dimensional reduction. Counting the possible $Y\subset X^{3}$ that the F1 may wrap is supposed to yield the Gromov--Witten invariants of $X^{3}$. For completeness, let us say a few words about its T-dual image as well. Recall that $\TN$ admits a natural tri-Hamiltonian $\U(1)$ action with fixed point the origin -- the hyperk\"ahler moment map $\TN\to \R^{3}$ exhibits $\TN$ as an $S^{1}$ fibration over $\R^{3}$ with the fiber shrinking to a point at $\{0\}\in\R^{3}$. T-dualizing along this $S^{1}$ yields an NS5 wrapping $X^{3}$ and leaves the F1 untouched.

Now let us reverse the roles of the two circles in M-theory. In view of Remark \ref{rmk:hol-red} we have that reducing the $G_2\times \SU(2)$ invariant twist of M-theory on a holomorphic circle will yield the $\SU(4)$ invariant twist of $\IIA$ deformed by a linear superpotential. It is expected that the $S^1$-reduction of $\TN_B$ gives $\R_A\times \C_B$ with a D6 brane in a transversal direction. The discussion in Remark \ref{rmk:hol-red} additionally tells us that retaining higher momentum modes when reducing will yield additional D0 branes bound to the D6 brane. Therefore reducing along the $S^{1}$ fiber of Taub-NUT yields $\IIA [(\R\times S^{1})_{A}\times (X^{3}\times \C)_{B}]$, with a linear superpotential supported along $\C$ and a D6 brane wrapping $S^{1}\times X^{3}$ with D0 branes bound to it. An additional M2 brane wrapping $S^{1}\times Y$ will reduce to a D2 brane wrapping the same locus. The linear superpotential renders closed string field theory trivial, so we expect that the additional closed string field $d\theta\wedge z_{1}z_{2}$ does not further deform the dimensionally reduced theory. Of course, without explicit formulae for the dimensional reduction in Remark \ref{rmk:hol-red}, we are unable to check this claim. Granting this claim however, since the closed string field theory is trivial in perturbation theory, the partition function of the dimensional reduction must simply count states where D0 and D2 branes are bound to a single D6 brane wrapping $X^{3}\times S^{1}$. It is known that such counts recover the DT invariants of $X^{3}$ (see, for instance \cite{AOVY}).

To make contact with the work of \cite{NOV, KapustinSduality} we perform an additional T-duality along the topological $S^{1}$. Doing so takes us to the $\IIB [(X^{3}\times \C^{\times}\times \C)_{B}]$ with a linear superpotential along $\C_{B}$ and configuration of D$(-1)$ and D1 branes bound to a single D5 wrapping $X^{3}$. Once again the closed string field theory and open string field theory are rendered perturbatively trivial by the linear superpotential. One can readily see that the space of solutions to the equations of motion for the world-volume theory deformed by the linear superpotential is given by $T^{*}[-1] \left( \umap (X^{3}, B\GL_{1})_{\dR}\right)$. The partition function of this theory is simply a generating function for counts of instantons of different charges. A configuration of $N$ D$(-1)$ branes and a D1 brane wrapping $Y$ gives us instantons in this theory with charges given by $\ch_{3} = N,\ \ch_{2}= [Y]$. Therefore, we indeed find the DT invariants of $X^{3}$ as they appear in \cite{NOV}. Thus, this version of S-duality recovers the GW/DT correspondence.
\end{rmk}

\section{Applications}\label{section:app}

In this section, we present some applications of our constructions. We focus on the case of flat spaces for technical reasons. We argue that several interesting deformations of 4-dimensional $\mc{N}=4$ supersymmetric gauge theory are S-dual to each other. The deformations in question will be further deformations of the holomorphic-topological twist (also known as Kapustin twist after \cite{Kapustin}). For $G=\GL(N)$ this is precisely the theory living on a stack of $N$ D3 branes wrapping $\R^2\times \C$ in topological string theory on $ \R^4_A\times \C^3_B$ (see Example \ref{ex:HT twist}). The deformations of interest are:
\begin{itemize}
\item The $\mr{HT(A)}$ and $\mr{HT(B)}$ deformations of 4d $\mc{N}=4$ gauge theory. These are deformations of the Kapustin twist that are relevant for the geometric Langlands theory as analyzed in \cite{EY1, EY2, EY3}.
\item Different types of superconformal deformations of 4d $\mc N=4$ gauge theory as is realized in the supergravity setting via AdS/CFT correspondence. We also explain how these arise as homotopies trivializing the action of certain rotations on the background in the flavor of the $\Omega$-deformation.
\item A quadratic superpotential transverse to the world-volume of the D3 branes. This deforms holomorphic-topological twist of 4-dimensional $\mc N=4$ theory with gauge group $\GL(N)$ into 4-dimensional Chern--Simons theory with gauge group $\GL(N|N)$.
\end{itemize}
The desired claims will follow in two steps:
\begin{enumerate}
\item We first check that the holomorphic-topological twist is preserved under S-duality. To do so, we check that the stack of D3 branes is mapped to itself, which is as expected from physical string theory. This amounts to checking that the field sourced by the D3 branes in the sense of Definition \ref{defn: sourcing} is preserved by S-duality. This is also an expected result from the work of Kapustin \cite{Kapustin} but we argue it from a stringy perspective.

\item We then check that the claimed deformations are exchanged under S-duality. This amounts to checking that the preimages of these deformations under the closed-open map are mapped to each other under S-duality.
\end{enumerate}

\subsection{S-duality of a D3 Brane}

In this subsection, we argue that a stack of D3 branes wrapping $\R^2\times\C$ in type IIB supergravity theory on $\R^4_A\times\C^3_B$ is preserved under S-duality and conclude that this is related to a version of the geometric Langlands correspondence.

\subsubsection{S-duality on a Field Sourced by D3 Branes}

As indicated above, the argument will proceed by analyzing the equation for the field sourced by the D3 branes. 

\begin{prop}\label{prop:S on D3}
The field sourced by D3 branes wrapping $\R^2\times\C \subset\R^4_A \times\C^3_B$ is preserved under S-duality.
\end{prop}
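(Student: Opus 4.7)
The plan is to compute an explicit representative for the field sourced by the stack of D3 branes and then observe that it lies in a component on which the S-duality formulas of Subsection 3.3 act trivially.

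Rerunning the derivation of the sourcing equation from the end of Subsection 2.4.2 for $X=\bb C^3$ and $Y=\bb C_z\subset \bb C^3_{z,w_1,w_2}$, the field $F_{D3}$ sourced by $N$ D3 branes wrapping $\bb R^2\times Y$ must satisfy
\[\delbar F_{D3}\vee\Omega_{\bb C^3}=N\,\delta_Y,\qquad \del F_{D3}=0,\qquad \delbar^* F_{D3}=0,\]
where $\Omega_{\bb C^3}=dz\wedge dw_1\wedge dw_2$ and $\delta_Y\in \Omega^{2,2}(\bb C^3)$ is the current Poincar\'e dual to $Y$. Matching bidegrees forces $F_{D3}\in \PV^{1,1}(\bb C^3)$. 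Letting $\omega_{\mr{BM}}\in\PV^{0,1}(\bb C^2_{w_1,w_2})$ denote the Bochner--Martinelli kernel on the transverse $\bb C^2$, I propose the representative
\[F_{D3}=N\,\omega_{\mr{BM}}(w_1,w_2)\cdot\del_z,\]
whose three defining properties are immediate: $F_{D3}\vee\Omega_{\bb C^3}=N\,\omega_{\mr{BM}}\wedge dw_1\wedge dw_2$ has $\delbar=N\,\delta_Y$; $\omega_{\mr{BM}}$ is $z$-independent and $\del_z$ is a constant vector field, so $\del F_{D3}=0$; and the gauge condition follows from that of $\omega_{\mr{BM}}$ on $\bb C^2$. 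The transverse $\bb R^2\subset \bb R^4$ part of the source contributes a compactly supported 2-form in $\Omega^\bullet(\bb R^4)$, which plays no role because $\bb S$ is the identity on $\Omega^\bullet(\bb R^4)$.

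Under the splitting $\bb C^3=\bb C_z\times\bb C^2_{w_1,w_2}$ this representative sits in $\bb C\langle\del_z\rangle\otimes \PV^{0,\bullet}(\bb C^2_{w_1,w_2})\subset \PV^1_{\mr{hol}}(\bb C_z)\otimes \PV^{0,\bullet}(\bb C^2_{w_1,w_2})$, which is exactly the subspace on which Proposition \ref{prop: S on PV1}, translated from $\bb C^\times_z$ to $\bb C_z$ via the restriction-and-limit procedure of Remark \ref{rmk:S-duality on C} and recorded in the identity case of Definition \ref{defn:S-duality on C3}, asserts that $\bb S$ acts as the identity. Hence $\bb S(F_{D3})=F_{D3}$. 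The step warranting the most care is the second: one must verify that the distributional representative $\omega_{\mr{BM}}\del_z$ is a legal element of $\wt{\mc E}_{\mr{mBCOV}}(\bb C^3)$. This reduces to $\del F_{D3}=0$ (already checked) together with the observation that for $d=3$ the complex $\wt{\mc E}_{\mr{mBCOV}}$ differs from $\mc E_{\mr{mBCOV}}$ only in the top polyvector degrees $\PV^{d-1,\bullet}$ and $\PV^{d,\bullet}$, so the $(1,1)$-component is unaffected by the modification.
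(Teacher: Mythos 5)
Your argument is correct in substance and reaches the same conclusion, but by a genuinely different route than the paper. The paper never sets up a sourcing equation directly on the mixed background: it starts from the minimal twist on $\C^5$, where Definition \ref{defn: sourcing} applies verbatim to the B-brane $\C^2_{u,z}\subset\C^5$, takes the codimension-three Bochner--Martinelli representative $F_{\mr{sol}}\in\PV^{2,2}(\C^5)$, checks that $F_{\mr{sol}}$ is annihilated by $[\del_u\wedge\del_v,-]_{\mr{SN}}$ so that it still solves the sourcing equation deformed by the further twist of Example \ref{ex:IIB further twist}, and only then transports it through $\PV^{\bullet,\bullet}(\C^2_{u,v})\cong\Omega^\bullet(\R^4)$; the outcome sits in $(\Omega^1(\R^4)\oplus\Omega^2(\R^4))\otimes\bb C\langle\del_z\rangle\otimes\PV^{0,\bullet}(\C^2_{w_1,w_2})$, i.e.\ exactly the $\bb S$-invariant summand you invoke via Definition \ref{defn:S-duality on C3} and Proposition \ref{prop: S on PV1}. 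You instead posit a factorized representative on $\R^4_A\times\C^3_B$: a closed $2$-form on $\R^4$ representing the Poincar\'e dual of $\R^2$ times the codimension-two kernel $N\omega_{\mr{BM}}\,\del_z$ on $\C^3$. Your checks ($\delbar$ of the kernel producing $N\delta_Y$, $\del$-closedness, the observation that for $d=3$ the modification in $\wt{\mc E}_{\mr{mBCOV}}$ only affects $\PV^{2,\bullet}$ and $\PV^{3,\bullet}$, and the identity action of $\bb S$ extended $\Omega^\bullet(\R^4)$-linearly) are all correct. What your route buys is brevity and manifest compatibility with the product description $(\Omega^\bullet(\R^4),d)\otimes\wt{\mc E}_{\mr{mBCOV}}(\C^3)$; what the paper's route buys is that the sourcing condition in the mixed A--B background, where Definition \ref{defn: sourcing} does not literally apply, is \emph{derived} from the further twist rather than assumed, and it yields the canonical transverse Bochner--Martinelli representative, which is not of product form (its coefficients mix $v,\bar v$ with $w_i,\bar w_i$). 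The one step you should flag as needing justification is precisely your assertion that the A-directions decouple into a closed $2$-form factor; this is the content supplied by the paper's check that the $\C^5$ kernel survives the deformation $[\del_u\wedge\del_v,-]_{\mr{SN}}$. Since your representative is $\del$-closed, solves the (appropriately interpreted) deformed sourcing equation, and lies in the same $\bb S$-invariant component as the paper's, the conclusion is unaffected.
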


\begin{proof}
Let us first consider the minimal twist of type IIB supergravity theory on $\C^5$. For future reference, we say that $\C^5$ has coordinates $u,v,z,w_1,w_2$ and fix a holomorphic volume form $\Omega_{\C^5}= du\wedge dv \wedge dz \wedge dw_1 \wedge dw_2$. Suppose we have a stack of $N$ D3 branes supported at $v=w_1=w_2=0$. By Definition \ref{defn: sourcing}, the field $F$ sourced by the $N$ D3 branes is a $(2,2)$-polyvector $F\in \ker \del $ satisfying the equation \[\delbar F \vee \Omega_{\C^5}=N \delta_{v=w_1=w_2=0}.\]

A solution of this equation is given by the Bochner--Martinelli kernel
 \[F_{\mr{sol}}=N\frac{\del_{u}\wedge\del_{z}}{(|v|^2+|w_1|^2+|w_2|^2)^3}(\bar{v}d\bar{w}_1\wedge d\bar{w}_2+\bar{w}_1d\bar{w}_2\wedge d\bar{v}+\bar{w}_2d\bar{v}\wedge d\bar{w}_2).\]
We are omitting an overall factor to get the same normalization as in Definition \ref{defn: sourcing}, as our argument holds regardless of the normalization. The reader may find formulas with the correct normalization in \cite{GH}.

Now consider the further twist of IIB supergravity theory $\IIB_{\mr{SUGRA}}[\R^4_A\times \C^3_B]$. We may think of it as gotten by making $\C^2_{u,v}$ noncommutative, i.e. by turning on the Poisson bivector $\del_{u}\wedge\del_{v}$. Then the field sourced by the D3 branes now should satisfy the deformed equation \[(\delbar+[ \del_{u}\wedge\del_{v},-]_{\mr{SN}} )F \vee \Omega_{\bb C^5}=N\delta_{v=w_1=w_2=0},\] where $[-,-]_{\mr{SN}}$ denotes the Schouten--Nijenhuis bracket on $\PV^{\bullet,\bullet}(\C^2)$ up to a sign. We claim that $F_{\mr{sol}}$ is in the kernel of $[\del_{u}\wedge\del_{v},-]_{\mr{SN}}$ so that it is still a solution to the deformed equation. To show this, note
\[[\del_u\wedge \del_v ,-]_{\mr{SN}}= \del_u\wedge  \del_v( -  ) \pm \del_u(-)\wedge \del_v  \]
and then $F_{\mr{sol}}$ is evidently annihilated by both of these terms separately. This establishes the claim.

Under the isomorphism $\Omega^\bullet(\R^4)\cong\PV^{\bullet,\bullet}(\C^2_{u,v})$ induced by the standard symplectic form on $\C^2$, the above becomes
\begin{align*}
F_{\mr{sol}}&=N\frac{dv\wedge\del_{z}}{(|v|^2+|w_1|^2+|w_2|^2)^3}(\bar{v}d\bar{w}_1\wedge d\bar{w}_2+\bar{w}_1d\bar{w}_2\wedge d\bar{v}+ \bar{w}_2d\bar{v}\wedge d\bar{w}_1)\\
&\in (\Omega^1 (\bb R^4)\oplus\Omega^2 (\bb R^4))\otimes \bb C\langle  \del_z\rangle \otimes\PV^{0,\bullet}(\bb C^2_{w_1,w_2}) 
\end{align*}
Now, by Definition \ref{defn:twisted S-duality}, $\bb{S}$ acts as the identity on $F_{\mr{sol}}$.
\end{proof}

\subsubsection{Dolbeault Geometric Langlands Correspondence}

We now explain the relation of the above with the so-called Dolbeaut geometric Langlands correspondence. A large part of what follows is a summary of the second author's joint work with C. Elliott \cite{EY1}. For more details and contexts, one is advised to refer to the original article.

Let $C$ be a smooth projective curve and $G$ be a reductive group over $\bb C$ (and we write $\check{G}$ for its Langlands dual group). The best hope version of the geometric Langlands duality asserts an equivalence of DG categories \[\D(\bun_G(C))\simeq \QC(\Flat_{\check{G}}(C)).\]
where $\D(\bun_G(C))$ is the category of D-modules on the space $\bun_G(C)$ of $G$-bundles on $C$ and $\QC(\Flat_{\check{G}}(C))$ is of quasi-coherent sheaves on the space $\Flat_{\check{G}}(C)$ of flat $\check{G}$-connections on $C$. This is the \emph{de Rham geometric Langlands} correspondence.

Kapustin and Witten \cite{KW} studied certain $\bb{CP}^1$-family of topological twists of 4-dimensional $\mc N=4$ supersymmetric gauge theory. To find the relation with the geometric Langlands correspondence, they realized that two twists should play particularly important roles. These are what are called the \emph{A-twist} and \emph{B-twist}, because they become A-model and B-model after certain compactification. Namely, if the 4-dimensional spacetime manifold $X$ is of the form $X= \Sigma\times C$, then compactification along $C$ leads to A-model on $\Sigma$ with target moduli space of Higgs bundles on $C$ and B-model on $\Sigma$ with target the moduli space of flat connections on $C$. Then studying the categories of boundary conditions of S-dual theories leads to a version of the geometric Langlands correspondence. However, this is most naturally seen as depending only on the topology of $C$ (which led to the exciting program of \emph{Betti geometric Langlands} correspondence \cite{BZNBettiLanglands}), as opposed to the algebraic structure of $C$ which the original program is about.

In \cite{EY1}, a framework was introduced to capture the algebraic structure of the moduli spaces of solutions to the equations of motion. Moreover, it was suggested that when $X=\Sigma \times C$ one can study \emph{holomorphic-topological twist} where the dependence is topological on $\Sigma$ and holomorphic on $C$ and the following theorem was proven. Here we write $\EOM(M)=\EOM^G(M)$ for the moduli space of solutions of a twisted gauge theory with group $G$ on a spacetime manifold $M$ and use subscripts to denote which twist we have used.

\begin{thm}\cite{EY1}
\[
\EOM_{\mr{HT}}(\Sigma \times C) = T_{\mr{form}}^*[-1] \umap( \Sigma_{\mr{dR}}\times C_{\mr{Dol}} , BG )  = T_{\mr{form}}^*[-1]  \umap( \Sigma_{\mr{dR}},\higgs_G(C) ) 
\]
where $\higgs _G(Y) : = \umap ( Y_{\mr{Dol}},BG ) $ is the moduli space of $G$-Higgs bundles on $Y$, where $Y_{\mr{Dol}}:=T_{\mr{form}}[1]Y$ is the Dolbeault stack of $Y$, $Y_{\mr{dR}}$ is the de Rham stack of $Y$, and $T^*_{\mr{form}}[-1] (-)$ stands for the formal completion of the $(-1)$-shifted cotangent bundle along the zero section.
\end{thm}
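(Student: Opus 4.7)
We split the statement into the two claimed equalities. The second,
\[T^*_{\mr{form}}[-1]\umap(\Sigma_{\mr{dR}}\times C_{\mr{Dol}}, BG) \simeq T^*_{\mr{form}}[-1]\umap(\Sigma_{\mr{dR}}, \higgs_G(C)),\]
is formal: by Cartesian-closedness of derived stacks, $\umap(\Sigma_{\mr{dR}}\times C_{\mr{Dol}}, BG) \simeq \umap(\Sigma_{\mr{dR}}, \umap(C_{\mr{Dol}}, BG))$, and $\higgs_G(C) := \umap(C_{\mr{Dol}}, BG)$ by the definition stated in the theorem; taking the formal $(-1)$-shifted cotangent is functorial. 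The content is therefore in the first equality.

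The plan is to match both sides as formal moduli problems via their tangent $L_\infty$-algebras at the trivial solution. On the EOM side, I would generalize Example \ref{ex:HT twist} from the flat model $\bb R^2\times \bb C$ to an arbitrary product $\Sigma\times C$, so that the BV complex of the HT-twisted 4d $\mc{N}=4$ gauge theory reads
\[\mc{E}^{\mr{HT}}(\Sigma\times C)=\Omega^\bullet(\Sigma)\otimes\Omega^{0,\bullet}(C)\otimes\mf{g}[\eps_1,\eps_2][1],\]
with $\eps_1,\eps_2$ odd of cohomological degree $1$ encoding the two complex directions transverse to the brane, differential $d_\Sigma+\delbar_C$, and DG Lie bracket induced by the bracket on $\mf{g}$ extended $\bb{C}[\eps_1,\eps_2]$-linearly. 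The classical BV formalism then identifies $\EOM_{\mr{HT}}(\Sigma\times C)$ with the formal moduli problem whose tangent DG Lie algebra is $\mc{E}^{\mr{HT}}(\Sigma\times C)[-1]$.

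On the derived-stack side, the tangent complex at the trivial bundle of $\umap(\Sigma_{\mr{dR}}\times C_{\mr{Dol}}, BG)$ is $\bb{R}\Gamma(\Sigma_{\mr{dR}}\times C_{\mr{Dol}},\mf{g})[1]$. I would compute this using $\bb{R}\Gamma(\Sigma_{\mr{dR}},-)\simeq (\Omega^\bullet(\Sigma),d)$ and, for the Dolbeault factor,
\[\mc{O}(C_{\mr{Dol}})=\sym_{\mc{O}_C}(\Omega^1_C[-1])\simeq \mc{O}_C\oplus\Omega^1_C[-1]\]
(the higher symmetric powers vanishing since $\dim_{\bb{C}}C=1$), so that $\bb{R}\Gamma(C_{\mr{Dol}},\mf{g})\simeq \Omega^{0,\bullet}(C,\mf{g})\oplus\Omega^{1,\bullet}(C,\mf{g})[-1]$. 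One of the odd generators $\eps_i$ appears from this form-valued summand, while the second is contributed by $T^*_{\mr{form}}[-1]$, which adds a Serre-dual copy of the tangent complex in the appropriate shift. Assembling these pieces against the BV description gives the desired identification of tangent $L_\infty$-algebras.

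The main obstacle will be upgrading this identification from the level of cochain complexes to a full equivalence of $L_\infty$-algebras, which is what is required to identify the formal moduli problems themselves rather than just their tangent spaces. On the BV side the higher brackets arise from the cubic Yang--Mills interaction and the $\mc{N}=4$ superpotential in the $\eps$-directions; on the derived-stack side they are governed by the Chevalley--Eilenberg coalgebra of $\mf{g}$ via Koszul duality, together with the natural pairing from $T^*_{\mr{form}}[-1]$. Additionally, the Serre-dual summand contributed by $T^*_{\mr{form}}[-1]$ must be matched globally, not merely in the flat trivialization of $\Omega^1_C$ underlying Example \ref{ex:HT twist}.
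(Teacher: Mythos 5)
First, a point of comparison: the present paper does not prove this theorem at all --- it is imported verbatim from \cite{EY1} --- so there is no in-paper argument to measure your proposal against, only the source and the surrounding discussion (Example \ref{ex:HT twist}, the table of twists, and the remark identifying $\eps_1$ with the ``outer'' and $\eps_2$ with the ``inner'' Dolbeault direction). Within that frame, your treatment of the second equality via the mapping-stack adjunction $\umap(\Sigma_{\mr{dR}}\times C_{\mr{Dol}},BG)\simeq\umap(\Sigma_{\mr{dR}},\umap(C_{\mr{Dol}},BG))$ is correct and complete, and the shape of your attack on the first equality --- match the HT-twisted BV theory against the shifted tangent complex of the mapping stack, with one odd direction coming from $\mc O(C_{\mr{Dol}})\simeq\mc O_C\oplus\Omega^1_C[-1]$ and the other from $T^*_{\mr{form}}[-1]$ --- is the right one and is consistent with the paper's conventions.

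However, as a proof there are genuine gaps, and they sit exactly where the content of the theorem lies. First, matching tangent $L_\infty$-algebras at the trivial solution only identifies the formal completions of the two sides at a single point, whereas $\EOM_{\mr{HT}}(\Sigma\times C)$ and $T^*_{\mr{form}}[-1]\umap(\Sigma_{\mr{dR}}\times C_{\mr{Dol}},BG)$ are global derived stacks containing all bundles and Higgs fields; one must either identify the theories themselves (fields, differential, brackets, and the $(-1)$-shifted pairing) in a way valid at every solution, or show that the pointwise identifications glue over the whole moduli --- this is what \cite{EY1} actually does, by exhibiting the twisted theory directly as the cotangent theory of the relevant elliptic moduli problem of maps. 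Second, your starting formula $\mc E^{\mr{HT}}(\Sigma\times C)=\Omega^\bullet(\Sigma)\otimes\Omega^{0,\bullet}(C)[\eps_1,\eps_2]\otimes\mf{gl}(N)[1]$ with constant odd parameters is only the flat model of Example \ref{ex:HT twist}; on a general curve the globalization is precisely the twisting-homomorphism datum the paper alludes to, under which $\eps_2$ becomes a section of $\Omega^1_C$ (heuristically $\eps_2\rightsquigarrow dz$), and without carrying this out the $\Omega^{1,\bullet}(C)$ summand you extract from $C_{\mr{Dol}}$ and the Serre-dual summand from $T^*_{\mr{form}}[-1]$ cannot be matched beyond $\bb R^2\times\bb C$. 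You flag both this and the upgrade from cochain-level to $L_\infty$/shifted-symplectic equivalence as ``the main obstacle,'' but these deferred steps are the substance of the statement, so what you have is a correct plan and a consistency check of linearizations, not yet a proof.
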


In this language, a B-model with target $X$ would be described by $T^*_{\mr{form}}[-1]\umap(\Sigma_{\mr{dR}}, X)$, so the result can be summarized as stating that compactifying the holomorphic-topological twist along $C$ yields the B-model with target $\higgs_G(C)$. Categorified geometric quantization of B-model is studied in the sequel \cite{EY2} where its relation with the moduli space of vacua is also investigated in detail. For the purpose of this paper, one can take the category of boundary conditions of the B-model with target $X$ to be $\QC(X)$ as is common in the context of homological mirror symmetry.

Now note that the holomorphic-topological twist is exactly what we see by putting D3 branes on $\bb R^2 \times\bb C \subset \bb R^4_A \times \bb C^3_B$. The globalization data needed to consider a theory on a non-flat space $\Sigma\times C$ comes from a twisting homomorphism. Then the fact that the field sourced by D3 branes on $\bb R^2 \times \bb C$ is preserved under S-duality suggests that the holomorphic-topological twist must be self-dual under S-duality for $G=\GL(N)$. In other words, our S-duality result expects a nontrivial conjectural equivalence $\QC(\higgs_G(C)) \simeq \QC(\higgs_G(C))$ for $G=\GL(N)$. This is compatible with the conjectural equivalence $\QC(\higgs_G(C))\simeq \QC(\higgs_{\check{G}}(C))$ for a general reductive group $G$, which is what is known as the \emph{classical limit of geometric Langlands} or \emph{Dolbeault geometric Langlands} correspondence of Donagi and Pantev \cite{DonagiPantev}.

\subsection{S-duality and De Rham Geometric Langlands Correspondence}\label{sub:S-dualdeRham}

We claim that our S-duality map in fact predicts the de Rham geometric Langlands correspondence in a way different from the suggestion of Kapustin and Witten \cite{KW}.

To see this, recall that $N$ D3-branes on $\bb R^2\times \bb C\subset \bb R^4_A \times \bb C^3_B$ yields the holomorphic-topological twist of $\GL(N)$ gauge theory described by \[\mc E_{\mr{D3}}^{\mr{HT}} = \Omega^\bullet(\bb R^2)\otimes \Omega^{0,\bullet}(\bb C)[\eps_1,\eps_2]\otimes \mf{gl}(N)[1] \qquad \text{with differential} \qquad  d_{\bb R^2} \otimes 1 + 1 \otimes \delbar_{\bb C}. \]
By definition, HT(A)-deformation is given by $\del_{\eps_1}$ and HT(B)-deformation is given by $\eps_2\del_z$ where $z$ is a coordinate of $\bb C$. Again, the way we globalize to consider the spacetime of the form $\Sigma\times C$ amounts to fixing a twisting homomorphism. This globalization process is carefully discussed in the previous  work of the second-named author with C. Elliott \cite{EY1}:

\begin{thm}\cite{EY1}
\begin{align*}
 \EOM_{\text{HT(A)}}(\Sigma \times C) & = \umap(\Sigma_{\mr{dR}} \times C_{\mr{Dol}},BG ) _{\mr{dR}}  &=  T_{\mr{form}}^*[-1]\umap( \Sigma_{\mr{dR}} ,  \bun_G(C)_{\mr{dR}} ) \\
 \EOM_{\text{HT(B)}}(\Sigma \times C) & = T_{\mr{form}}^*[-1] \umap( \Sigma_{\mr{dR}}\times C_{\mr{dR}} , BG )  &=T_{\mr{form}} ^*[-1] \umap( \Sigma_{\mr{dR}}  , \Flat_G(C) )
\end{align*}
\end{thm}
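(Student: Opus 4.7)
The plan is to leverage the preceding theorem's description of $\EOM_{\mr{HT}}$ and treat HT(A) and HT(B) as further twists of HT by A-type and B-type supercharges, respectively. Each such further twist corresponds to a square-zero odd element in the residual supersymmetry algebra of the HT theory, and from the BV perspective manifests as a deformation of the differential (and possibly the bracket) on the local $L_\infty$-algebra whose Maurer--Cartan locus realizes $\EOM_{\mr{HT}}(\Sigma\times C) = T_{\mr{form}}^*[-1]\umap(\Sigma_{\mr{dR}}\times C_{\mr{Dol}},BG)$. For each further twist, the goal is then to identify the geometric operation on the mapping stack that this BV deformation implements.

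For HT(B), I would first verify that the additional supercharge acts on the HT fields by adding $\del_C$ to the existing differential $d_\Sigma + \delbar_C$, so the deformed local $L_\infty$-algebra is valued in $\Omega^\bullet(\Sigma)\otimes\Omega^\bullet(C)$ tensored with the appropriate adjoint complex. The Maurer--Cartan locus is then $T_{\mr{form}}^*[-1]\umap(\Sigma_{\mr{dR}}\times C_{\mr{dR}},BG)$, and the second description follows from the adjunction
\[\umap(\Sigma_{\mr{dR}}\times C_{\mr{dR}},BG) \simeq \umap(\Sigma_{\mr{dR}}, \umap(C_{\mr{dR}},BG)) = \umap(\Sigma_{\mr{dR}},\Flat_G(C)),\]
combined with compatibility of the $T_{\mr{form}}^*[-1]$ construction with the outer mapping space. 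For HT(A), the analogous step is to argue that the extra supercharge implements the $(-)_{\mr{dR}}$ operation on the HT mapping space, yielding $\umap(\Sigma_{\mr{dR}}\times C_{\mr{Dol}},BG)_{\mr{dR}}$. The second form then follows from the adjunction $\umap(\Sigma_{\mr{dR}}\times C_{\mr{Dol}},BG) = \umap(\Sigma_{\mr{dR}},\higgs_G(C))$ together with the identification $\higgs_G(C) \simeq T_{\mr{form}}^*[-1]\bun_G(C)$ coming from the definition of the Higgs stack, reducing the check to a Fubini-type identity commuting $(-)_{\mr{dR}}$ past $\umap(\Sigma_{\mr{dR}},-)$ and $T_{\mr{form}}^*[-1]$.

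The hard part will be the HT(A) case. Translating the A-type further supercharge into the $(-)_{\mr{dR}}$ operation on the mapping space is delicate because the relevant BV deformation does not reduce to a clean modification of the differential; it also affects the bracket, encoding an infinitesimal flatness condition in the $\Sigma$ directions that promotes the Dolbeault stack on the inside to a de Rham stack on the outside. Careful bookkeeping with the BV bracket on the explicit HT field complex, together with its coupling to the transverse scalar fields of the $\mc{N}=4$ theory, should establish the matching. The HT(B) case should be cleaner, as the deformation there is essentially just the addition of $\del_C$ to the differential and the identifications reduce to standard mapping-space adjunctions.
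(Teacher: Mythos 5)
First, note that the paper you are working from does not actually prove this statement: it is imported wholesale from \cite{EY1}, so the only meaningful comparison is with the proof there. Your high-level strategy --- realize HT(A) and HT(B) as further deformations of the HT twist by the square-zero elements that this paper later writes as $\del_{\eps_1}$ and $\eps_2\del_z$, and then interpret these deformations geometrically on the mapping-stack description of $\EOM_{\mr{HT}}$ --- is in fact the same skeleton as in \cite{EY1}. But two things in your sketch are off. The identification you invoke for the HT(A) column, $\higgs_G(C)\simeq T_{\mr{form}}^*[-1]\bun_G(C)$, is wrong: $\higgs_G(C)=\umap(C_{\mr{Dol}},BG)$ is the \emph{$0$-shifted} formal cotangent $T^*_{\mr{form}}\bun_G(C)$ (Serre duality on the curve), and the $[-1]$ appearing in the theorem comes from the $(-1)$-shifted cotangent of the full 4d BV theory, not from the definition of the Higgs stack. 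What actually has to be shown for HT(A) is the equivalence $\umap(\Sigma_{\mr{dR}}\times C_{\mr{Dol}},BG)_{\mr{dR}}\simeq T_{\mr{form}}^*[-1]\umap(\Sigma_{\mr{dR}},\bun_G(C)_{\mr{dR}})$, and the ``Fubini-type identity'' you wave at --- commuting $(-)_{\mr{dR}}$ past the outer mapping stack and absorbing the formal cotangent directions (using that the de Rham stack is insensitive to infinitesimal thickenings, so $\higgs_G(C)$ may be replaced by $\bun_G(C)$) --- is precisely the nontrivial derived-algebro-geometric content of \cite{EY1}, not a routine adjunction. Only the HT(B) column reduces to the straightforward adjunction $\umap(\Sigma_{\mr{dR}}\times C_{\mr{dR}},BG)\simeq\umap(\Sigma_{\mr{dR}},\Flat_G(C))$ inside $T^*_{\mr{form}}[-1]$, as you say.

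The deeper gap is that your argument lives entirely at the level of perturbative BV complexes, while the theorem is a statement about global derived moduli stacks. Adding $\del_{\eps_1}$ to the local complex $\Omega^\bullet(\Sigma)\otimes\Omega^{0,\bullet}(C)[\eps_1,\eps_2]\otimes\mf{gl}(N)[1]$ renders it contractible (it cancels $\eps_1$ against its partner); this is \emph{consistent} with the vanishing tangent complex of a de Rham stack, but it cannot by itself identify $\EOM_{\mr{HT(A)}}$ with $\umap(\Sigma_{\mr{dR}}\times C_{\mr{Dol}},BG)_{\mr{dR}}$, nor can the local statement ``$\eps_2\del_z$ becomes $\del_C$'' be made before fixing the twisting homomorphism that globalizes the theory on $\Sigma\times C$ and identifies $\eps_2$ with $dz$ (the paper itself flags that this globalization is handled carefully only in \cite{EY1}). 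Relatedly, the inputs you take for granted --- the HT theorem and the identification of the A- and B-type supercharges with exactly these two deformations --- are themselves results quoted from \cite{EY1}, so a blind proof would have to establish them from the action of the $\mc N=4$ supersymmetry algebra on the (first-order/BV) theory rather than assume them. In short: right architecture, but the HT(A) half rests on an incorrect identification and on an interchange-of-functors step that is exactly where the real proof does its work.
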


\begin{note}
The comparison is summarized in the following table:
\begin{center}
\begin{tabular}{|c|c|c|}
\hline
theory & perturbative local & non-perturbative global \\
\hline
$\mc E_{\mr{D3}}^{\mr{HT}}$&  $ \Omega^\bullet(\bb R^2)\otimes \Omega^{0,\bullet}(\bb C)[\eps_1,\eps_2]\otimes \mf g$ & $T_{\mr{form}}^*[-1]  \umap( \Sigma_{\mr{dR}},\higgs_G(C) )$ \\
$\mc E_{\mr{D3}}^{\mr{HT(A)}}$ & $ \Omega^\bullet(\bb R^2)\otimes \Omega^{0,\bullet}(\bb C)[\eps_1,\eps_2]\otimes \mf g$ with $\del_{\eps_1}$ & $T_{\mr{form}}^*[-1]\umap( \Sigma_{\mr{dR}} ,  \bun_G(C)_{\mr{dR}} )$\\
$\mc E_{\mr{D3}}^{\mr{HT(B)}}$ & $ \Omega^\bullet(\bb R^2)\otimes \Omega^{0,\bullet}(\bb C)[\eps_1,\eps_2]\otimes \mf g$ with $\eps_2\del_z$ & $T_{\mr{form}}^*[-1]\umap( \Sigma_{\mr{dR}} ,  \Flat_G(C) )$\\
\hline 
\end{tabular}
\end{center}
Heuristically speaking, one can rewrite \[T_{\mr{form}}^*[-1]  \umap( \Sigma_{\mr{dR}},\higgs_G(C) )\simeq \umap(\Sigma_{\mr{dR}}, \umap(C_{\mr{Dol}}, BG ) )_{\mr{Dol}}\]
where $\eps_i$ is responsible for each Dolbeault stack.	With our choice of convention, $\eps_1$ is for the outer Dol and $\eps_2$ is for the inner Dol. Then $\del_{\eps_1}$ is exactly what deforms the outer Dol to dR. Moreover, the twisting homomorphism makes $\eps_2$ as $dz$ so $\eps_2\del_z$ becomes the $\del$-operator, which deforms $C_{\mr{Dol}}$ to $C_{\mr{dR}}$. This explains why the HT(A) and HT(B) deformations realize the desired global descriptions.
\end{note}

\begin{rmk}
Here the notation is different from the one of \cite{EY1}. We use the notation HT(A) and HT(B) to emphasize that those are realized as further deformations of the holomorphic-topological twist. By considering the category of boundary conditions or taking the category of quasi-coherent sheaves of the target of the B-model, we obtain $\D(\bun_G(C)):=\QC(\bun_G(C)_{\mr{dR}})$ and $\QC(\Flat_G(C))$ respectively for $G=\GL(N)$. Note that these are the main protagonists of the de Rham geometric Langlands correspondence and we did not need to invoke A-model or analytic dependence of the moduli space. Indeed, this was one of the main points of \cite{EY1} on realizing the physical context of the de Rham geometric Langlands correspondence.

On the other hand, that there is a sense in which these two deformations are S-dual to each other is a new proposal that is different from the work of Kapustin and Witten \cite{KW}; if we thought of HT(A) and HT(B) deformations in terms of supercharges and considered the corresponding twists of the 4d $\mc N=4$ gauge theory, then they would not have been S-dual to each other, because those supercharges do not even have the same rank. However, here we think of HT(A) and HT(B) as deformations of the HT twist, which in particular does not depend on the coupling constant of the gauge theory. Because twisted theories by supercharges possibly depend on the coupling constant, the theories deformed from the HT twist can be rather different from the ones directly realized from the 4d gauge theory by twists, although they have the same moduli spaces of solutions to the equations of motion at the classical level. 
\end{rmk}


Now we argue that these two deformations of the HT twisted theory are indeed S-dual pairs, thereby recovering the de Rham geometric Langlands correspondence from our framework. Since globalization process is explained in the original paper \cite{EY1}, it remains to explain that those two deformations are S-dual to each other on a flat space, which is very easy:

\begin{prop}
The HT(A) and HT(B) deformations are mapped to each other under S-duality.
\end{prop}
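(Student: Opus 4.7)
The plan is to use the closed-open map from Subsection \ref{subsub:CO map} to lift the two deformations $\del_{\eps_1}$ (for HT(A)) and $\eps_2\del_z$ (for HT(B)) to closed string fields on $\bb C^3$, and then apply the explicit S-duality formulas of Definition \ref{defn:S-duality on C3} to verify that S-duality exchanges these two lifts. This reduces the entire proof to a short polyvector calculation.

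Concretely, recall that for D3 branes on $\bb C_z \subset \bb C^3_{z,w_1,w_2}$ (times $\bb R^2$, which is inert), the closed-open map identifies polyvector fields on $\bb C^3$ with polyvectors on the thickened brane $\bb C^{1|2}_{z,\eps_1,\eps_2}$ via the identifications $z\mapsto z$, $\del_z\mapsto \del_z$, $w_i\mapsto\del_{\eps_i}$ and $\del_{w_i}\mapsto \eps_i$. Under this dictionary:
\begin{itemize}
\item the HT(A) deformation $\del_{\eps_1}$ is the image of $w_1 \in \PV^{0}_{\mr{hol}}(\bb C^3)$;
\item the HT(B) deformation $\eps_2\del_z$ is the image of $\del_{w_2}\wedge \del_z \in \PV^{2}_{\mr{hol}}(\bb C^3)$.
\end{itemize}
So it suffices to check that S-duality sends $w_1 \leftrightarrow \pm \del_z\wedge \del_{w_2}$ in $\PV_{\mr{hol}}(\bb C^3)$.

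First, I would compute $\bb S(w_1)$. Since $\del(w_1\,\del_{w_1}\wedge\del_{w_2}) = \del_{w_2}\neq 0$, the first bullet of Definition \ref{defn:S-duality on C3} applies and yields
\[
\bb S(w_1)\;=\;-\del_z\wedge\bigl(\del_{w_1}(w_1)\,\del_{w_2} - \del_{w_2}(w_1)\,\del_{w_1}\bigr)\;=\;-\del_z\wedge\del_{w_2},
\]
which sits in $\PV^1_{\mr{hol}}(\bb C_z)\otimes \PV^{1,\bullet}(\bb C^2_{w_1,w_2})$. Under the closed-open dictionary this is precisely (minus) the HT(B) deformation $\eps_2\del_z$.

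Second, I would verify the dual direction to confirm that $\bb S$ genuinely swaps the two deformations. The element $\del_z\wedge\del_{w_2}$ lives in $\PV^1_{\mr{hol}}(\bb C_z)\otimes \PV^{1,\bullet}(\bb C^2_{w_1,w_2})$, so the fourth bullet of Definition \ref{defn:S-duality on C3} applies: $\bb S(\del_z \otimes \del_{w_2}) = 1\otimes F$ where $F$ satisfies $\del(F\,\del_{w_1}\wedge\del_{w_2}) = \del_{w_2}$. From the computation above, $F=w_1$ does the job, giving $\bb S(\del_z\wedge\del_{w_2}) = w_1$. Thus $\bb S$ exchanges the two closed string lifts, and by functoriality of the closed-open map, the induced deformations of the D3 world-volume theory are interchanged.

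There is no real obstacle here; the only mild subtlety is bookkeeping the sign conventions and making sure the non-vanishing branch of Definition \ref{defn:S-duality on C3} is the correct one for $F=w_1$, which is immediate. The content of the proof is therefore entirely in observing that the HT(A) and HT(B) deformations, once pulled back through the closed-open map, sit in polyvector degrees $0$ and $2$ respectively and are Fourier dual with respect to the coordinate $w_2$, which is exactly what the S-duality formula implements on $\bb C^3$.
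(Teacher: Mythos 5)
Your proof is correct and follows essentially the same route as the paper: identify the closed-string preimages of $\del_{\eps_1}$ and $\eps_2\del_z$ under the closed-open map as $w_1$ and $\del_{w_2}\wedge\del_z$, then apply Definition \ref{defn:S-duality on C3} to see that $\bb S$ exchanges them (your $-\del_z\wedge\del_{w_2}$ is the paper's $\del_{w_2}\wedge\del_z$). The extra verification of the reverse direction via the fourth bullet is a harmless consistency check beyond what the paper records.
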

\begin{proof}
Under the closed-open map, the preimages of the deformations $\del_{\eps_1}$ and $\eps_2\del_z$ are superpotentials $w_1$ and the Poisson tensor $\del_{w_2}\wedge\del_z$, where $w_i$ denote holomorphic coordinates transverse to the world-volume of the D3 branes. Now, by Definition \ref{defn:twisted S-duality}, we see that \[\bb S (w_1)= \del _{w_2}\wedge \del_z.\]
The overall sign doesn't matter for the twists, so we are done.
\end{proof}


\begin{rmk}
A recent paper \cite{EY3} by the second-named author with C. Elliott argues that the twisted S-duality in our sense also recovers the quantum geometric Langlands correspondence in a simpler way than the one of Kapustin and Witten \cite{KW}.
\end{rmk}

\subsection{S-duality on Superconformal Symmetries}\label{sub:superconformal}

In this subsection, we will study how S-duality acts on certain closed string fields that are quadratic polynomials. These will turn out to come from superconformal symmetries of the world-volume theory of a D3 brane, but realized as closed string fields in twisted IIB supergravity.

In \cite[Section 9]{CLSugra} Costello and Li identified certain deformations of the holomorphic twist of 4d $\mc N=4$ theory that come from an action of the 4d $\mc N=4$ superconformal algebra. This was done in the context of a proposal for a fully holomorphically twisted version of the AdS/CFT correspondence between the holomorphic twist of 4d $\mc N=4$ theory and the $\SU(5)$-invariant twist of type IIB supergravity. Concretely, the proposal posits a relationship between the world-volume theory of a large number of D3 branes wrapping $\C^{2}\subset \C^{5}$, and $\IIB_{\mr{SUGRA}}[\left(\C^5 \setminus \C^{2}  \right)_B]$ deformed by the flux sourced by the D3 branes. To formulate such a correspondence, a natural first step is to match the symmetries of either object. The holomorphic twist of the 4d $\mc N=4$ superconformal algebra naturally acts on the former theory, and Costello and Li construct the corresponding action on the latter theory.

Explicitly, this is established through the following results:

\begin{thm}[\cite{CLSugra}]\hfill 
  \begin{enumerate}
    \item The holomorphic twist of the 4d $\mc N=4$ superconformal algebra is equivalent to $\mf{psl}(3|3)$ as a Lie superalgebra.
    \item There is a map of Lie algebras \[\mf{psl}(3|3)\to \mc{E}_{\mr{mBCOV}}(\C^{5}) .\]
    \item Consider $N$ D3 branes wrapping $\C^{2}\subset \C^{5}$ and let $F$ denote the flux they source. There exist $N$-dependent corrections to the above map to yield a map of Lie algebras \[\mf{psl}(3|3)\to  \left( \PV^{\bullet, \bullet}(\C^{5}\setminus \C^{2})\llbracket t \rrbracket,\ \ell_{1}=\delbar + t\del + N[F,-]_{\mr{SN}},\ \ell_{2}=[-,-]_{\mr{SN}}   \right) .\]
  \end{enumerate}
\end{thm}

 Let us focus our attention on a collection of odd elements in the image of the map $\mf{psl}(3|3)\to \mc{E}_{\mr{mBCOV}}(\C^{5})$, which come from those superconformal symmetries that survive the holomorphic twist. Concretely, choosing coordinates $\C^2_{u,z} \subset  \C^5_{u,v,z,w_1,w_2}$, the relevant closed string fields are given by elements \[uv, \quad  u w_1,\quad  uw_2,\quad  z v,\quad  zw_1,\quad  zw_2 \quad  \in \quad  \PV^0_{\mr{hol}}(\C^5 ) \subset \PV^{0}_{\mr{hol}}(\C^5 \setminus \C^2 ) \] and another type of three elements of $ \PV^2_{\mr{hol}}(\C^5 ) \subset   \PV^2_{\mr{hol}}(\C^5 \setminus \C^2)  $ given by
 \[  \del_v ( u\del_u + z \del_z - w_1 \del_{w_1} - w_2 \del_{w_2}  ),\quad  \del_{w_1}(  u\del_u + z \del_z - v \del_{v} - w_2 \del_{w_2}  ),\quad  \del_{w_2}(  u\del_u + z \del_z - v \del_{v} - w_1 \del_{w_1}  ).\]

We wish to analyze a subset of these deformations that survive to the further twist $\IIB_{\mr{SUGRA}}[\R^4_A\times \C^3_B]$. To properly do so, we should consider the cohomology of $\mf{psl}(3|3)$ with respect to an $\SU(3)$-invariant supercharge; we will present the details of this calculation elsewhere. The residual superconformal symmetries which are nontrivial in cohomology are $zw_1, zw_2 \in \PV^0_{\mr{hol}}(\C^3)$ and $\del_{w_1}( z\del_z - w_2\del _{w_2} ), \del_{w_2}( z\del_z - w_1 \del_{w_1} )\in  \PV^2_{\mr{hol}}(\C^3 )$.

The next proposition says that these residual symmetries are precisely S-dual to each other in the following way:

\begin{prop}
Under the twisted S-duality map $\bb S$, we obtain the following correspondence
\begin{align*}
z w_1 \quad &\longleftrightarrow \quad \del_{w_2}( z\del_z - w_1 \del_{w_1} )\\
z w_2 \quad &\longleftrightarrow \quad -\del_{w_1}( z\del_z - w_2\del _{w_2} ).
\end{align*}
\end{prop}

\begin{proof}
It follows from the following simple computations:
 \[\bb S(z w_1)= w_1 \del_{w_1}\del_{w_2} -z \del_z\del_{w_2}= \del_{w_2}( z\del_z - w_1\del _{w_1} )\]
\[\bb S(z w_2)= w_2 \del_{w_1}\del_{w_2} + z \del_z\del_{w_1}= - \del_{w_1}( z\del_z - w_2\del _{w_2} ).\]
\end{proof}

\begin{rmk}\label{rmk:Omega}
In this remark, we provide another interpretation of these deformations.\footnote{We are grateful to Dylan Butson and Brian Williams for very enlightening discussions on the topic of this remark.} The following claims are conjectural at the moment and discussions of the precise mathematical framework needed to articulate the nature of relevant objects is beyond the scope of the current paper.

In \cite{SaberiWilliams}, the authors computed the holomorphic twist of the 4d $\mc N=2$ superconformal algebra and identified certain odd elements that correspond to the superconformal deformation of \cite{BLLPRvR} used to construct chiral algebras associated to 4d $\mc N=2$ SCFT. Since the holomorphic twist of the 4d $\mc N=2$ superconformal algebra acts on the holomorphic twist of a 4d $\mc N=2$ SCFT, this furnishes a construction of these chiral algebras via the holomorphic twist. Let us sketch this construction in the case of the holomorphic twist of 4d $\mc N = 4$ theory. Recall that the holomorphic twist is described by $\Omega^{{0,\bullet}}(\C^{2})[\eps_{1},\eps_{2},\eps_{3}]\otimes \mf{gl}_{N}$ in the perturbative BV formalism. Fixing coordinates $u,z$ on $\C^2$, desired superconformal deformation takes the form $u\del_{\eps_i}$ or $z\del_{\eps_i}$. For instance, the element $u\del_{\eps_3}$ turns $\Omega^{0,\bullet}(\C_u)[\eps_3]$ into the Koszul complex for the origin in $\C_u$ so the resulting theory is quasi-isomorphic to $\Omega^{{0,\bullet}}(\C)[\eps_1, \eps_2]\otimes \mf{gl}_N$. Then from the result of \cite{BLLPRvR}, we know Lie algebra cochains of this should be the semiclassical limit of the chiral algebra associated to 4d $\mc N=4$ SCFTs. Note that this set-up can be recovered from considering $N$ D3 branes wrapping $\C^{2}\subset \C^5$ in $\IIB[\C^5_B]$ and turning on a closed string field of the form identified above.

On the other hand, the same chiral algebra was recently understood \cite{Butson, Jeong, OhYagi} in terms of a certain $\Omega$-background in the holomorphic-topological twist of the 4d $\mc N=2$ theory \cite{Kapustin}. For comparison with the above, let us explain how this works in the example of the holomorphic-topological twist of 4d $\mc N=4$ theory. The fields of the theory are given by $\Omega^\bullet(\R^2)\otimes \Omega^{0,\bullet}(\C)[\eps_1, \eps_2]\otimes \mf {gl}_N$. We subject the theory to a so-called B-type $\Omega$-background on $\R^2$ in the sense of \cite{Yagi}. This corresponds to working equivariantly with respect to a rotation about the origin in $\R^2$. This amounts to replacing $\Omega^{\bullet}(\R^{2})$ with the Cartan model for $S^{1}$-equivariant cohomology of $\R^{2}$, that is, we end up with $\Omega^{\bullet}(\R^{2})[t]^{{S^{1}}}\otimes\Omega^{0, \bullet}(\C)[\eps_{1}, \eps_{2}]\otimes\mf {gl}_{N}$ where $t$ is a parameter of degree 2 and $\ell_{1} = d+t\iota_{\del_{\theta}}+ \delbar$. This is a family of complexes over $\C[t]$; the localization theorem tells us that the generic fiber is quasi-isomorphic to $\Omega^{0,\bullet}(\C)[\eps_{1},\eps_{2}]\otimes \mf {gl}_{N}$ with differential $\ell_{1}=\delbar$.

The above example illustrates the following phenomenon. A superconformal deformation of the holomorphic twist has the same effect as subjecting the theory to a further twist, and performing an $\Omega$-background construction along the newly topological directions. Moreover, the superconformal deformation plays the role as a homotopy trivializing the complexified infinitesimal action of the rotation we work equivariantly with respect to when performing an $\Omega$-deformation. It is natural to ask whether a similar phenomena occurs for those superconformal deformations that correspond to the ones we have identified above under a closed-open map.

Our starting point is the holomorphic-topological twist, not the holomorphic twist. Still we will see the similar relation between superconformal deformations and $\Omega$-background. One claim is that the HT(B)-twist $\eps_2\del_z$ from the previous example and the deformation $z\del_{\eps_2}$ combine to give an $\Omega$-background in the B-twist. More precisely, the complexification of the infinitesimal action of rotations in the $z$-plane acts on the fields of the HT(B)-twist via the Lie derivative $\mc L_{{z\del_{z}}}$. This action can be made null-homotopic for the differential in the HT(B)-twist by way of the homotopy $z\del_{{\eps_{2}}}$ -- this is precisely the superconformal deformation. One can show that working equivariantly with respect to the above rotation yields a family of complexes over $H^{\bullet}(B\C^{\times})$ whose generic fiber is quasi-isomorphic to the deformation of $\mc E^{\mr{HT}}_{\text{D3}}$ by $z\del_{\eps_{2}}$.

Somewhat surprisingly, the same pattern persists after S-duality. We claim that the HT(A)-twist  $\del_{\eps_1}$ and the deformation $ \eps_1\eps_2   \del_{\eps_2} + \eps_1z \del_z$ similarly combine to give an $\Omega$-background on the A-twist in the original sense of Nekrasov \cite{Nekrasov}. Once again, the action of complexified rotations on the space of fields is made homotopically trivial by way of a map that corresponds to a superconformal deformation under the closed open map. The situation is summarized in the following table:
\begin{center}
\begin{tabular}{|c|c|c|}
\hline
 & A-type & B-type \\
\hline
twist of $\mc E_{\mr{D3}}^{\mr{HT}}$ & $ \Omega^{\bullet} (\bb R^2) [\eps_2] \otimes \left( \Omega^{0,\bullet}(\bb C) \eps_1 \stackrel{\del_{\eps_1}}{\longrightarrow} \Omega^{0,\bullet}(\bb C)\right)  $ & $\Omega^{\bullet} (\bb R^2)[\eps_1]\otimes \left(\Omega^{0,\bullet}(\bb C)\stackrel{\eps_2 \del_z}{\longrightarrow} \Omega^{0,\bullet}(\bb C) \eps_2\right)  $\\
\hline
$\Omega$-background & $\mc L_{z\del_z}=z\del_z +\eps_2 \del_{\eps_2}$  & $\mc L_{z \del_z} = z\del_z + \eps_2 \del_{\eps_2}$\\
\hline
deformation as & $\eps_1\eps_2   \del_{\eps_2} + \eps_1 z \del_z $ on $\bb C[z,\eps_1,\eps_2]$  & $ z\del_{\eps_2}$ on $\bb C[z,\eps_2]$\\
homotopy for $\Omega$ & $[\del_{\eps_1},\eps_1\eps_2   \del_{\eps_2} + \eps_1 z \del_z] =z\del_z +\eps_2 \del_{\eps_2}$  & $[\eps_2\del_z,z\del_{\eps_2}]=z\del_z +\eps_2\del_{\eps_2}$\\
\hline
\end{tabular}
\end{center}

Let us comment on the nature of the claim we are making. The $\Omega$-background in a twist of a supersymmetric gauge theory is implemented by an equivariant structure on topological directions of spacetime. This yields a family of theories over the equivariant cohomology $H^{\bullet}(B\C^{\times})$ whose generic fiber describes a theory that lives at the fixed points of the $\C^{\times}$-action. Similarly, $\Omega$-background constructions in twists of type II supergravity theories can be implemented by replacing the de Rham complex that appeared in the A-twisted directions of our supergravity theories by the Cartan model. Moreover, one might expect an implementation of this procedure at the level of topological string theory. Starting with the data of a Calabi--Yau category with a categorical $S^{1}$-action, one may hope to produce $\Omega$-deformed world-volume theories from ext computations and $\Omega$-deformed supergravity as a subset of the cyclic cochains. We will return to a discussion of the $\Omega$-background in supergravity in these terms elsewhere.

\end{rmk}

Remark \ref{rmk:Omega} is an ingredient of a derivation from twisted string theory, of why the Yangian appears in the algebra of monopole operators in the A-twist of 3-dimensional $\mc{N}=4$ theory as in \cite{BDG, BFNQuiver}. We first learned this perspective from Costello in his talk \cite{CostelloSMtalk} at the 2017 String-Math conference, and works that expand this perspective further include \cite{MZ}, \cite{IZ} Here the A-twist means the twist where the algebra of local operators parametrizes the Coulomb branch. A discussion on related topics is also given in the work of Costello and Yagi \cite{CostelloYagi}.

We recall Costello's derivation, adapted to our context. Consider $\IIB [\R^4_A\times\C^3_B]$ with the following brane configuration:
\begin{center}
\begin{tabular}{|c|c|c|c|c|c|c|c|c|c|c|}
\hline
& 0 & 1 & 2 & 3 & 4 & 5 & 6 & 7 & 8 & 9\\
\hline
&\multicolumn{2}{c|}{$u$}&\multicolumn{2}{c|}{$v$}& \multicolumn{2}{c|}{$z$} &\multicolumn{2}{c|}{$w_1$}  &\multicolumn{2}{c|}{$w_2$}\\
\hline
$K$ D5 && $\times$ & $\times$ &  & $\times$ & $\times$ & $\times$ & $\times$  &&  \\
\hline
$N$ D3 &  $\times$ & $\times$ && & $\times$ & $\times$ & & & &\\
\hline
\end{tabular}
\end{center}
We also turn on a closed string field of the form $zw_2$.

Let us first consider the stack of D5 branes. The world-volume theory of the stack of D5 branes is a holomorphic-topological twist of 6-dimensional $\mc{N}=(1,1)$ supersymmetric gauge theory. Computing the self-Ext of this brane, we find that the space of fields is given by 
\[\mc E= \Omega^\bullet(\bb R^2)\otimes \Omega^{0,\bullet}(\bb C^2_{z,w_1})[\varepsilon_2]\otimes\mf {gl}(K)[1].\] As the image of the closed string field under the closed-open map is the deformation $z\del_{\varepsilon_2}$, turning this on leads to
\begin{align*}
& \Omega^\bullet(\bb R^2)\otimes \Omega^{0,\bullet}(\bb C_{w_1})\otimes\left (\Omega^{0,\bullet}(\bb C_z)\varepsilon_2\stackrel{z\del_{\varepsilon_2}}{\longrightarrow}\Omega^{0,\bullet}(\bb C_z)\right)\otimes\mf {gl}(K)[1]\\ 
\cong  &\ \Omega^\bullet(\bb R^2)\otimes \Omega^{0,\bullet}(\bb C_{w_1})\otimes\mf {gl}(K)[1]
\end{align*}
In sum, the D-brane gauge theory on the D5 branes becomes 4d Chern--Simons theory on $\bb R^2\times \bb C_{w_1}$ with gauge group $\GL(K)$. In fact, what we have described is exactly the construction of \cite{CostelloYagi} in our chosen protected sector of type IIB theory.

Now let us consider what happens to the D3 branes. Of course, the D-brane gauge theory of the D3 branes is precisely $\mc E_{\mr{D3}}^{\mr{HT}}=\Omega^\bullet(\bb R^2)\otimes\Omega^{0,\bullet}(\bb C_{z})[\varepsilon_1,\varepsilon_2]\otimes \mf{gl}(N)[1]$. Now turning on the deformation $z\del_{\varepsilon_2}$ yields the complex
\begin{align*}
& \Omega^\bullet(\bb R^2)\otimes \left (\Omega^{0,\bullet}(\bb C_z)\varepsilon_2\stackrel{z\del_{\varepsilon_2}}{\longrightarrow}\Omega^{0,\bullet}(\bb C_z)\right )\otimes \bb C[\varepsilon_1]\otimes \mf{gl}(N)[1]\\
\cong &\  \Omega^\bullet(\bb R^2)[\varepsilon_1]\otimes \mf{gl}(N)[1]
\end{align*} 
Thus, we find 2-dimensional BF theory with gauge group $\GL(N)$.

A similar calculation shows that the bi-fundamental strings stretched between the two stacks of branes yields free fermions as a 1-dimensional defect living on the line, corresponding to the direction 1 in the table. Thus, we find exactly the topological string set-up of \cite{IMZ}, where it is shown that the operators of the coupled system that live on the line generate a quotient of the Yangian of $\mf{gl}(K)$.

We wish to analyze the effect of S-duality on the above setup. Physically, it is known that D5 branes become NS5 branes after S-duality. Therefore, acting on the set-up by S-duality yields a Hanany--Witten brane cartoon whose low-energy dynamics is described by a 3-dimensional $\mc{N}=4$ linear quiver gauge theory. 

Checking the claim of \cite{CostelloSMtalk} then amounts to checking that the local operators of this 3d $\mc{N}=4$ theory, deformed by the closed string field $\varepsilon_1\varepsilon_2\del_{\varepsilon_2}+\varepsilon_1 z\del_z$, give the same quotient of the Yangian of $\mf{gl}(K)$. We intend to return to this question, and a broader analysis of 3-dimensional $\mc{N}=4$ quiver gauge theories in our context, elsewhere.

\subsection{S-duality on 4d Chern--Simons Theory}\label{sub:4dCS}

As another quadratic polynomial, we consider 
\[\bb S(w_1w_2)=w_1 \del_z \del_{w_1}- w_2 \del_z \del_{w_2}\]
so that we know $\del_{\eps_1}\del_{\eps_2}$ is S-dual to $\pi=\eps_1 \del_{\eps_1}\del_z - \eps_2 \del_{\eps_2} \del_z $.

First, to understand the consequence of adding the deformation $\del_{\eps_1}\del_{\eps_2}$ to the holomorphic-topological twist $\Omega^\bullet(\bb R^2)\otimes \Omega^{0,\bullet}(\bb C)[\eps_1,\eps_2]\otimes \mf{gl}(N)[1]$, we note that $(\bb C[\eps_1,\eps_2], \del_{\eps_1}\del_{\eps_2})$ is the Clifford algebra $\mr{Cl}(\bb C^{2})\cong \End(\bb C^{1|1})$, and hence the deformed theory is
\[ \mc E= \Omega^\bullet(\bb R^2)\otimes \Omega^{0,\bullet}(\bb C)\otimes \mf{gl}(N|N)[1] \]
also known as the 4d Chern--Simons theory with gauge group given by the supergroup $\GL(N|N)$.

On the other hand, $\pi$ gives a peculiar noncommutative deformation of the twisted theory. The original category $\perf(\higgs_G(C))$ of boundary conditions of the 4-dimensional theory with $G=\GL(N)$ reduced along $C$ is now deformed to $\perf(\higgs_G(C),\pi)$.

\begin{rmk}
Costello suggested that this deformation can be explicitly constructed in terms of difference modules. In particular, he suggested that when $C=E$ is an elliptic curve, then the deformed category is the category of coherent sheaves on the moduli space $\higgs_\lambda(E)$ of rank $N$ vector bundles $F$ on $E$ together with a homomorphism $F\to T_\lambda^*F$ where $T_\lambda \colon E\to E$ is the translation by $\lambda$. Indeed, when $\lambda=0$, this recovers the moduli space $\higgs_G(E)$ of $G$-Higgs bundles on $E$, explaining the notation. 
\end{rmk}

A general principle tells us that the category of line defects of a theory acts on the category of boundary conditions. In this case, we understand the category of line defects of 4-dimensional Chern--Simons theory on a Calabi--Yau curve $C$, namely, $\bb C$, $\bb C^\times$, or $E$, \cite{CostelloYangian, CWYI, CWYII} as monoidal category of representations of the Yangian, the quantum loop group, the elliptic quantum group for $\GL(N|N)$, respectively.

Hence from duality one can conjecture that the category of line defects of 4d Chern--Simons theory for $\GL(N|N)$ acts on the category of boundary conditions $\perf(\higgs_G(C),\pi)$. It would be interesting to make this conjecture more precise along the line of suggestion of Costello and investigate it further.


\end{document}